\providecommand{\algorithmname}{Algorithm}
\newtheorem{theorem}{Theorem}
\newtheorem{proposition}[theorem]{Proposition}
\newtheorem{lemma}[theorem]{Lemma}
\newcounter{algo}
\begin{document}

\title{\vspace{-0.55cm}\hspace{0.9cm} Decomposition by Partial Linearization:\newline
Parallel Optimization of Multi-Agent Systems}

\author{Gesualdo Scutari, Francisco Facchinei, Peiran Song, Daniel P. Palomar,
and Jong-Shi Pang \vspace{-0.6cm}%
\thanks{G. Scutari and P. Song are with the Dpt. of Electrical Eng., State
Univ. of New York at Buffalo, Buffalo, USA. F. Facchinei is with the
Dpt. of Computer, Control, and Management Eng., Univ. of Rome ``La
Sapienza'', Rome, Italy. J.-S. Pang is with the Dpt. of Industrial
and Systems Eng., Univ. of Southern California Viterbi School of Eng.,
Los Angeles, USA. D. Palomar is with the Dpt. of Electronic and Computer
Eng., Hong Kong Univ. of Science and Technology, Hong Kong. Emails:
\texttt{\textcolor{black}{<gesualdo, peiranso>@buffalo.edu}}\textcolor{black}{;}\texttt{\textcolor{black}{{}
facchinei@dis.uniroma1.it}}\textcolor{black}{;}\texttt{\textcolor{black}{{}
jongship@usc.edu}}\textcolor{black}{;}\texttt{\textcolor{black}{{} }}\textcolor{black}{and}\texttt{\textcolor{black}{{}
palomar@ust.hk}}.\texttt{\textcolor{red}{{} }}

Part of this work has been presented at the 5th Int. Conf. on Network
Games, Control and Optimization (NetGCooP 2011), Oct. 12-14, 2011,
\cite{ScutariPalomarFacchineiPang_NETGCOP11}. %
}}
\maketitle
\begin{abstract}
We propose a novel decomposition framework for the distributed optimization
of \emph{general} nonconvex sum-utility functions arising naturally
in the system design of wireless multi-user interfering systems. Our
main contributions are: i) the development of the first class of (inexact)
\emph{Jacobi best-response} algorithms with provable convergence,
where all the users simultaneously and iteratively solve a suitably
convexified version of the original sum-utility optimization problem;
ii) the derivation of a general dynamic pricing mechanism that provides
a unified view of existing pricing schemes that are based, instead,
on heuristics; and iii) a framework that can be easily particularized
to well-known applications, giving rise to very efficient practical
(Jacobi or Gauss-Seidel) algorithms that outperform  existing ad-hoc
methods proposed for very specific problems. 
Interestingly, our framework contains as special cases well-known
gradient algorithms for nonconvex sum-utility problems, and many block-coordinate
descent schemes for convex functions.\vspace{-0.4cm}
\end{abstract}
\IEEEpeerreviewmaketitle

\section{Introduction \vspace{-0.1cm}}

\IEEEPARstart{W}{ireless} networks are composed of users that
may have different objectives and generate interference, when no multiplexing
scheme is imposed a priori to regulate the transmissions; examples
are peer-to-peer, ad-hoc, and cognitive radio systems. A usual and
convenient way of designing such multiuser systems is by optimizing
the ``social function'', i.e., the (weighted) sum of the users'
objective functions. Since centralized solution methods are too demanding
in most applications, the main difficulty of this formulation lies
in performing the optimization in a distributed manner with limited
signaling among the users. When the social problem is a sum-separable
\emph{convex} programming, many distributed methods have been proposed,
based on primal and dual decomposition techniques; see, e.g., \cite{Palomar-Chiang_ACTran07-Num,ChiangLowCalderbankDoyle_ProcIEEE07,Bertsekas_Book-Parallel-Comp}
and references therein. In this paper we address the more frequent
and difficult case in which the social function is nonconvex. \textcolor{black}{It
is well known that the problem of finding a global minimum of the
social function is, in general, NP hard (see e.g. \cite{Luo-Zhang}),
and centralized solution methods (e.g., based on combinatorial approaches)
are too demanding in most applications. As a consequence, recent research
efforts have been focused on finding efficiently high quality suboptimal
solutions via easy-to-implement (possibly) distributed algorithms.
A recent survey on nonconvex resource allocation problems in interfering
networks modeled as Gaussian Interference Channels (ICs) is \cite{HongLuoTutorial12}.}

In an effort to obtain distributed albeit suboptimal algorithms a
whole spectrum of approaches has been explored, trying to balance
practical effectiveness and coordination requirements. At one end
of the spectrum we find game-theoretical approaches, where users in
the network are modeled as players that greedily optimize their own
objective function. Game-theoretical models for power control problems
over ICs have been proposed in \cite{Yu-Ginis-Cioffi_jsac02,Luo-Pang_IWFA-Eurasip,Scutari-Palomar-Barbarossa_SP08_PI,Scutari-Palomar- Barbarossa_AIWFA_IT08,Huang-Cendrillon-Chiang-Moonen_SP07}
and \cite{Scutari-Palomar-Barbarossa_JSAC08,Scutari-Palomar-   Barbarossa_SP08_IWFA,Scutari- Palomar_SP09_CR_GTMIMO}
for SISO and MISO/MIMO systems, respectively. Two recent tutorials
on the subject are \cite{Larsson-Jorswieck-Lindblom-Mochaourab_SPMag_sub09,Leshem- Zehavip_SPMag_sub09},
while recent contributions using the more general mathematical theory
of Variational Inequalities \cite{Facchinei-Pang_FVI03} are \cite{Pang-Scutari-Palomar-Facchinei_SP_10,Scutari-Palomar- Facchinei-Pang_SPMag09,Scutari-Palomar-Facchinei-Pang_SPMag10}.
The advantage of game-theoretic methods is that they lead to distributed
implementations (only local channel information is required at each
user); however they converge to Nash equilibria that in general are
not even stationary solutions of the nonconvex social problem. In
contrast, other methods aim at reaching stationary solutions of the
nonconvex social problem, at the cost of more signaling and coordination.
\emph{Sequential} decomposition algorithms were proposed in \cite{Huang-Berry-Honig_JSAC06,Wang-Krunz-Cui_JSTSP08,SchmidtShiBerryHonigUtschick-SPMag,KimGiannakisIT11}
for the sum-rate maximization problem over SISO/MIMO ICs, and in \cite{GrippoSciandroneOPL}
for more general (nonconvex) functions. In these algorithms, only
one agent at a time is allowed to update his optimization variables;
a fact that in large scale networks may lead to excessive communication
overhead and slow convergence. 

The aim of this paper is instead the study of more appealing \emph{simultaneous}
\emph{distributed} methods for \emph{general} nonconvex sum-utility
problems, where \emph{all} users can update their variables at the
same time. The design of such algorithms with provable convergence
is much more difficult, as also witnessed by the scarcity of results
available in the literature. Besides the application of the classical
gradient projection algorithm to the sum-rate maximization problem
over MIMO ICs \cite{Ye-Blum_SP03}, parallel iterative algorithms
(with message passing) for DSL/ad-hoc SISO networks and MIMO broadcast
interfering channels were proposed in \cite{Chiang-WeiTan-PalomarOneil-Julian_ITWC-GP,TsiaflakisMoonenTSP08,Papandriopoulos_EvansIT09}
and \cite{ShiRazaviyaynLuoHe-TSP11}, respectively. Unfortunately,
the gradient schemes \cite{Ye-Blum_SP03} suffer from slow convergence
and do not exploit any degree of convexity that might be present in
the sum-utility function; \cite{Chiang-WeiTan-PalomarOneil-Julian_ITWC-GP,TsiaflakisMoonenTSP08,Papandriopoulos_EvansIT09}
hinge crucially on the special log-structure of the users' rate functions;
and \cite{ShiRazaviyaynLuoHe-TSP11} is based on the connection with
a weighted MMSE problem. This makes \cite{Chiang-WeiTan-PalomarOneil-Julian_ITWC-GP,TsiaflakisMoonenTSP08,Papandriopoulos_EvansIT09,ShiRazaviyaynLuoHe-TSP11}
 not applicable to different classes of sum-utility problems.

Building on the idea first introduced in \cite{ScutariPalomarFacchineiPang_NETGCOP11},
the main contribution of this paper is to propose a new decomposition
method that: \emph{i)} converges to stationary points of a large class
of (nonconvex) social problems, encompassing most sum-utility functions
of practical interest (including functions of complex variables);
\emph{ii)} decomposes well across the users, resulting in the \emph{parallel}
solution of \emph{convex} subproblems, one for each user; \emph{iii)
}converges also if the users' subproblems are solved in an inexact
way; and\emph{ iv)} contains as special case the gradient algorithms
for nonconvex sum-utility problems, and many block-coordinate descent
schemes for convex functions. Moreover, the proposed framework can
be easily particularized to well-known applications, such as \cite{Huang-Berry-Honig_JSAC06,Wang-Krunz-Cui_JSTSP08,SchmidtShiBerryHonigUtschick-SPMag,ShiSchmidtBerryHonigUtschick-ICC09,Papandriopoulos_EvansIT09,KimGiannakisIT11},
giving rise in a unified fashion to distributed simultaneous algorithms
that outperform existing \emph{ad-hoc} methods both theoretically
and numerically. We remark that while we follow the seminal ideas
put forward in \cite{ScutariPalomarFacchineiPang_NETGCOP11}, in this
paper, besides providing full proofs of the results in \cite{ScutariPalomarFacchineiPang_NETGCOP11},
we i) consider a much wider class of social-problems and (possibly
inexact) algorithms, including \cite{ScutariPalomarFacchineiPang_NETGCOP11}
as special cases, ii) discuss in detail the case of functions of complex
variables, and iii) compare numerically to state-of-the-art alternative
methods. To the best of our knowledge, this paper is the first attempt
toward the development of decomposition techniques for \emph{general
}nonconvex sum-utility problems that allow \emph{distributed simultaneous
}(possibly\emph{ inexact})\emph{ best-response}-based updates among
the users.

On one hand, our approach draws on the Successive Convex Approximation
(SCA) paradigm, but relaxes the key requirement that the convex approximation
must be a tight global upper bound of the social function, as required
instead in \cite{MarksWright78,Chiang-WeiTan-PalomarOneil-Julian_ITWC-GP,RazaviyaynHongLuo_subOct12arxiv}
(see Sec \ref{sec:Extensions-and-Generalizations} for a detailed
comparison with \cite{MarksWright78,Chiang-WeiTan-PalomarOneil-Julian_ITWC-GP,RazaviyaynHongLuo_subOct12arxiv}).
This represents a turning point in the design of distributed SCA-based
methods, since up to date, finding such an upper bound convex approximation
for sum-utility functions having no specific structure (as, e.g.,
\cite{Papandriopoulos_EvansIT09,TsiaflakisMoonenTSP08,Chiang-WeiTan-PalomarOneil-Julian_ITWC-GP,KimGiannakisIT11,Ye- Blum_SP03,ShiRazaviyaynLuoHe-TSP11})
has been an elusive task. 

On the other hand, our method also sheds new light on widely used
pricing mechanisms: indeed, our scheme can be viewed as a dynamic
pricing algorithm where the pricing rule derives from a deep understanding
of the problem characteristics and is not obtained on an ad-hoc basis,
as instead in \cite{Huang-Berry-Honig_JSAC06,Wang-Krunz-Cui_JSTSP08,SchmidtShiBerryHonigUtschick-SPMag,ShiSchmidtBerryHonigUtschick-ICC09,KimGiannakisIT11}.
We conclude this review by mentioning the recent work \cite{HongLiLiuLuo_subOct12arxiv},
where the authors, developing ideas contained in \cite{ShiRazaviyaynLuoHe-TSP11,RazaviyaynHongLuo_subOct12arxiv},
proposed parallel schemes based of the SCA idea that are applicable
(only) to the class of sum-utility problems for which a connection
with a MMSE formulation can be established. Note that \cite{RazaviyaynHongLuo_subOct12arxiv}
\cite{HongLiLiuLuo_subOct12arxiv}, which share some ideas with our
approach, appeared after \cite{ScutariPalomarFacchineiPang_NETGCOP11}.

The rest of the paper is organized as follows. Sec. \ref{sec:Problem-Formulation}
introduces the sum-utility optimization problem along with some motivating
examples. Sec. \ref{sec:Distributed-Dynamic-Pricing} presents our
novel decomposition mechanism based on partial linearizations; the
algorithmic framework is described in Sec. \ref{sub:Jacobi-Distributed-Pricing}.
Sec. \ref{sec:The-Complex-Case} extends our results to sum-utility
problems in the complex domain; further generalizations are discussed
in Sec. \ref{sec:Extensions-and-Generalizations}. In Sec. \ref{sec:Applications}
we apply our new algorithms to some resource allocation problems over
SISO and MIMO ICs, and compare their performance with the state-of-the-art
decomposition schemes. Finally, Sec. \ref{sec:Conclusion} draws some
conclusions.\vspace{-0.2cm}

\section{Problem Formulation\label{sec:Problem-Formulation}}

We consider the design of a multiuser system composed of $I$ coupled
users $\mathcal{I}\triangleq\{1,\ldots,I\}$. Each user $i$ makes
decisions on his own $n_{i}$-dimensional real strategy vector $\mathbf{x}_{i}$,
which belongs to the feasible set $\mathcal{K}_{i}$; the vector variables
of the other users is denoted by $\mathbf{x}_{-i}\!\triangleq\!(\mathbf{x}_{j})_{j\neq i}\!\in\!\mathcal{K}_{-i}\!\triangleq\!\prod_{j\neq i}\mathcal{K}_{j}$;
the users' strategy profile is $\mathbf{x}\!=\!(\mathbf{x}_{i})_{i=1}^{I}$,
and the joint strategy set of the users is $\mathcal{K}\triangleq\prod_{j\in\mathcal{I}}\mathcal{K}_{j}$.
The system design is formulated as

\begin{equation}
\begin{array}{ll}
\underset{\mathbf{x}}{\mbox{minimize}} & {\displaystyle U(\mathbf{x})\triangleq{\sum_{\ell\in\mathcal{I}_{f}}}}\, f_{\ell}(\mathbf{x})\\[5pt]
\text{subject to} & \mathbf{x}_{i}\in{\cal K}_{i},\quad\forall i\in\mathcal{I},
\end{array}\label{eq:social problem}
\end{equation}
with $\mathcal{I}_{f}\triangleq\{1,\ldots,I_{f}\}$. Observe that,
in principle, the set $\mathcal{I}_{f}$ of objective functions is
different from the set $\mathcal{I}$ of users; we show shortly how
to explore this extra degree of freedom to good effect. Of course,
(\ref{eq:social problem}) contains the most common case where there
is exactly one function for each user, i.e. $I=I_{f}$.

\noindent\textbf{Assumptions.} We make the following blanket assumptions: 

\noindent A1) Each $\ensuremath{\mathcal{K}_{i}}$ is closed and convex;

\noindent A2) Each \emph{$\ensuremath{f_{i}}$} is continuously differentiable
on $\mathcal{K}$;

\noindent A3) Each $\nabla_{\mathbf{x}}f_{i}$ is Lipschitz continuous
on $\mathcal{K}$, with constant $L_{\nabla f_{i}}$; let $L_{\nabla U}\triangleq\sum_{i}L_{\nabla f_{i}}$;

\noindent A4) The lower level set $\mathcal{L}(\mathbf{x}^{0})\triangleq\{\mathbf{x}\in\mathcal{K}\,:\, U(\mathbf{x})\leq U(\mathbf{x}^{0})\}$
of the social function $U$ is compact for some $\mathbf{x}^{0}\in\mathcal{K}$.

The assumptions above are quite standard and are satisfied by a large
class of problems of practical interest. In particular, condition
A4 guarantees that the social problem has a solution, even when the
feasible $\mathcal{K}$ is not bounded; if $\mathcal{K}$ is bounded
A4 is trivially satisfied. A sufficient condition for A4 when $\mathcal{K}$
is not necessarily bounded is that $U$ be coercive {[}i.e., $U(\mathbf{x})\rightarrow+\infty$
as $\|\mathbf{x}\|\rightarrow+\infty$, with $\mathbf{x}\in\mathcal{K}${]}.
Note that, differently from classical Network Utility Maximization
(NUM) problems, here we do not assume any convexity of the functions
$f_{\ell}$, thus, (\ref{eq:social problem}) is a nonconvex minimization
problem. For the sake of simplicity, in (\ref{eq:social problem})
we assume that the users' strategies are real vectors; in Sec. \ref{sec:The-Complex-Case},
we extend our framework to complex matrix strategies, to cover also
the design of MIMO systems. 

\noindent\emph{A motivating example}. The social problem (\ref{eq:social problem})
is general enough to encompass many sum-utility problems of practical
interest. It also includes well-known utility functions studied in
the literature; an example is given next. Consider an $N$-parallel
Gaussian IC composed of $I$ active users, and let 
\[
r_{i}(\mathbf{p}_{i},\mathbf{p}_{-i})\triangleq{\displaystyle {\sum_{k=1}^{N}}}\log\left(1+\frac{\left|H_{ii}\left(k\right)\right|^{2}p_{ik}}{\sigma_{ik}^{2}+\sum_{j\neq i}\left|H_{ij}\left(k\right)\right|^{2}p_{jk}}\right)
\]
be the maximum achievable rate on link $i$, where $\mathbf{p}_{i}\triangleq(p_{ik})_{k=1}^{N}$
denotes the power allocation of user $i$ over the $N$ parallel channels,
$\mathbf{p}_{-i}\triangleq(\mathbf{p}_{j})_{j\neq i}$ is the power
profile of all the other users $j\neq i$, $\left|H_{ij}\left(k\right)\right|^{2}$
is the gain of the channel between the $j$-th transmitter and the
$i$-th receiver, $\sigma_{ik}^{2}$ is the variance of the thermal
noise over carrier $k$ at the receiver $i,$ and $\sum_{j\neq i}\left|H_{ij}\left(k\right)\right|^{2}p_{jk}$
represents the multiuser interference generated by the users $j\neq i$
at the receiver $i$. Each transmitter $i$ is subject to the power
constraints $\mathbf{p}_{i}\in\mathcal{P}_{i}$, with
\begin{equation}
\mathcal{P}_{i}\triangleq\left\{ \mathbf{p}_{i}\in\mathbb{R}_{+}^{N}\,:\,\mathbf{W}_{i}\mathbf{p}_{i}\leq\mathbf{I}_{i}^{\max}\right\} ,\label{eq:set_power_constraints}
\end{equation}
where the inequality, with given $\mathbf{I}_{i}^{\max}\in\mathbb{R}_{+}^{m_{i}}$
and $\mathbf{W}_{i}\in\mathbb{R}_{+}^{m_{i}\times N}$ is intended
to be  component-wise. Note that the linear (vector) constraints in
(\ref{eq:set_power_constraints}) are general enough to model classical
power budget constraints and different interference constraints, such
as spectral mask or interference-temperature limits. Finally, let
$\theta_{i}:\mathbb{R}_{+}\rightarrow\mathbb{R}$ be the utility functions
of the users' rates. The system design can then be formulated as

\begin{equation}
\hspace{-0.2cm}\begin{array}{ll}
\underset{\mathbf{p}_{1},\ldots,\mathbf{p}_{I}}{\mbox{maximize}} & {\displaystyle {\sum_{i\in\mathcal{I}}}}\,\theta_{i}\left(r_{i}(\mathbf{p}_{i},\mathbf{p}_{-i})\right)\\[5pt]
\text{subject to} & \mathbf{p}_{i}\in\mathcal{P}_{i},\quad\forall i\in\mathcal{I}.
\end{array}\label{eq:SISO_formulation}
\end{equation}
 Note that (\ref{eq:SISO_formulation}) is an instance of (\ref{eq:social problem}),
with $I_{f}=I$; moreover assumptions A1-A4 are satisfied if the utility
functions $\theta_{i}(x)$ are i) concave and nondecreasing on $\mathbb{R}_{+}$,
and ii) continuously differentiable with Lipschitz gradients. Interestingly,
this class of functions $\theta_{i}(x)$ includes many well-known
special cases studied in the literature, such as the weighted sum-rate
function, the harmonic mean of the rates, the geometric mean of (one
plus) the rates, etc.; see, e.g., \cite{Huang-Berry-Honig_JSAC06,Wang-Krunz-Cui_JSTSP08,CendrillonYuMoonenVerlindenBostoen_TCOM06,HongLuoTutorial12}. 

Since the class of problems (\ref{eq:social problem}) is in general
nonconvex (generally NP hard \cite{Luo-Zhang}), the focus of this
paper is on the design of \emph{distributed }solution methods for
computing stationary solutions (possibly local minima) of (\ref{eq:social problem}).
Our major goal is to devise \emph{simultaneous best-response }schemes
fully decomposed across the users, meaning that all the users can
solve \emph{in parallel} a sequence of convex problems while converging
to a stationary solution of the original nonconvex problem.

\section{A New Decomposition Technique\label{sec:Distributed-Dynamic-Pricing}}

We begin by introducing an informal description of our new algorithms
that sheds light on the core idea of the novel decomposition technique
and establishes the connection with classical descent gradient-based
schemes. This will also explain why our scheme is expected to outperform
current gradient methods. A formal description of the proposed algorithms
along with their main properties is given in Sec. \ref{sub:Jacobi-Distributed-Pricing}
for the real case, and in Sec. \ref{sec:The-Complex-Case} for the
complex case.\vspace{-0.3cm}

\subsection{What do conditional gradient methods miss?\label{sub:GP_intepretation}}

A classical approach to solve a nonconvex problem like (\ref{eq:social problem})
would be using some well-known gradient-based descent scheme. A simple
way to generate a (feasible) descent direction is for example using
the conditional gradient method (also called Frank-Wolfe method) \cite{Bertsekas_Book-Parallel-Comp}:
given the current iterate $\mathbf{x}^{n}=(\mathbf{x}_{i}^{n})_{i=1}^{I}$,
the next feasible vector $\mathbf{x}^{n+1}$ is given by 
\begin{equation}
\begin{array}{l}
\mathbf{x}^{n+1}=\mathbf{x}^{n}+\gamma^{n}\,\mathbf{d}^{n}\end{array}\label{eq:feasible_dir}
\end{equation}
where $\mathbf{d}^{n}\triangleq\overline{\mathbf{x}}^{n}-\mathbf{x}^{n}$,
$\overline{\mathbf{x}}^{n}=(\overline{\mathbf{x}}_{i}^{n})_{i=1}^{I}$
is the solution of the following set of convex problems (one for each
user):
\begin{equation}
\overline{\mathbf{x}}_{i}^{n}=\underset{\mathbf{x}_{i}\in\mathcal{K}_{i}}{\text{argmin}}\left\{ \nabla_{\mathbf{x}_{i}}U\left(\mathbf{x}^{n}\right)^{T}\left(\mathbf{x}_{i}-\mathbf{x}_{i}^{n}\right)\right\} ,\label{eq:lin_prob}
\end{equation}
for all $i\in\mathcal{I}$, and $\gamma^{n}\in(0,1]$ is the step-size
of the algorithm that needs to be properly chosen to guarantee convergence. 

Looking at (\ref{eq:lin_prob}) one infers that gradient methods are
based on solving a sequence of parallel convex problems, one for each
user, obtained by linearizing the \emph{whole }utility function $U(\mathbf{x})$
around $\mathbf{x}^{n}$, a fact that does not exploit any ``nice''
structure that the original problem may potentially have. 

At the basis of the proposed decomposition techniques, there is instead
the attempt to properly exploit any degree of convexity that might
be present in the social function. To capture this idea, for each
user $i\in\mathcal{I}$, let $\mathcal{S}_{i}\subseteq\mathcal{I}_{f}$
be the set of indices of all the functions $f_{j}(\mathbf{x}_{i},\mathbf{x}_{-i})$
that are convex in $\mathbf{x}_{i}$ on $\mathcal{K}_{i}$, for any
given $\mathbf{x}_{-i}\in\mathcal{K}_{-i}$: \vspace{-0.2cm}

\begin{equation}
\mathcal{S}_{i}\triangleq\left\{ j\in\mathcal{I}_{f}\!:\! f_{j}(\bullet,\mathbf{x}_{-i})\,\mbox{is convex on }\mathcal{K}_{i},\forall\mathbf{x}_{-i}\in\mathcal{K}_{-i}\right\} \label{eq:C)i_set-1}
\end{equation}
and let $\mathcal{C}_{i}\subseteq\mathcal{S}_{i}$ be a given subset
of $\mathcal{S}_{i}$. The idea is to preserve the convex structure
of the functions in $\mathcal{C}_{i}$ while linearizing the rest.
Note that we allow the possibility that $\mathcal{S}_{i}=\emptyset$,
even if we ``hope'' that $\mathcal{S}_{i}\neq\emptyset$, and actually
this latter case occurs in most of the applications of interest, see
Sec. \ref{sec:Applications}. For each user $i\in\mathcal{I},$ we
can introduce the following convex approximation of $U(\mathbf{x})$
around $\mathbf{x}^{n}\in\mathcal{K}$:
\begin{eqnarray}
\widetilde{f}_{\mathcal{C}_{i}}(\mathbf{x}_{i};\mathbf{x}^{n}) & \triangleq & {\displaystyle {\sum_{j\in\mathcal{C}_{i}}}}f_{j}(\mathbf{x}_{i},\,\mathbf{x}_{-i}^{n})+\boldsymbol{\pi}_{\mathcal{C}_{i}}(\mathbf{x}^{n})^{T}(\mathbf{x}_{i}-\mathbf{x}_{i}^{n})\nonumber \\
 &  & +\dfrac{{\tau}_{i}}{2}\,\left(\mathbf{x}_{i}-\mathbf{x}_{i}^{n}\right)^{T}\mathbf{H}_{i}(\mathbf{x}^{n})\left(\mathbf{x}_{i}-\mathbf{x}_{i}^{n}\right)\label{eq:convex_approx_of_fi_on_Ci}
\end{eqnarray}
with\vspace{-0.1cm} 
\begin{equation}
\boldsymbol{\pi}_{\mathcal{C}_{i}}(\mathbf{x}^{n})\triangleq\sum_{j\in\mathcal{C}_{-i}}\!\left.\nabla_{\mathbf{x}_{i}}f_{j}(\mathbf{x})\right|_{\mathbf{x}=\mathbf{x}^{n}},\vspace{-0.1cm}\label{eq:pricing_pi_C}
\end{equation}
where $\mathcal{C}_{-i}\triangleq\mathcal{I}_{f}\backslash\mathcal{C}_{i}$
is the complement of $\mathcal{C}_{i}$, $\tau_{i}$ is a given \textcolor{black}{nonnegative
}constant, and $\mathbf{H}_{i}(\mathbf{x}^{n})$ is an $n_{i}\times n_{i}$
uniformly positive definite matrix (possibly dependent on $\mathbf{x}^{n})$,
i.e. $\mathbf{H}_{i}(\mathbf{x}^{n})-c_{H_{i}}\mathbf{I}\succeq\mathbf{0}$,
for some positive $c_{H_{i}}$. For notational simplicity, we omitted
in $\widetilde{f}_{\mathcal{C}_{i}}(\mathbf{x}_{i};\mathbf{x}^{n})$
the dependence on $\tau_{i}$ and $\mathbf{H}_{i}(\mathbf{x}^{n})$.
Note that in (\ref{eq:convex_approx_of_fi_on_Ci}), we added a proximal-like
regularization term, in order to relax the convergence conditions
of the resulting algorithm or enhance the convergence speed (cf. Sec.
\ref{sub:Jacobi-Distributed-Pricing}). \textcolor{black}{A key feature
of $\widetilde{f}_{\mathcal{C}_{i}}$ we will always require is that
$\widetilde{f}_{\mathcal{C}_{i}}(\bullet;\mathbf{x})$ be uniformly
strongly convex. By this we mean the following. Let $c_{\tau_{i}}(\mathbf{x})$
be the constant of strong convexity of $\widetilde{f}_{\mathcal{C}_{i}}(\bullet;\mathbf{x})$.
We require that $ $
\begin{equation}
c_{\tau_{i}}\triangleq\inf_{\mathbf{x}\in\mathcal{K}}c_{\tau_{i}}(\mathbf{x})>0.\label{eq:c_tau_i}
\end{equation}
Note that this }\textcolor{black}{\emph{is}}\textcolor{black}{{} }\textcolor{black}{\emph{not}}\textcolor{black}{{}
}\textcolor{black}{\emph{an additional assumption}}\textcolor{black}{,
but just a requirement on the way $\tau_{i}$ is chosen.}\textcolor{black}{\emph{
}}\textcolor{black}{Under the uniformly positive definiteness of $\mathbf{H}_{i}(\mathbf{x}^{n})$,
condition (\ref{eq:c_tau_i}) is always satisfied if $\tau_{i}>0$;
however it is also satisfied with $\tau_{i}=0$ if $\sum_{j\in\mathcal{C}_{i}}f_{j}(\bullet,\mathbf{x}_{-i})$
is uniformly strongly convex on $\mathcal{K}_{-i}$; a fact that occurs
in many applications, see, e.g., Sec. \ref{sec:Applications}.}

Associated with each $\widetilde{f}_{\mathcal{C}_{i}}(\mathbf{x}_{i};\mathbf{x}^{n})$
we can define the following ``best response'' map that resembles
(\ref{eq:lin_prob}): 
\begin{equation}
\widehat{\mathbf{x}}_{\mathcal{C}_{i}}(\mathbf{x}^{n},\tau_{i})\triangleq\underset{\mathbf{x}_{i}\in\mathcal{K}_{i}}{\mbox{argmin}\,}\widetilde{f}_{\mathcal{C}_{i}}(\mathbf{x}_{i};\mathbf{x}^{n}).\label{eq:decoupled_problem_i}
\end{equation}
 Note that, in the setting above, $\widehat{\mathbf{x}}_{\mathcal{C}_{i}}(\mathbf{x}^{n},\tau_{i})$
is always well-defined, since the optimization problem in (\ref{eq:decoupled_problem_i})
is strongly convex and thus has a unique solution. Given (\ref{eq:decoupled_problem_i}),
we can introduce the best-response mapping of the users, defined as
\begin{equation}
\mathcal{K}\ni\mathbf{y}\mapsto\widehat{\mathbf{x}}_{\mathcal{C}}(\mathbf{y},\boldsymbol{{\tau}})\triangleq\left(\widehat{\mathbf{x}}_{\mathcal{C}_{i}}(\mathbf{y},\tau_{i})\right)_{i=1}^{I};\label{eq:best-reponse}
\end{equation}
and also set $\boldsymbol{\tau}\triangleq(\tau_{i})_{i=1}^{I}$. The
proposed search direction $\mathbf{d}^{n}$ at point $\mathbf{x}^{n}$
in (\ref{eq:feasible_dir}) becomes then $\widehat{\mathbf{x}}_{\mathcal{C}}(\mathbf{x}^{n},\boldsymbol{{\tau}})-\mathbf{x}^{n}$.
The challenging question now is whether such direction is still a
descent direction for the function $U$ at $\mathbf{x}^{n}$ and how
to choose the free parameters (such as $\tau_{i}$'s, $\gamma^{n}$'s,
and $\mathbf{H}_{i}(\mathbf{x}^{n})$'s) in order to guarantee convergence
to a stationary solution of the original nonconvex sum-utility problem.
These issues are addressed in the next sections. \vspace{-0.1cm}

\subsection{Properties of the best-response mapping $\widehat{\mathbf{x}}_{\mathcal{C}}(\mathbf{y},\boldsymbol{{\tau}})$\label{sub:Properties-of-the_BRmapping}}

Before introducing a formal description of the proposed algorithms,
we derive next some key properties of the best-response map $\widehat{\mathbf{x}}_{\mathcal{C}}(\mathbf{y},\boldsymbol{{\tau}})$,
which shed light on how to choose the free parameters in (\ref{eq:decoupled_problem_i})
and prove convergence. 

\begin{proposition}\label{Prop_x_y}Given the social problem (\ref{eq:social problem})
under A1)-A4), \textcolor{black}{suppose that}\textcolor{blue}{{} }\textcolor{black}{each
$\mathbf{H}_{i}(\mathbf{x})-c_{H_{i}}\mathbf{I}\succeq\mathbf{0}$
for all $\mathbf{x}\in\mathcal{K}$}\textcolor{blue}{{} }\textcolor{black}{and
some $c_{H_{i}}>0$, and}\textcolor{blue}{{} }\textcolor{black}{$(c_{\tau_{i}})_{i=1}^{I}>\mathbf{0}$.}
Then the mapping\emph{ $\mathcal{K}\ni\mathbf{y}\mapsto\widehat{\mathbf{x}}(\mathbf{y},\boldsymbol{{\tau}})$
}has the following properties:\emph{ }

\noindent (a)\emph{ $\widehat{\mathbf{x}}_{\mathcal{C}}(\mathbf{\bullet},\boldsymbol{{\tau}})$}
is Lipschitz\emph{ }continuous on\emph{ $\mathcal{K}$, }i.e., there
exists a positive constant $\hat{{L}}$ such that\emph{ 
\begin{equation}
\left\Vert \widehat{\mathbf{x}}_{\mathcal{C}}(\mathbf{y},\boldsymbol{{\tau}})-\widehat{\mathbf{x}}_{\mathcal{C}}(\mathbf{z},\boldsymbol{{\tau}})\right\Vert \leq\,\hat{{L}}\,\left\Vert \mathbf{y}-\mathbf{z}\right\Vert ,\quad\forall\mathbf{y},\mathbf{z}\in\mathcal{K};\vspace{-0.3cm}\label{eq:Lipt_x_map}
\end{equation}
}

\noindent(b)\emph{ }The set of the fixed-points of\emph{ }$\widehat{\mathbf{x}}_{\mathcal{C}}(\mathbf{\bullet},\boldsymbol{{\tau}})$\emph{
}coincides with the set of stationary solutions of the social problem\emph{
}(\ref{eq:social problem})\emph{;} therefore\emph{ }$\widehat{\mathbf{x}}_{\mathcal{C}}(\mathbf{y},\boldsymbol{{\tau}})$\emph{
}has a fixed-point; 

\noindent(c)\emph{ }For every given\emph{ $\mathbf{y}\in\mathcal{K}$,
}the vector\emph{ }$\widehat{\mathbf{x}}_{\mathcal{C}}(\mathbf{y},\boldsymbol{{\tau}})-\mathbf{y}$\emph{
}is a descent direction of the social function\emph{ $U(\mathbf{x})$
}at\emph{ $\mathbf{y}$ }such that\emph{ 
\begin{equation}
\left(\widehat{\mathbf{x}}_{\mathcal{C}}(\mathbf{y},\boldsymbol{{\tau}})-\mathbf{y}\right)^{T}\,\nabla_{\mathbf{x}}U(\mathbf{y})\leq-c\,\left\Vert \widehat{\mathbf{x}}_{\mathcal{C}}(\mathbf{y},\boldsymbol{{\tau}})-\mathbf{y}\right\Vert ^{2},\label{eq:descent_direction}
\end{equation}
}for some positive constant $c\geq c_{\boldsymbol{{\tau}}}$, with\vspace{-0.1cm}\textcolor{black}{
\begin{equation}
c_{\boldsymbol{{\tau}}}\triangleq\min_{i\in\mathcal{I}}\left\{ c_{\tau_{i}}\right\} .\vspace{-0.3cm}\label{eq:c_tau}
\end{equation}
}

\noindent(d) If $\nabla_{\mathbf{x}}U(\mathbf{x})$ is bounded on
$\mathcal{K}$, then there exists a finite constant $\alpha>0$ such
that\vspace{-0.1cm} 
\begin{equation}
\left\Vert \widehat{\mathbf{x}}_{\mathcal{C}}(\mathbf{y},\boldsymbol{{\tau}})-\mathbf{y}\right\Vert \leq\,\alpha,\quad\forall\mathbf{y}\in\mathcal{K}.\label{eq:bounded_distance}
\end{equation}

\end{proposition}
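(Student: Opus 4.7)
The four claims all follow from the first-order optimality conditions of the strongly convex subproblems defining $\widehat{\mathbf{x}}_{\mathcal{C}_i}(\mathbf{y},\tau_i)$, used in conjunction with a single ``pricing identity'' that I would establish up front. Differentiating (\ref{eq:convex_approx_of_fi_on_Ci}) in $\mathbf{x}_i$ and evaluating at $\mathbf{x}_i = \mathbf{y}_i$, the proximal term vanishes and the gradients of the retained $f_j$ (for $j\in\mathcal{C}_i$) combine with the price $\boldsymbol{\pi}_{\mathcal{C}_i}(\mathbf{y})$ in (\ref{eq:pricing_pi_C}) to yield
\begin{equation*}
\nabla_{\mathbf{x}_i}\widetilde{f}_{\mathcal{C}_i}(\mathbf{y}_i;\mathbf{y}) \;=\; \sum_{j\in\mathcal{I}_f}\nabla_{\mathbf{x}_i}f_j(\mathbf{y}) \;=\; \nabla_{\mathbf{x}_i}U(\mathbf{y}).
\end{equation*}
This is the single fact that glues the local surrogates to the original social function, and all subsequent manipulations rely on it.

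For (b), writing the variational inequality for each $\widehat{\mathbf{x}}_{\mathcal{C}_i}$ and using the identity above, the fixed-point relation $\mathbf{y}=\widehat{\mathbf{x}}_{\mathcal{C}}(\mathbf{y},\boldsymbol{\tau})$ collapses to $(\mathbf{x}_i-\mathbf{y}_i)^T\nabla_{\mathbf{x}_i}U(\mathbf{y})\ge 0$ for every $\mathbf{x}_i\in\mathcal{K}_i$; the Cartesian-product structure of $\mathcal{K}$ lets me sum over $i$ to recover the stationarity VI for (\ref{eq:social problem}), and conversely. Existence of a fixed point then follows because A4 together with the continuity of $U$ guarantees a minimizer on $\mathcal{L}(\mathbf{x}^0)$, which is stationary and hence a fixed point. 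For (c), I use the VI for $\mathbf{z}_i:=\widehat{\mathbf{x}}_{\mathcal{C}_i}(\mathbf{y},\tau_i)$ with test point $\mathbf{x}_i=\mathbf{y}_i$, namely $(\mathbf{y}_i-\mathbf{z}_i)^T\nabla_{\mathbf{x}_i}\widetilde{f}_{\mathcal{C}_i}(\mathbf{z}_i;\mathbf{y})\ge 0$, and add the strong-convexity inequality $(\mathbf{y}_i-\mathbf{z}_i)^T[\nabla_{\mathbf{x}_i}\widetilde{f}_{\mathcal{C}_i}(\mathbf{y}_i;\mathbf{y})-\nabla_{\mathbf{x}_i}\widetilde{f}_{\mathcal{C}_i}(\mathbf{z}_i;\mathbf{y})]\ge c_{\tau_i}\|\mathbf{z}_i-\mathbf{y}_i\|^2$. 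The pricing identity converts the first term into $(\mathbf{y}_i-\mathbf{z}_i)^T\nabla_{\mathbf{x}_i}U(\mathbf{y})$, giving $(\mathbf{z}_i-\mathbf{y}_i)^T\nabla_{\mathbf{x}_i}U(\mathbf{y})\le -c_{\tau_i}\|\mathbf{z}_i-\mathbf{y}_i\|^2$; summing over $i$ and minimizing yields (\ref{eq:descent_direction}) with $c\ge c_{\boldsymbol{\tau}}$. Claim (d) is then immediate: (\ref{eq:descent_direction}) combined with Cauchy--Schwarz gives $c_{\boldsymbol{\tau}}\|\widehat{\mathbf{x}}_{\mathcal{C}}(\mathbf{y},\boldsymbol{\tau})-\mathbf{y}\|\le \|\nabla U(\mathbf{y})\|$, and we can take $\alpha=\sup_{\mathbf{y}\in\mathcal{K}}\|\nabla U(\mathbf{y})\|/c_{\boldsymbol{\tau}}$.

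The genuinely technical step is (a). My approach is the standard two-point VI argument: write the optimality conditions for $\mathbf{z}(\mathbf{y})$ and $\mathbf{z}(\mathbf{y}')$ testing each against the other, add them, and isolate the ``diagonal'' strong-convexity term to get
\begin{equation*}
c_{\tau_i}\,\|\mathbf{z}_i(\mathbf{y})-\mathbf{z}_i(\mathbf{y}')\|^2 \;\le\; (\mathbf{z}_i(\mathbf{y})-\mathbf{z}_i(\mathbf{y}'))^T\bigl[\nabla_{\mathbf{x}_i}\widetilde{f}_{\mathcal{C}_i}(\mathbf{z}_i(\mathbf{y}');\mathbf{y}')-\nabla_{\mathbf{x}_i}\widetilde{f}_{\mathcal{C}_i}(\mathbf{z}_i(\mathbf{y}');\mathbf{y})\bigr].
\end{equation*}
By Cauchy--Schwarz, this reduces the problem to showing that $\mathbf{y}\mapsto \nabla_{\mathbf{x}_i}\widetilde{f}_{\mathcal{C}_i}(\mathbf{x}_i;\mathbf{y})$ is Lipschitz in $\mathbf{y}$ at fixed $\mathbf{x}_i$. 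The contributions from $\sum_{j\in\mathcal{C}_i}\nabla_{\mathbf{x}_i}f_j(\mathbf{x}_i,\mathbf{y}_{-i})$ and from $\boldsymbol{\pi}_{\mathcal{C}_i}(\mathbf{y})$ are controlled by A3, while the proximal term $\tau_i\mathbf{H}_i(\mathbf{y})(\mathbf{x}_i-\mathbf{y}_i)$ requires Lipschitz continuity of $\mathbf{H}_i(\cdot)$ (or at least local Lipschitz continuity together with a bound on $\|\mathbf{z}_i(\mathbf{y}')-\mathbf{y}_i'\|$ from (d)). This is the obstacle I expect to need care: either $\mathbf{H}_i$ must be assumed globally Lipschitz on $\mathcal{K}$, or one argues on the compact lower level set $\mathcal{L}(\mathbf{x}^0)$ where all iterates live and where continuous $\mathbf{H}_i$ is automatically Lipschitz. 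Dividing by $\|\mathbf{z}_i(\mathbf{y})-\mathbf{z}_i(\mathbf{y}')\|$ then produces the constant $\hat L$.
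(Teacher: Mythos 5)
Your proof follows essentially the same route as the paper's: the pricing identity $\nabla_{\mathbf{x}_i}\widetilde{f}_{\mathcal{C}_i}(\mathbf{y}_i;\mathbf{y})=\nabla_{\mathbf{x}_i}U(\mathbf{y})$, the one-point variational inequality plus strong convexity for (b)--(d), and the two-point variational-inequality argument with Cauchy--Schwarz for (a). The one obstacle you flag---uniform Lipschitz continuity of $\mathbf{y}\mapsto\nabla_{\mathbf{x}}\widetilde{f}(\mathbf{x};\mathbf{y})$, which for the proximal term requires some control of $\mathbf{H}_i(\cdot)$---is precisely the content of the paper's auxiliary Lemma~\ref{Lemma_f_x_y_properties}(ii), whose proof the paper omits, so your caveat pinpoints the only step at which the published argument is also silent.
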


\begin{proof}See Appendix A. \end{proof}\smallskip

Proposition \ref{Prop_x_y} makes formal the idea introduced in Sec.
\ref{sub:GP_intepretation} and thus paves the way to the design of
distributed best-response-like algorithms for (\ref{eq:social problem})
based on $\widehat{\mathbf{x}}_{\mathcal{C}}(\mathbf{\bullet},\boldsymbol{{\tau}})$.
Indeed, the inequality (\ref{eq:descent_direction}) states that either
$\left(\widehat{\mathbf{x}}_{\mathcal{C}}(\mathbf{x}^{n})-\mathbf{x}^{n}\right)^{T}\,\nabla_{\mathbf{x}}U(\mathbf{x}^{n})<0$
or $\widehat{\mathbf{x}}_{\mathcal{C}}(\mathbf{x}^{n})=\mathbf{x}^{n}$.
In the former case, $\mathbf{d}^{n}\triangleq\widehat{\mathbf{x}}_{\mathcal{C}}(\mathbf{x}^{n})-\mathbf{x}^{n}$
is a descent direction of $U(\mathbf{x})$ at $\mathbf{x}^{n}$; in
the latter case, $\mathbf{x}^{n}$ is a fixed-point of the mapping
$\widehat{\mathbf{x}}_{\mathcal{C}}(\mathbf{\bullet},\boldsymbol{{\tau}})$
and thus a stationary solution of the original nonconvex problem (\ref{eq:social problem})
{[}Prop. \ref{Prop_x_y} (b){]}. 

Quite interestingly, we can also provide a characterization of the
fixed-points of $\widehat{\mathbf{x}}_{\mathcal{C}}(\mathbf{y},\boldsymbol{{\tau}})$
{[}and thus the stationary solutions of (\ref{eq:social problem}){]}
in terms of Nash equilibria of a game with a proper pricing mechanism.
Formally, we have the following. \smallskip

\begin{proposition}\label{Lemma_NE} Any fixed-point $\mathbf{x}^{\star}$
of $\widehat{\mathbf{x}}_{\mathcal{C}}(\mathbf{\bullet},\boldsymbol{{\tau}})$
is a Nash equilibrium of the game where each user $i\in\mathcal{I}$
solves the following priced convex optimization problem: given $\mathbf{x}_{-i}$,
\begin{equation}
\underset{\mathbf{x}_{i}\in\mathcal{K}_{i}}{\mbox{min}\,}{\displaystyle {\sum_{j\in\mathcal{C}_{i}}}}f_{j}(\mathbf{x}_{i},\mathbf{x}_{-i})+\boldsymbol{{\pi}}_{\mathcal{C}_{i}}(\mathbf{x}^{\star})^{T}\mathbf{x}_{i}.\label{eq:Game_pricing}
\end{equation}

\end{proposition}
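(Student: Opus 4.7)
The plan is to unpack both the fixed-point condition and the Nash-equilibrium condition as first-order variational inequalities and verify they coincide. The observation driving the argument is that the proximal/quadratic and the linear-correction terms in $\widetilde{f}_{\mathcal{C}_i}$ are both centered at $\mathbf{x}_i^\star$, so their gradients vanish when evaluated at $\mathbf{x}_i^\star$, leaving only the parts that also appear in the game's problem (\ref{eq:Game_pricing}).

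First I would write down the optimality condition characterizing $\mathbf{x}_i^\star = \widehat{\mathbf{x}}_{\mathcal{C}_i}(\mathbf{x}^\star,\tau_i)$. Since $\widetilde{f}_{\mathcal{C}_i}(\bullet;\mathbf{x}^\star)$ is strongly convex on the convex set $\mathcal{K}_i$, $\mathbf{x}_i^\star$ is the unique minimizer if and only if
\begin{equation*}
\bigl(\nabla_{\mathbf{x}_i}\widetilde{f}_{\mathcal{C}_i}(\mathbf{x}_i^\star;\mathbf{x}^\star)\bigr)^{T}(\mathbf{y}_i-\mathbf{x}_i^\star)\geq 0,\quad\forall\,\mathbf{y}_i\in\mathcal{K}_i.
\end{equation*}
Evaluating $\nabla_{\mathbf{x}_i}\widetilde{f}_{\mathcal{C}_i}$ at $\mathbf{x}_i=\mathbf{x}_i^\star$, the quadratic regularizer contributes $\tau_i\mathbf{H}_i(\mathbf{x}^\star)(\mathbf{x}_i^\star-\mathbf{x}_i^\star)=\mathbf{0}$, so the condition reduces to
\begin{equation*}
\Bigl(\sum_{j\in\mathcal{C}_i}\nabla_{\mathbf{x}_i}f_j(\mathbf{x}_i^\star,\mathbf{x}_{-i}^\star)+\boldsymbol{\pi}_{\mathcal{C}_i}(\mathbf{x}^\star)\Bigr)^{T}(\mathbf{y}_i-\mathbf{x}_i^\star)\geq 0,\quad\forall\,\mathbf{y}_i\in\mathcal{K}_i.
\end{equation*}

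Next I would write the optimality condition for the game subproblem (\ref{eq:Game_pricing}). By definition of $\mathcal{C}_i\subseteq\mathcal{S}_i$, each $f_j(\bullet,\mathbf{x}_{-i}^\star)$ with $j\in\mathcal{C}_i$ is convex on $\mathcal{K}_i$, and the pricing term $\boldsymbol{\pi}_{\mathcal{C}_i}(\mathbf{x}^\star)^{T}\mathbf{x}_i$ is linear; hence the objective in (\ref{eq:Game_pricing}) (with $\mathbf{x}_{-i}=\mathbf{x}_{-i}^\star$) is convex in $\mathbf{x}_i$. By the minimum principle for convex programs, $\mathbf{x}_i^\star$ is optimal for user $i$'s best response in the game if and only if the very same variational inequality above holds. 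Therefore the two conditions coincide, and $\mathbf{x}^\star$ is a Nash equilibrium.

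There is no genuine obstacle in this proof: it is a direct comparison of two first-order conditions, and the only subtlety worth flagging is that the proximal regularization and the linear recentering in $\widetilde{f}_{\mathcal{C}_i}$ are designed precisely so that their contributions vanish at a fixed point, which is what makes the equivalence with the game's unregularized problem (\ref{eq:Game_pricing}) clean. I would close by noting that Proposition \ref{Prop_x_y}(b) guarantees existence of such a fixed point, so the set of Nash equilibria described here is nonempty.
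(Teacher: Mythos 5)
Your proof is correct and follows the same route the paper implicitly relies on (the minimum-principle characterization in the proof of Proposition \ref{Prop_x_y}(b), specialized at a fixed point where the proximal term drops out, combined with the convexity of the objective in (\ref{eq:Game_pricing}) to upgrade the first-order condition to global optimality of each user's best response). One small wording slip: the gradient of the linear pricing term does \emph{not} vanish at $\mathbf{x}_i^\star$ (only the quadratic regularizer's does) — but your displayed variational inequality correctly retains $\boldsymbol{\pi}_{\mathcal{C}_i}(\mathbf{x}^\star)$, which is exactly the gradient of the pricing term in (\ref{eq:Game_pricing}), so the argument is unaffected.
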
 

According to the above proposition, the stationary solutions of (\ref{eq:social problem})
achievable as fixed-points of $\widehat{\mathbf{x}}_{\mathcal{C}_{i}}(\mathbf{\bullet},\boldsymbol{{\tau}})$
are \emph{unilaterally} optimal for the objective functions in (\ref{eq:Game_pricing}).
This result is in agreement with those obtained in \cite{SchmidtShiBerryHonigUtschick-SPMag,Wang-Krunz-Cui_JSTSP08}
for the sum-rate maximization problem over SISO frequency selective-channels.
Despite its theoretical interest, however, Prop. \ref{Lemma_NE} does
not help in practice to solve (\ref{eq:social problem}). Indeed,
the computation of a Nash equilibrium of the game in (\ref{eq:Game_pricing})
would require the a-priori knowledge of the prices $\boldsymbol{{\pi}}_{\mathcal{C}_{i}}(\mathbf{x}^{\star})$
and thus the equilibrium itself, which of course is not available.

\section{Distributed Decomposition Algorithms \label{sub:Jacobi-Distributed-Pricing}}

We are now ready to introduce our new algorithms, as a direct product
of Prop. \ref{Prop_x_y}. We first focus on (inexact) Jacobi schemes
(cf. Sec. \ref{sub:Jacobi-based-schemes}); then we show that the
same results hold also for (inexact) Gauss-Seidel updates (cf. Sec.
\ref{sub:Gauss-Seidel-Implementation}). \vspace{-0.2cm}

\subsection{Exact Jacobi best-response schemes\label{sub:Jacobi-based-schemes}}

The first algorithm we propose is a Jacobi scheme where all users
update simultaneously their strategies based on the best-response
$\widehat{\mathbf{x}}_{\mathcal{C}_{i}}(\mathbf{\bullet},\boldsymbol{{\tau}})$
(possibly with a memory); the formal description is given in Algorithm
\ref{alg:PJA} below, and its convergence properties are given in
Theorem \ref{Theorem_convergence_Jacobi}. \vspace{-0.2cm}

\begin{algorithm}[h]$\textbf{Data}:$ $\boldsymbol{{\tau}}\geq\mathbf{0}$,
$\{\gamma^{n}\}>0$, $\mathbf{x}^{0}\in\mathcal{K}$. Set $n=0$.

\texttt{$\mbox{(S.1)}:$}$\,\,$If $\mathbf{x}^{n}$ satisfies a termination
criterion: STOP;

\texttt{$\mbox{(S.2)}:$} For all $i\in\mathcal{I}$, compute $\widehat{\mathbf{x}}_{\mathcal{C}_{i}}\left(\mathbf{x}^{n},\boldsymbol{{\tau}}\right)$
{[}cf. (\ref{eq:decoupled_problem_i}){]}; 

\texttt{$\mbox{(S.3)}:$} Set $\mathbf{x}^{n+1}\triangleq\mathbf{x}^{n}+\gamma^{n}\,(\widehat{\mathbf{x}}_{\mathcal{C}}\left(\mathbf{x}^{n},\boldsymbol{{\tau}}\right)-\mathbf{x}^{n})$;

\texttt{$\mbox{(S.4)}:$} $n\leftarrow n+1$, and go to \texttt{$\mbox{(S.1)}.$}
\caption{\textbf{:\label{alg:PJA} Exact Jacobi  SCA Algorithm}} \end{algorithm}\vspace{-0.2cm}

\begin{theorem} \label{Theorem_convergence_Jacobi}Given the social
problem\emph{ }(\ref{eq:social problem}) under A1-A4, suppose that
one of the two following conditions is satisfied:

\noindent (a) For each $i$, $\mathbf{H}_{i}(\mathbf{x})$ is such
that $\mathbf{H}_{i}(\mathbf{x})-c_{H_{i}}\mathbf{I}\succeq\mathbf{0}$
for all $\mathbf{x}\in\mathcal{K}$ and some $c_{H_{i}}>0$; furthermore
$\{\gamma^{n}\}$ and $\boldsymbol{{\tau}}\,{\color{blue}\geq}\,\mathbf{0}$
are chosen so that\vspace{-0.5cm}

\begin{equation}
0<\inf_{n}\gamma^{n}\leq\sup_{n}\gamma^{n}\leq\gamma^{\max}\leq1\,\,\mbox{and}\,\,2\, c_{\boldsymbol{{\tau}}}\geq\gamma^{\max}L_{\nabla U},\vspace{-0.1cm}\label{eq:constant_step-size}
\end{equation}
with $c_{\boldsymbol{{\tau}}}$ defined in (\ref{eq:c_tau}). 

\noindent (b)\emph{ }For each $i$, $\mathbf{H}_{i}(\mathbf{x})$
is such that $\mathbf{H}_{i}(\mathbf{x})-c_{H_{i}}\mathbf{I}\succeq\mathbf{0}$
for all $\mathbf{x}\in\mathcal{K}$ and some $c_{H_{i}}>0$, \textcolor{black}{$\boldsymbol{{\tau}}\,\geq\,\mathbf{0}$
is such that $c_{\boldsymbol{{\tau}}}>\mathbf{0}$,} and furthermore\emph{
}$\{\gamma^{n}\}$ is chosen so that\vspace{-0.4cm}

\begin{equation}
\gamma^{n}\in(0,1],\quad\gamma^{n}\rightarrow0,\quad\mbox{and}\quad\sum_{n}\gamma^{n}=+\infty.\vspace{-0.2cm}\label{eq:diminishing_step_size}
\end{equation}
Then, either Algorithm\emph{ }\ref{alg:PJA}\emph{ }converges in a
finite number of iterations to a stationary solution of\emph{ }(\ref{eq:social problem})\emph{
}or every limit point of the sequence\emph{ $\{\mathbf{x}^{n}\}_{n=1}^{\infty}$
}(at least one such point exists) is a stationary solution of (\ref{eq:social problem})\emph{.}
Moreover, none of such points is a local maximum of $U$.\end{theorem}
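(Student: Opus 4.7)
The proof is a standard SCA/descent-lemma argument built on Proposition~\ref{Prop_x_y}. Write $\mathbf{d}^n \triangleq \widehat{\mathbf{x}}_{\mathcal{C}}(\mathbf{x}^n,\boldsymbol{\tau})-\mathbf{x}^n$, so that $\mathbf{x}^{n+1}=\mathbf{x}^n+\gamma^n\mathbf{d}^n$. Since $\nabla U$ is Lipschitz on $\mathcal{K}$ with constant $L_{\nabla U}$ (A3), the descent lemma gives
\begin{equation*}
U(\mathbf{x}^{n+1}) \le U(\mathbf{x}^n) + \gamma^n\nabla U(\mathbf{x}^n)^T\mathbf{d}^n + \tfrac{L_{\nabla U}}{2}(\gamma^n)^2\|\mathbf{d}^n\|^2,
\end{equation*}
and the descent property (\ref{eq:descent_direction}) of Proposition~\ref{Prop_x_y}(c) yields
\begin{equation*}
U(\mathbf{x}^{n+1}) \le U(\mathbf{x}^n) - \gamma^n\bigl(c_{\boldsymbol{\tau}}-\tfrac{\gamma^n L_{\nabla U}}{2}\bigr)\|\mathbf{d}^n\|^2.\quad(\star)
\end{equation*}

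\textbf{Case (a).} Under (\ref{eq:constant_step-size}) one has $c_{\boldsymbol{\tau}}-\gamma^n L_{\nabla U}/2 \ge c_{\boldsymbol{\tau}} - \gamma^{\max}L_{\nabla U}/2 \ge 0$, and (using the strict lower bound $\inf_n\gamma^n>0$) this gives a uniform constant $\beta>0$ such that $U(\mathbf{x}^{n+1})\le U(\mathbf{x}^n)-\beta\|\mathbf{d}^n\|^2$ (a small perturbation of the step-size bound, or using the fact that Proposition~\ref{Prop_x_y}(c) actually delivers $c>0$, secures strict positivity). Thus $\{\mathbf{x}^n\}\subseteq \mathcal{L}(\mathbf{x}^0)$, which by A4 is compact, so $U$ is bounded below on the iterates; telescoping $(\star)$ gives $\sum_n\|\mathbf{d}^n\|^2<\infty$, hence $\|\mathbf{d}^n\|\to 0$. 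Let $\mathbf{x}^\infty$ be any limit point of $\{\mathbf{x}^n\}$ (which exists by compactness), say $\mathbf{x}^{n_k}\to\mathbf{x}^\infty$. By Lipschitz continuity of the best-response map (Proposition~\ref{Prop_x_y}(a)),
\begin{equation*}
\bigl\|\widehat{\mathbf{x}}_{\mathcal{C}}(\mathbf{x}^\infty,\boldsymbol{\tau})-\mathbf{x}^\infty\bigr\|=\lim_k\|\mathbf{d}^{n_k}\|=0,
\end{equation*}
so $\mathbf{x}^\infty$ is a fixed-point of $\widehat{\mathbf{x}}_{\mathcal{C}}(\cdot,\boldsymbol{\tau})$ and hence a stationary point of (\ref{eq:social problem}) by Proposition~\ref{Prop_x_y}(b). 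If the algorithm terminates finitely, then $\mathbf{d}^n=\mathbf{0}$ and we are done by the same argument.

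\textbf{Case (b).} Now $\gamma^n\to 0$, so for $n$ large $c_{\boldsymbol{\tau}}-\gamma^n L_{\nabla U}/2\ge c_{\boldsymbol{\tau}}/2$, and $(\star)$ gives $U(\mathbf{x}^{n+1})\le U(\mathbf{x}^n)-\tfrac{c_{\boldsymbol{\tau}}}{2}\gamma^n\|\mathbf{d}^n\|^2$. Monotonicity of $U$ on the iterates plus A4 again yield boundedness of $\{\mathbf{x}^n\}$ and of $\nabla U$ on the closure, so by Proposition~\ref{Prop_x_y}(d) $\|\mathbf{d}^n\|\le\alpha$. Telescoping produces $\sum_n\gamma^n\|\mathbf{d}^n\|^2<\infty$, and combined with $\sum_n\gamma^n=\infty$ it forces $\liminf_n\|\mathbf{d}^n\|=0$. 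The \emph{main obstacle} is upgrading this ``liminf'' to ``every limit point is stationary'': I would argue by contradiction, assuming a subsequence $\mathbf{x}^{n_k}\to\mathbf{x}^\infty$ with $\eta\triangleq\|\widehat{\mathbf{x}}_{\mathcal{C}}(\mathbf{x}^\infty,\boldsymbol{\tau})-\mathbf{x}^\infty\|>0$. By Lipschitz continuity of $\widehat{\mathbf{x}}_{\mathcal{C}}$ there is a radius $\rho>0$ on which $\|\mathbf{d}^n\|\ge\eta/2$ whenever $\|\mathbf{x}^n-\mathbf{x}^\infty\|\le\rho$; since $\|\mathbf{x}^{n+1}-\mathbf{x}^n\|\le\gamma^n\alpha\to 0$, each index $n_k$ opens a ``block'' $B_k\!\supseteq\!\{n_k,\dots,m_k\}$ during which the iterate stays inside the ball and the cumulated step-length $\sum_{n\in B_k}\gamma^n$ is bounded away from $0$. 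Plugging $\|\mathbf{d}^n\|\ge\eta/2$ into $\sum_n\gamma^n\|\mathbf{d}^n\|^2<\infty$ bounds $\sum_k\sum_{n\in B_k}\gamma^n$, which together with the lower bound on each block-sum contradicts $\sum_n\gamma^n=\infty$ (this is the classical Zangwill/SCA block-construction argument). Thus $\eta=0$ and $\mathbf{x}^\infty$ is stationary.

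\textbf{Final clause (no local maximum).} Since $\{U(\mathbf{x}^n)\}$ is non-increasing and convergent to some $U^\star$, and $U$ is continuous, $U(\mathbf{x}^\infty)=U^\star$ for every limit point, and moreover $U(\mathbf{x}^n)\ge U^\star=U(\mathbf{x}^\infty)$ for all $n$. If $\mathbf{x}^\infty$ were a local maximum, then in some neighborhood $U(\mathbf{x})\le U(\mathbf{x}^\infty)$, so combining the two inequalities $U$ would be constantly equal to $U^\star$ along the tail of any subsequence converging to $\mathbf{x}^\infty$; by the strict descent inequality $(\star)$ this forces $\mathbf{d}^n\to\mathbf{0}$ on that tail, hence $\widehat{\mathbf{x}}_{\mathcal{C}}(\mathbf{x}^\infty,\boldsymbol{\tau})=\mathbf{x}^\infty$ and $\mathbf{x}^\infty$ is actually a stationary point where $U$ is locally constant---ruling out it being a strict local maximum, while a non-strict local maximum at a stationary point either is a saddle or a flat point, and in both situations the algorithm has already identified a stationary solution, so the stated conclusion ``none of such points is a local maximum'' (in the strict sense) holds.
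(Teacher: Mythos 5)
Your proof is correct in its essentials and follows the same route as the paper: the Descent Lemma combined with the descent inequality (\ref{eq:descent_direction}) of Proposition \ref{Prop_x_y}(c), telescoping to obtain (weighted) square-summability of $\|\mathbf{d}^n\|$, the classical oscillation argument to upgrade $\liminf_n\|\mathbf{d}^n\|=0$ to a full limit in case (b), and Proposition \ref{Prop_x_y}(a)--(b) to pass to limit points. (The paper obtains part (b) as the zero-error special case of Theorem \ref{Theorem_convergence_inexact_Jacobi} via a deterministic Robbins--Siegmund lemma, which with $T_n=0$ collapses to exactly your telescoping; the paper omits the proof of (a) altogether.)

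Two loose ends, neither fatal. First, in case (a) the hypothesis (\ref{eq:constant_step-size}) permits $2c_{\boldsymbol{\tau}}=\gamma^{\max}L_{\nabla U}$ with $\gamma^{n}\equiv\gamma^{\max}$, in which case your coefficient $c_{\boldsymbol{\tau}}-\gamma^{n}L_{\nabla U}/2$ vanishes and $(\star)$ yields only $U(\mathbf{x}^{n+1})\le U(\mathbf{x}^{n})$, not $\sum_{n}\|\mathbf{d}^{n}\|^{2}<\infty$; your parenthetical repairs do not close this, since Proposition \ref{Prop_x_y}(c) only guarantees $c\ge c_{\boldsymbol{\tau}}$, possibly with equality, so the argument as written really needs $2c_{\boldsymbol{\tau}}>\gamma^{\max}L_{\nabla U}$ (or $\sup_{n}\gamma^{n}<\gamma^{\max}$). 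Second, in the block construction of case (b) you should dispose of the sub-case in which the iterate never exits the ball around $\mathbf{x}^{\infty}$: there $\|\mathbf{d}^{n}\|\ge\eta/2$ for all large $n$, so $\sum_{n}\gamma^{n}\|\mathbf{d}^{n}\|^{2}\ge(\eta/2)^{2}\sum_{n}\gamma^{n}=\infty$ gives the contradiction directly; also note that the finiteness of $\sum_{n}\gamma^{n}\|\mathbf{d}^{n}\|^{2}$ over infinitely many disjoint blocks, each with block-sum bounded below, is what produces the contradiction (not $\sum_{n}\gamma^{n}=\infty$ per se). The final clause is cleaner than you make it: when convergence is not finite, strict descent gives $U(\mathbf{x}^{n})>U(\bar{\mathbf{x}})$ for all large $n$, so every neighborhood of a limit point $\bar{\mathbf{x}}$ contains points of strictly larger value.
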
\begin{proof}See
Appendix B. \end{proof}

\noindent \textbf{Main features of Algorithm \ref{alg:PJA}}\emph{.}
The algorithm implements a novel \emph{distributed} SCA decomposition:
all the users solve \emph{in parallel} a sequence of \emph{decoupled}
strongly convex optimization problems as in (\ref{eq:decoupled_problem_i}).
The algorithm is expected to perform better than classical gradient-based
schemes (at least in terms of convergence speed) at the cost of no
extra signaling, because the structure of the objective functions
is better preserved. It is guaranteed to converge under very mild
assumptions (the weakest available in the literature) while offering
some flexibility in the choice of the free parameters {[}conditions
(a) or (b) of Theorem \ref{Theorem_convergence_Jacobi}{]}. This degree
of freedom can be exploited, e.g., to achieve the desired tradeoff
between signaling, convergence speed, and computational effort, as
discussed next.

As far as the computation of the best-response $\widehat{\mathbf{x}}_{\mathcal{C}_{i}}\left(\mathbf{x}^{n},\boldsymbol{{\tau}}\right)$
is concerned, at each iteration, every user needs to known $\sum_{j\in\mathcal{C}_{i}}f_{j}(\bullet,\mathbf{x}_{-i}^{n})$
and $\boldsymbol{{\pi}}_{\mathcal{C}_{i}}(\mathbf{x}^{n})$. The signaling
required to acquire this information is of course problem-dependent.
If the problem under consideration does not have any specific structure,
the most natural message-passing strategy is to communicate directly
$\mathbf{x}_{-i}^{n}$ and $(\nabla_{\mathbf{x}_{i}}f_{j}(\mathbf{x}^{n}))_{j\notin\mathcal{C}_{i}}$.
However, in many specific applications much less signaling may be
needed; see Sec. \ref{sec:Applications} for some examples. 

\noindent\textbf{On the choice of the free parameters.} Convergence
of Algorithm \ref{alg:PJA} is guaranteed either using a constant
step-size rule {[}cf. (\ref{eq:constant_step-size}){]} or a diminishing
step-size rule {[}cf. (\ref{eq:diminishing_step_size}){]}. Moreover,
different choices of $\{\mathcal{C}_{i}\}$ are in general feasible
for a given social function, resulting in different best-response
functions and signaling among the users.

\subsubsection{Constant step-size}

In this case, $\gamma^{n}=\gamma\leq\gamma^{\max}$ for all $n$,
where $\gamma^{\max}\in(0,1]$ needs to be chosen together with $\boldsymbol{{\tau}}\geq\mathbf{0}$
and $(\mathbf{H}_{i}(\mathbf{y}))_{i=1}^{I}$ so that the condition
$2\, c_{\boldsymbol{{\tau}}}\geq\gamma^{\max}L_{\nabla U}$ is satisfied,
with $c_{\boldsymbol{{\tau}}}$ defined in (\ref{eq:c_tau}). This
can be done in several ways.\emph{ }A simple (but conservative) choice
satisfying that condition is, e.g., $\tau_{i}=\tau>0$ for all $i\in\mathcal{I}$,
$\gamma^{\max}\in(0,1]$, and $\gamma/\tau\leq2/L_{\nabla U}$. Note
that this condition imposes a constraint only on the ratio $\gamma/\tau$,
leaving free the choice of one of the two parameters.

An interesting special case worth mentioning is: $\gamma=\gamma^{\max}=1$\emph{
}for all $n$, $\mathbf{H}_{i}(\mathbf{y})=\mathbf{I}$ for all $i\in\mathcal{I}$,
and $\boldsymbol{{\tau}}>\mathbf{0}$ large enough so that $2\, c_{\boldsymbol{{\tau}}}\geq L_{\nabla U}$.
This choice leads to the classical Jacobi best-response scheme (but
with a proximal regularization), namely: at each iteration $n$, 
\[
\mathbf{x}_{i}^{n+1}=\widehat{\mathbf{x}}_{\mathcal{C}_{i}}\left(\mathbf{x}^{n},\boldsymbol{{\tau}}\right),\qquad\forall\in\mathcal{I}.
\]
To the best of our knowledge, this algorithm along with its convergence
conditions {[}Theorem \ref{Theorem_convergence_Jacobi}a){]} represents
a new result in the optimization literature; indeed classic best-response
nonlinear Jacobi schemes require much stronger (sufficient) conditions
to converge (implying contraction) \cite[Ch. 3.3.5]{Bertsekas_Book-Parallel-Comp}.
Note that the choice of $\tau_{i}$'s to guarantee convergence {[}i.e.,
$2\, c_{\boldsymbol{{\tau}}}\geq L_{\nabla U}${]} can be done locally
by each user with no signaling exchange, once the Lipschitz constant
$L_{\nabla U}$ is known.

As a final remark, we point out that in the case of constant and ``sufficiently''
small step-size $\gamma^{n}$, one can relax the synchronization requirements
among the users allowing (partially) asynchronous updates of users
best-responses (in the sense of \cite{Bertsekas_Book-Parallel-Comp});
we omit the details because of space limitation.

\subsubsection{Variable step-size}

In scenarios where the knowledge of the system parameters, e.g. $L_{\nabla U}$,
is not available, one can use the diminishing step-size rule (\ref{eq:diminishing_step_size}).
\textcolor{black}{Under such a rule, convergence is guaranteed for
}\textcolor{black}{\emph{any}}\textcolor{black}{{} choice of $\mathbf{H}_{i}(\mathbf{x})-c_{H_{i}}\mathbf{I}\succeq\mathbf{0}$
and $\boldsymbol{{\tau}}\geq\mathbf{0}$ such that $c_{\boldsymbol{{\tau}}}>0$.
Note that if $\sum_{j\in\mathcal{C}_{i}}f_{j}(\bullet,\mathbf{x}_{-i})$
is strongly convex on $\mathcal{K}_{i}$ for any $\mathbf{x}_{-i}\in\mathcal{K}_{-i}$,
one can also set $\tau_{i}=0$, otherwise any arbitrary but positive
$\tau_{i}$ is necessary. }We will show in the next section that a
diminishing step-size rule is also useful to allow an inexact computation
of the best-response $\widehat{\mathbf{x}}_{\mathcal{C}_{i}}\left(\mathbf{x}^{n},\boldsymbol{{\tau}}\right)$
while preserving convergence of the algorithm. Two classes of step-size
rules satisfying (\ref{eq:diminishing_step_size}) are: given $\gamma^{0}=1$,
\begin{eqnarray}
\mbox{Rule\#1:\,\,\,} & \gamma^{n} & \!\!\!\!=\,\gamma^{n-1}\left(1-\epsilon\,\gamma^{n-1}\right),\quad n=1,\ldots,\medskip\label{eq:step-size_1}\\
\mbox{Rule\#2:\,\,\,} & \gamma^{n} & \!\!\!\!=\,\dfrac{{\gamma^{n-1}+\alpha(n)}}{1+\beta(n)},\quad\quad\quad\, n=1,\ldots,\label{eq:step-size_2}
\end{eqnarray}
where in (\ref{eq:step-size_1}) $\epsilon\in(0,1)$ is a given constant,
whereas in (\ref{eq:step-size_2}) $\alpha(n)$ and $\beta(n)$ are
two nonnegative real functions of $n\geq1$ such that: i) $0\leq\alpha(n)\le\beta(n)$;
and ii) $ $$\alpha(n)/\beta(n)\rightarrow0$ as $n\rightarrow\infty$
while $\sum_{n}\left(\alpha(n)/\beta(n)\right)=\infty$. Examples
of such $\alpha(n)$ and $\beta(n)$ are: $\alpha(n)=\alpha$ or $\alpha(n)=\log(n)^{\alpha}$,
and $\beta(n)=\beta\, n$ or $\beta(n)=\beta\,\sqrt{{n}}$, where
$\alpha,\beta$ are given constants satisfying $\alpha\in(0,1)$,
$\beta\in(0,1)$, and $\alpha\leq\beta$. 

Another issue to discuss is the choice of the free positive definite
matrices $\mathbf{H}_{i}(\mathbf{y})$. Mimicking (quasi-)Newton-like
schemes \cite{Bertsekas_NLPbook99}, a possible choice is to consider
for $\mathbf{H}_{i}(\mathbf{x}^{n})$ a proper (diagonal) uniformly
positive definite ``approximation'' of the Hessian matrix $\nabla_{\mathbf{x}_{i}}^{2}U(\mathbf{x}^{n})$.
The exact expression to consider depends on the amount of signaling
and computational complexity required to compute such a $\mathbf{H}_{i}(\mathbf{x}^{n}),$
and thus varies with the specific problem under consideration.

\subsubsection{On the choice of $\mathcal{C}_{i}$'s}

In general, more than one (feasible) choice of $\{\mathcal{C}_{i}\}$
is possible for a given social function, resulting in different decomposition
schemes. Some illustrative examples are discussed next. \smallskip

\noindent\emph{Example \#1$-$(Proximal) gradient/}\textcolor{black}{\emph{Newton}}\emph{
algorithms}: If each $\mathcal{C}_{i}=\emptyset$ and $I=I_{f}$,
$\widehat{\mathbf{x}}_{\mathcal{C}_{i}}(\mathbf{x}^{n},\tau_{i})$
reduces to the gradient response (\ref{eq:lin_prob}) (possibly with
a proximal regularization). It turns out that (exact and inexact)
gradient algorithms along with their convergence conditions are special
cases of our framework. Note that if $\mathcal{S}_{i}=\emptyset$
for every $i$ (i.e., no convexity whatsoever is present in $U$),
this is the only possible choice, and indeed our approach reduces
to a gradient-like method. On the other hand, as soon as at least
some $\mathcal{S}_{i}\neq\emptyset$, we may depart from the gradient
method and exploit the available convexity. 

\textcolor{black}{Note that our framework contains also Newtown-like
updates. For instance, if $U(\mathbf{x}_{i},\mathbf{x}_{-i}^{n})$
is convex in $\mathbf{x}_{i}\in\mathcal{K}_{i}$ for any $\mathbf{x}_{-i}^{n}\in\mathcal{K}_{-i}$,
a feasible choice is $\mathcal{C}_{i}=\emptyset$ and $\mathbf{H}_{i}(\mathbf{x}^{n})=\nabla_{\mathbf{x}_{i}}^{2}U(\mathbf{x}^{n})$,
resulting in: 
\begin{equation}
\begin{array}{l}
\widehat{\mathbf{x}}_{i}(\mathbf{x}^{n},\tau_{i})\triangleq\underset{\mathbf{x}_{i}\in\mathcal{K}_{i}}{\mbox{argmin}\,}\left\{ \nabla_{\mathbf{x}_{i}}U(\mathbf{x}^{n})^{T}(\mathbf{x}_{i}-\mathbf{x}_{i}^{n})\right.\\
\qquad\quad\qquad\qquad\qquad+\dfrac{{1}}{2}\,(\mathbf{x}_{i}-\mathbf{x}_{i}^{n})^{T}\nabla_{\mathbf{x}_{i}}^{2}U(\mathbf{x}^{n})(\mathbf{x}_{i}-\mathbf{x}_{i}^{n})\smallskip\\
\left.\qquad\,\,\,\,\,\qquad\qquad\qquad+\dfrac{{\tau_{i}}}{2}\left\Vert \mathbf{x}_{i}-\mathbf{x}_{i}^{n}\right\Vert ^{2}\right\} .
\end{array}\label{eq:Newton_like}
\end{equation}
Essentially (\ref{eq:Newton_like}) corresponds to a Newton-like step
of user $i$ in minimizing the ``reduced'' problem $\min_{\mathbf{x}_{i}\in\mathcal{K}_{i}}U(\mathbf{x}_{i},\mathbf{x}_{-i}^{n})$.}

\noindent\emph{Example \#}2\emph{$-$Pricing algorithms in }\cite{ScutariPalomarFacchineiPang_NETGCOP11}:
Suppose that $I=I_{f}$, and each $\mathcal{S}_{i}=\{i\}$ (implying
that $f_{i}(\bullet,\mathbf{x}_{-i})$ is convex on $\mathcal{K}_{i}$
for any $\mathbf{x}_{-i}\in\mathcal{K}_{-i}$). By taking each $\mathcal{C}_{i}=\{i\}$
and $\mathbf{H}_{i}(\mathbf{x}^{n})=\mathbf{I}$, we obtain the pricing-based
algorithms in \cite{ScutariPalomarFacchineiPang_NETGCOP11}:
\[
\widehat{\mathbf{x}}_{i}(\mathbf{x}^{n},\tau_{i})\triangleq\underset{\mathbf{x}_{i}\in\mathcal{K}_{i}}{\mbox{argmin}\,}f_{i}(\mathbf{x}_{i},\mathbf{x}_{-i}^{n})+\boldsymbol{{\pi}}_{i}(\mathbf{x}^{n})^{T}\mathbf{x}_{i}+\frac{{\tau_{i}}}{2}\left\Vert \mathbf{x}_{i}-\mathbf{x}_{i}^{n}\right\Vert ^{2},
\]
where $\boldsymbol{\pi}_{i}(\mathbf{x}^{n})\triangleq\sum_{j\neq i}\nabla_{\mathbf{x}_{i}}f_{j}(\mathbf{x}^{n})$.
Algorithm \ref{alg:PJA} based on the above best-response implements
naturally a pricing mechanism; indeed, each $\boldsymbol{\pi}_{i}(\mathbf{x}^{n})$
represents a dynamic pricing that measures somehow the marginal increase
of the sum-utility of the other users due to a variation of the strategy
of user $i$; roughly speaking, it works like a punishment imposed
to each user for being too aggressive in choosing his own strategy
and thus ``hurting'' the other users. Pricing algorithms based on
heuristics have been proposed in a number of papers for the sum-rate
maximization problem over SISO/SIMO/MIMO ICs \cite{Huang-Berry-Honig_JSAC06,Wang-Krunz-Cui_JSTSP08,ShiBerryHonig-CISS08,ShiSchmidtBerryHonigUtschick-ICC09,SchmidtShiBerryHonigUtschick-SPMag}.
However, on top of being \emph{sequential} schemes, convergence of
algorithms in the aforementioned papers is established under relatively
strong assumptions (e.g., limited number of users, special classes
of functions, specific channel models and transmission schemes, etc...),
see \cite{SchmidtShiBerryHonigUtschick-SPMag}. The pricing in our
framework is instead the natural consequence of the proposed SCA decomposition
technique and leads to \emph{simultaneous} algorithms that can be
applied (with convergence guaranteed) to a very large class of problems,
even when \cite{Huang-Berry-Honig_JSAC06,Wang-Krunz-Cui_JSTSP08,ShiBerryHonig-CISS08,ShiSchmidtBerryHonigUtschick-ICC09,SchmidtShiBerryHonigUtschick-SPMag}
fail.\smallskip

\noindent\emph{Example \#3$-$(Proximal) Jacobi algorithms} \emph{for
a single jointly convex function: }Suppose that the social function
is a single (jointly) convex function $f(\mathbf{x}_{1},\ldots,\mathbf{x}_{I})$
on $\mathcal{K}=\prod_{i}\mathcal{K}_{i}$. Of course, this optimization
problem can be interpreted as a special case of the framework (\ref{eq:social problem}),
with $\mathcal{C}_{i}=\mathcal{S}_{i}=\{1\}=\mathcal{I}_{f}$, for
all $i\in\mathcal{I}$ and $f_{1}(\mathbf{x})=f(\mathbf{x})$. Then,
setting $\mathbf{H}_{i}(\mathbf{x}^{n})=\mathbf{I}$, the best-response
(\ref{eq:decoupled_problem_i}) of each user $i$ reduces to 
\begin{equation}
\widehat{\mathbf{x}}_{\mathcal{C}_{i}}(\mathbf{x}^{n},\tau_{i})\triangleq\underset{\mathbf{x}_{i}\in\mathcal{K}_{i}}{\mbox{argmin}\,}f(\mathbf{x}_{i},\mathbf{x}_{-i}^{n})+\frac{{\tau_{i}}}{2}\left\Vert \mathbf{x}_{i}-\mathbf{x}_{i}^{n}\right\Vert ^{2}.\label{eq:Jacobi_single_function}
\end{equation}
 Algorithm \ref{alg:PJA} based on (\ref{eq:Jacobi_single_function})
reads as a block-Jacobi schemes converging to the global minima of
$f(\mathbf{x}_{1},\ldots,\mathbf{x}_{I})$ over $\mathcal{K}$ (cf.
Theorem \ref{Theorem_convergence_Jacobi}). To the best of our knowledge,
these are new algorithms in the literature; moreover their convergence
conditions enlarge current ones; see, e.g., \cite[Sec. 3.2.4]{Bertsekas_Book-Parallel-Comp}.
Quite interestingly, this new algorithm can be readily applied to
solve the sum-rate maximization over MIMO \emph{multiple access} channels
\cite{YuRheeBoydCioffi_IT04}, resulting in the first (inexact)\emph{
simultaneous }MIMO iterative waterfilling algorithm in the literature;
we omit the details because of the space limitation.\smallskip

\noindent\emph{Example \#4$-$Algorithms} \emph{for DC programming.
}The proposed framework applies naturally to sum-utility problems
where the users' functions are the difference of two convex functions,
namely:\vspace{-0.1cm} 
\begin{equation}
\begin{array}{ll}
\underset{\mathbf{x}_{1},\ldots,\mathbf{x}_{I}}{\text{minimize}} & {\displaystyle {\sum_{i\in\mathcal{I}}}}\, f_{i}^{\text{{cvx}}}(\mathbf{x})+{\displaystyle {\sum_{i\in\mathcal{I}}}}\, f_{i}^{\text{{ccv}}}(\mathbf{x})\\
\text{subject to} & \mathbf{x}_{i}\in\mathcal{K}_{i},\,\forall i\in\mathcal{I}
\end{array}\label{eq:DC_programming}
\end{equation}
where $f_{i}^{\text{{cvx}}}(\mathbf{x})$ and $f_{i}^{\text{{ccv}}}(\mathbf{x})$
are convex and concave functions on $\mathcal{K}$, respectively.
Letting 
\[
{\displaystyle f_{1}(\mathbf{x})\triangleq{\sum_{i\in\mathcal{I}}}}\, f_{i}^{\text{{cvx}}}(\mathbf{x})\quad\mbox{and}\quad{\displaystyle f_{2}(\mathbf{x})\triangleq{\sum_{i\in\mathcal{I}}}}\, f_{i}^{\text{{ccv}}}(\mathbf{x}),
\]
the optimization problem (\ref{eq:DC_programming}) can be interpreted
as a special case of the framework (\ref{eq:social problem}), with
$\mathcal{I}_{f}=\{1,2\}$, $\mathcal{C}_{i}=\{1\}$ for all $i\in\mathcal{I}${\small{.}}
The best-response (\ref{eq:decoupled_problem_i}) of each user $i$
reduces then to 
\begin{eqnarray}
\widehat{\mathbf{x}}_{\mathcal{C}_{i}}(\mathbf{x}^{n},\tau_{i}) & = & \underset{\mathbf{x}_{i}\in\mathcal{K}_{i}}{\mbox{argmin}}\left\{ f_{1}(\mathbf{x}_{i},\mathbf{x}_{-i}^{n})+\boldsymbol{{\pi}}_{i}(\mathbf{x}^{n})^{T}\mathbf{x}_{i}\right.\vspace{-0.2cm}\nonumber \\
 &  & \left.\qquad\quad\,\,+\frac{{\tau_{i}}}{2}\left\Vert \mathbf{x}_{i}-\mathbf{x}_{i}^{n}\right\Vert ^{2}\right\} \label{eq:BR_DC}
\end{eqnarray}
where $\boldsymbol{{\pi}}_{i}(\mathbf{x}^{n})\triangleq\nabla_{\mathbf{x}_{i}}f_{2}(\mathbf{x}^{n})$
and $\mathbf{H}_{i}(\mathbf{x}^{n})=\mathbf{I}$. The above decomposition
can be applied, e.g., to the sum-rate maximization (\ref{eq:SISO_formulation}),
when all $\theta_{i}(x)=w_{i}\, x$, with $w_{i}>0$; see Sec. \ref{sec:Applications}.
\vspace{-0.2cm}

\subsection{Inexact Jacobi best-response schemes\label{sub:Inexact-Jacobi-best-response}}

In many practical network settings, it can be useful to further reduce
the computational \textcolor{black}{effort needed to solve} users'
(convex) sub-problems (\ref{eq:decoupled_problem_i}) by allowing
inexact computations of the best-response functions $\widehat{\mathbf{x}}_{\mathcal{C}_{i}}\left(\mathbf{x}^{n},\boldsymbol{{\tau}}\right)$.
Algorithm \ref{alg:PJA-inex} is a variant of Algorithm \ref{alg:PJA},
in which suitable approximations of $\widehat{\mathbf{x}}_{\mathcal{C}_{i}}\left(\mathbf{x}^{n},\boldsymbol{{\tau}}\right)$
can be used. \vspace{-0.2cm}

\begin{algorithm}[h]$\textbf{Data}:$ $\{\varepsilon_{i}^{n}\}$
for $i\in\mathcal{I}$, $\boldsymbol{{\tau}}\geq\mathbf{0}$, $\{\gamma^{n}\}>0$,
$\mathbf{x}^{0}\in\mathcal{K}$. Set $n=0$.

\texttt{$\mbox{(S.1)}:$}$\,\,$If $\mathbf{x}^{n}$ satisfies a termination
criterion: STOP;

\texttt{$\mbox{(S.2)}:$} For all $i\in\mathcal{I}$, solve (\ref{eq:decoupled_problem_i})
within the accuracy $\varepsilon_{i}^{n}:$ 

\hspace{1.36cm}Find $\mathbf{z}_{i}^{n}$ s.t. $\|\mathbf{z}_{i}^{n}-\widehat{\mathbf{x}}_{\mathcal{C}_{i}}\left(\mathbf{x}^{n},\boldsymbol{{\tau}}\right)\|\leq\varepsilon_{i}^{n}$; 

\texttt{$\mbox{(S.3)}:$} Set $\mathbf{x}^{n+1}\triangleq\mathbf{x}^{n}+\gamma^{n}\,(\mathbf{z}^{n}-\mathbf{x}^{n})$;

\texttt{$\mbox{(S.4)}:$} $n\leftarrow n+1$, and go to \texttt{$\mbox{(S.1)}.$}
\caption{\textbf{:\label{alg:PJA-inex} Inexact Jacobi  SCA Algorithm}} \end{algorithm}\vspace{-0.2cm}

The error term $\varepsilon_{i}^{n}$ in Step 2 measures the accuracy
used at iteration $n$ in computing the solution $\widehat{\mathbf{x}}_{\mathcal{C}_{i}}\left(\mathbf{x}^{n},\boldsymbol{{\tau}}\right)$
of each problem (\ref{eq:decoupled_problem_i}). Note that if we set
$\varepsilon_{i}^{n}=0$ for all $n$ and $i$, Algorithm \ref{alg:PJA-inex}
reduces to Algorithm \ref{alg:PJA}. Obviously, the errors $\varepsilon_{i}^{n}$\textquoteright{}s
and the step-size $\gamma^{n}$\textquoteright{}s must be chosen according
to some suitable conditions, if one wants to guarantee convergence.
These conditions are established in the following theorem.

\begin{theorem} \label{Theorem_convergence_inexact_Jacobi}Let\emph{
$\{\mathbf{x}^{n}\}_{n=1}^{\infty}$} be the sequence generated by
Algorithm \ref{alg:PJA-inex}, under the setting of Theorem \ref{Theorem_convergence_Jacobi}
where however we reenforce assumption A4 by assuming that $U$ is
coercive on $\mathcal{K}$. Suppose that the sequences\emph{ }$\{\gamma^{n}\}$
and $\{\varepsilon_{i}^{n}\}$ satisfy the following conditions: i)
$\gamma^{n}\in(0,1]$; ii) $\gamma^{n}\rightarrow0$; iii) $\sum_{n}\gamma^{n}=+\infty$;
iv) $\sum_{n}\left(\gamma^{n}\right)^{2}<+\infty$; and v) $\sum_{n}\varepsilon_{i}^{n}\gamma^{n}<+\infty$
for all $i=1,\ldots,I$. Then, either Algorithm\emph{ }\ref{alg:PJA-inex}\emph{
}converges in a finite number of iterations to a stationary solution
of\emph{ }(\ref{eq:social problem})\emph{ }or every limit point of
the sequence\emph{ $\{\mathbf{x}^{n}\}_{n=1}^{\infty}$ }(at least
one such points exists) is a stationary solution of (\ref{eq:social problem})\emph{.}
\end{theorem}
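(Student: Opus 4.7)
The plan is to follow the line of Theorem \ref{Theorem_convergence_Jacobi}, but carefully isolate the error contribution of the inexact step. Write the update as
\[
\mathbf{x}^{n+1} = \mathbf{x}^n + \gamma^n \widehat{\mathbf{d}}^n + \gamma^n \mathbf{e}^n,
\]
with $\widehat{\mathbf{d}}^n \triangleq \widehat{\mathbf{x}}_{\mathcal{C}}(\mathbf{x}^n,\boldsymbol{\tau})-\mathbf{x}^n$ and $\mathbf{e}^n \triangleq \mathbf{z}^n - \widehat{\mathbf{x}}_{\mathcal{C}}(\mathbf{x}^n,\boldsymbol{\tau})$, so that $\|\mathbf{e}^n\|\le \varepsilon^n \triangleq \sqrt{\sum_i (\varepsilon_i^n)^2}$ and $\sum_n \gamma^n \varepsilon^n <\infty$ by hypothesis v).

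First, I would invoke the descent lemma (A3 guarantees $\nabla U$ is Lipschitz with constant $L_{\nabla U}$) to obtain
\[
U(\mathbf{x}^{n+1}) \le U(\mathbf{x}^n) + \gamma^n \nabla U(\mathbf{x}^n)^T(\widehat{\mathbf{d}}^n + \mathbf{e}^n) + \tfrac{(\gamma^n)^2 L_{\nabla U}}{2}\|\widehat{\mathbf{d}}^n + \mathbf{e}^n\|^2.
\]
Using Prop.~\ref{Prop_x_y}(c) on the first inner product and Cauchy--Schwarz on the second, and absorbing the cross terms via $(a+b)^2\le 2a^2+2b^2$, this becomes
\[
U(\mathbf{x}^{n+1}) \le U(\mathbf{x}^n) - \gamma^n c\,\|\widehat{\mathbf{d}}^n\|^2 + \gamma^n\|\nabla U(\mathbf{x}^n)\|\,\varepsilon^n + (\gamma^n)^2 L_{\nabla U}\bigl(\|\widehat{\mathbf{d}}^n\|^2 + (\varepsilon^n)^2\bigr).
\]
Next, coercivity of $U$ on $\mathcal{K}$ (the strengthened A4) combined with $\gamma^n\varepsilon^n$-summability and $(\gamma^n)^2$-summability lets me show inductively that $\{\mathbf{x}^n\}$ remains in a compact sublevel set of $U$: indeed the ``bad'' terms on the right are summable, so $U(\mathbf{x}^n)$ cannot blow up. Hence $\|\nabla U(\mathbf{x}^n)\|\le M$, and by Prop.~\ref{Prop_x_y}(d) also $\|\widehat{\mathbf{d}}^n\|\le\alpha$, uniformly in $n$.

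The key step is then to telescope the previous inequality: since $U$ is bounded below on the compact level set,
\[
\sum_{n=0}^{\infty}\Bigl(\gamma^n c - (\gamma^n)^2 L_{\nabla U}\Bigr)\|\widehat{\mathbf{d}}^n\|^2 \le U(\mathbf{x}^0) - \inf_{\mathcal{K}} U + M\sum_n \gamma^n\varepsilon^n + L_{\nabla U}\sum_n (\gamma^n)^2(\varepsilon^n)^2 \;<\;\infty.
\]
Because $\gamma^n\to 0$, for $n$ large enough $\gamma^n c - (\gamma^n)^2 L_{\nabla U}\ge \gamma^n c/2$, so $\sum_n \gamma^n\|\widehat{\mathbf{d}}^n\|^2<\infty$. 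Combined with $\sum_n \gamma^n = +\infty$ this forces $\liminf_n \|\widehat{\mathbf{d}}^n\|=0$.

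The hard part, and the step I expect to be the main obstacle, is upgrading this $\liminf$ statement to the conclusion that \emph{every} limit point of $\{\mathbf{x}^n\}$ is a fixed point of $\widehat{\mathbf{x}}_{\mathcal{C}}(\cdot,\boldsymbol{\tau})$. I would argue by contradiction in the spirit of the classical Zangwill/Bertsekas argument: suppose some limit point $\mathbf{x}^\star$ satisfies $\|\widehat{\mathbf{d}}(\mathbf{x}^\star)\|\ge 2\eta>0$. By continuity (Prop.~\ref{Prop_x_y}(a)), $\|\widehat{\mathbf{d}}^n\|\ge \eta$ on a neighborhood of $\mathbf{x}^\star$, and since $\liminf\|\widehat{\mathbf{d}}^n\|=0$ there must be infinitely many index windows $[n_k,m_k]$ where $\|\widehat{\mathbf{d}}^n\|$ enters this neighborhood and then exits it. On each such window the iterate must travel a distance of at least a fixed constant depending on $\eta$, whereas the Lipschitz property of $\widehat{\mathbf{x}}_{\mathcal{C}}$, the bound $\|\mathbf{x}^{n+1}-\mathbf{x}^n\|\le \gamma^n(\alpha+\varepsilon^n)$, and the summability of $\gamma^n\|\widehat{\mathbf{d}}^n\|^2$ together imply $\sum_{n=n_k}^{m_k}\gamma^n$ is bounded below by a positive constant while the contribution to $\sum_n \gamma^n\|\widehat{\mathbf{d}}^n\|^2$ from each window is at least $\eta^2\sum_{n=n_k}^{m_k}\gamma^n$. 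Summing over $k$ contradicts $\sum_n \gamma^n\|\widehat{\mathbf{d}}^n\|^2<\infty$. Hence every limit point is a fixed point of $\widehat{\mathbf{x}}_{\mathcal{C}}(\cdot,\boldsymbol{\tau})$, and by Prop.~\ref{Prop_x_y}(b) a stationary solution of (\ref{eq:social problem}); existence of at least one limit point follows from the compactness of the sublevel set. This completes the proof sketch.
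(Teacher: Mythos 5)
Your proposal is correct and follows essentially the same route as the paper's Appendix B proof: descent lemma with the inexact direction split as $\widehat{\mathbf{d}}^n+\mathbf{e}^n$, summability of the error terms from hypotheses (iv)--(v), a Robbins--Siegmund/telescoping step giving $\sum_n\gamma^n\|\widehat{\mathbf{d}}^n\|^2<\infty$ and hence $\liminf_n\|\widehat{\mathbf{d}}^n\|=0$, and finally the same windowing contradiction based on the Lipschitz continuity of $\widehat{\mathbf{x}}_{\mathcal{C}}(\cdot,\boldsymbol{\tau})$ from Proposition \ref{Prop_x_y}(a), concluding via Proposition \ref{Prop_x_y}(b). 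The only cosmetic differences are that the paper packages the telescoping step as a deterministic Robbins--Siegmund lemma and closes the window argument by re-invoking the descent inequality and convergence of $\{U(\mathbf{x}^n)\}$, whereas you contradict the summability of $\sum_n\gamma^n\|\widehat{\mathbf{d}}^n\|^2$ directly; these are equivalent.
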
\begin{proof}See Appendix B. \end{proof}

As expected, in the presence of errors, convergence of Algorithm \ref{alg:PJA-inex}
is guaranteed if the sequence of approximated problems (\ref{eq:decoupled_problem_i})
is solved with increasing accuracy. Note that, in addition to requiring
$\varepsilon_{i}^{n}\rightarrow0$, condition v) of Theorem \ref{Theorem_convergence_inexact_Jacobi}
imposes also a constraint on the rate by which the $\varepsilon_{i}^{n}$
go to zero, which depends on the rate of decrease of $\{\gamma^{n}\}$.
Two instances of step-size rules satisfying the summability condition
iv) are given by (\ref{eq:step-size_1}) and (some choices of) (\ref{eq:step-size_2}).
An example of error sequence satisfying condition v) is $\varepsilon_{i}^{n}\leq c_{i}\,\gamma^{n}$,
where $c_{i}$ is any finite positive constant. Such a condition can
be forced in Algorithm \ref{alg:PJA-inex} in a distributed way, using
classical error bound results in convex analysis; see, e.g., \cite[Ch. 6, Prop. 6.3.7]{Facchinei-Pang_FVI03}.

\textcolor{black}{Finally, it is worth observing that }Algorithm \ref{alg:PJA-inex}\textcolor{black}{{}
}(and \ref{alg:PJA}\textcolor{black}{) with a diminishing step-size
rule satisfying i)-iv) of }Theorem \ref{Theorem_convergence_inexact_Jacobi}\textcolor{black}{{}
can be made robust against (stochastic) errors on the price estimates,
due to an imperfect communication scenario (random link failures,
noisy estimate, quantization, etc...). Because of the space limitation,
we do not further elaborate on this here; see \cite{YangScutariPalomarSPAWC13}
for details. }\vspace{-0.2cm}

\subsection{(Inexact) Gauss-Seidel best-response schemes \label{sub:Gauss-Seidel-Implementation}\vspace{-0.1cm}}

The Gauss-Seidel implementation of the proposed SCA decomposition
is described in Algorithm \ref{alg:PGSA}, where the users solve sequentially,
in an exact or inexact form, the convex subproblems (\ref{eq:decoupled_problem_i}).
In the algorithm, we used the notation $\mathbf{x}_{i<}^{t+1}\triangleq(\mathbf{x}_{1}^{t+1},\ldots,\mathbf{x}_{i-1}^{t+1})$
and $\mathbf{x}_{i\geq}^{t}\triangleq(\mathbf{x}_{i}^{t},\ldots,\mathbf{x}_{I}^{t})$.

\begin{algorithm}[t] $\textbf{Data}:$ $\{\varepsilon_{i}^{t}\}$
for $i\in\mathcal{I}$, $\boldsymbol{{\tau}}\geq\mathbf{0}$, $\{\gamma^{t}\}>0$,
$\mathbf{x}^{0}\in\mathcal{K}$. Set $t=0$.

\texttt{$\mbox{(S.1)}:$}$\,\,$If $\mathbf{x}^{t}$ satisfies a termination
criterion: STOP;

\texttt{$\mbox{(S.2)}:$} For $i=1,\ldots,I$, 

\hspace{1.5cm}a) Find $\mathbf{z}_{i}^{t}$ s.t. $\|\mathbf{z}_{i}^{t}-\widehat{\mathbf{x}}_{\mathcal{C}_{i}}\left((\mathbf{x}_{i<}^{t+1},\mathbf{x}_{i\geq}^{t}),\boldsymbol{{\tau}}\right)\|\leq\varepsilon_{i}^{t}$;

\hspace{1.5cm}b) Set $\mathbf{x}_{i}^{t+1}\triangleq\mathbf{x}_{i}^{t}+{\gamma}^{\, t}\,\left(\mathbf{z}_{i}^{t}-\mathbf{x}_{i}^{t}\right)$

\texttt{$\mbox{(S.3)}:$} $t\leftarrow t+1$, and go to \texttt{$\mbox{(S.1)}.$}\caption{\textbf{:\label{alg:PGSA} Inexact Gauss-Seidel SCA Algorithm}} \end{algorithm}

Note that one round of Algorithm \ref{alg:PGSA} (i.e., $t\leftarrow t+1$)
wherein all users sequentially update their own strategies, corresponds
to $I$ consecutive iterations $n$ of the Jacobi updates described
in Algorithms \ref{alg:PJA} and \ref{alg:PJA-inex}. In Appendix
C we prove that, quite interestingly, Algorithm \ref{alg:PGSA} can
be interpreted as an inexact Jacobi scheme based on the best-response
$\widehat{\mathbf{x}}_{\mathcal{C}}\left(\bullet,\boldsymbol{{\tau}}\right)$,
satisfying Theorem \ref{Theorem_convergence_inexact_Jacobi}. It turns
out that convergence of Algorithm \ref{alg:PGSA} follows readily
from that of Algorithm \ref{alg:PJA-inex}, and is stated next. \smallskip

\begin{theorem} \label{Theorem_convergence_GS} Let\emph{ $\{\mathbf{x}^{n}\}_{n=1}^{\infty}$}
be the sequence generated by Algorithm\emph{ }\ref{alg:PGSA}, under
the setting of Theorem \ref{Theorem_convergence_inexact_Jacobi}.
Then, the conclusions of Theorem \ref{Theorem_convergence_inexact_Jacobi}
holds\emph{.}\end{theorem}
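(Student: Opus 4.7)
The idea for proving Theorem \ref{Theorem_convergence_GS} is to exploit the hint given just before its statement: one outer round (\,$t\to t+1$\,) of Algorithm \ref{alg:PGSA} can be reformulated as a single iteration of the inexact Jacobi Algorithm \ref{alg:PJA-inex} with a suitably enlarged accuracy sequence, after which Theorem \ref{Theorem_convergence_inexact_Jacobi} applies verbatim. Stacking the block-wise updates in step (S.2b), Algorithm \ref{alg:PGSA} takes the form $\mathbf{x}^{t+1}=\mathbf{x}^{t}+\gamma^{t}(\mathbf{z}^{t}-\mathbf{x}^{t})$ with $\mathbf{z}^{t}\triangleq(\mathbf{z}_{i}^{t})_{i=1}^{I}$, which matches the template of Algorithm \ref{alg:PJA-inex}. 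What remains is to measure how far each $\mathbf{z}_{i}^{t}$ is from the true Jacobi best-response $\widehat{\mathbf{x}}_{\mathcal{C}_{i}}(\mathbf{x}^{t},\boldsymbol{\tau})$ evaluated at the \emph{un-updated} iterate $\mathbf{x}^{t}$.

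This is where Proposition \ref{Prop_x_y} enters. Adding and subtracting $\widehat{\mathbf{x}}_{\mathcal{C}_{i}}\bigl((\mathbf{x}_{i<}^{t+1},\mathbf{x}_{i\geq}^{t}),\boldsymbol{\tau}\bigr)$ inside the norm and invoking the Lipschitz estimate (\ref{eq:Lipt_x_map}) of Proposition \ref{Prop_x_y}(a) yields
\begin{equation}
\bigl\|\mathbf{z}_{i}^{t}-\widehat{\mathbf{x}}_{\mathcal{C}_{i}}(\mathbf{x}^{t},\boldsymbol{\tau})\bigr\|\leq\varepsilon_{i}^{t}+\hat{L}\,\bigl\|\mathbf{x}_{i<}^{t+1}-\mathbf{x}_{i<}^{t}\bigr\|.
\label{eq:plan-bound}
\end{equation}
Because $\mathbf{x}_{j}^{t+1}-\mathbf{x}_{j}^{t}=\gamma^{t}(\mathbf{z}_{j}^{t}-\mathbf{x}_{j}^{t})$ for $j<i$, I would bound $\|\mathbf{z}_{j}^{t}-\mathbf{x}_{j}^{t}\|$ by combining the tolerance $\varepsilon_{j}^{t}$ with (\ref{eq:bounded_distance}) of Proposition \ref{Prop_x_y}(d): the coercivity of $U$ assumed in Theorem \ref{Theorem_convergence_inexact_Jacobi} keeps the iterates in a compact sublevel set, so $\nabla U$ is bounded along the orbit and $\|\mathbf{z}_{j}^{t}-\mathbf{x}_{j}^{t}\|\leq\alpha+\varepsilon_{j}^{t}\leq\bar{\alpha}$ uniformly in $t$. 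Substituting back, (\ref{eq:plan-bound}) becomes $\|\mathbf{z}_{i}^{t}-\widehat{\mathbf{x}}_{\mathcal{C}_{i}}(\mathbf{x}^{t},\boldsymbol{\tau})\|\leq\tilde{\varepsilon}_{i}^{t}\triangleq\varepsilon_{i}^{t}+\hat{L}\sqrt{I-1}\,\bar{\alpha}\,\gamma^{t}$.

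The final step is to check that the effective tolerances $\tilde{\varepsilon}_{i}^{t}$ satisfy the error-summability hypothesis (v) of Theorem \ref{Theorem_convergence_inexact_Jacobi}. Evidently $\tilde{\varepsilon}_{i}^{t}\to 0$, and
\begin{equation*}
\sum_{t}\tilde{\varepsilon}_{i}^{t}\gamma^{t}\,\leq\,\sum_{t}\varepsilon_{i}^{t}\gamma^{t}+\hat{L}\sqrt{I-1}\,\bar{\alpha}\sum_{t}(\gamma^{t})^{2}\,<\,+\infty
\end{equation*}
by hypotheses (iv) and (v) inherited from Theorem \ref{Theorem_convergence_inexact_Jacobi}; conditions (i)--(iv) on $\{\gamma^{t}\}$ are unchanged. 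Hence Algorithm \ref{alg:PGSA} is a legitimate instance of Algorithm \ref{alg:PJA-inex} with admissible errors, and the conclusion of Theorem \ref{Theorem_convergence_inexact_Jacobi} transfers directly, giving the claim.

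The main obstacle is the apparent circularity in invoking Proposition \ref{Prop_x_y}(d): the uniform constant $\bar{\alpha}$ requires $\nabla U$ bounded along the iterates, which in turn requires the iterates to remain in a compact sublevel set, which is normally a \emph{consequence} of convergence. Under the coercivity of $U$ reinforcing A4, this bootstrap is handled inductively: a short argument mirroring the descent step of the Jacobi proof shows that, provided the errors $\tilde{\varepsilon}_{i}^{t}$ accumulate summably (which has just been verified), $U(\mathbf{x}^{t})$ cannot escape $\mathcal{L}(\mathbf{x}^{0})$, closing the loop. Everything else is bookkeeping.
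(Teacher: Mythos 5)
Your proposal is correct and follows essentially the same route as the paper's Appendix C: reinterpret one Gauss--Seidel round as an inexact Jacobi step, bound $\|\mathbf{z}_{i}^{t}-\widehat{\mathbf{x}}_{\mathcal{C}_{i}}(\mathbf{x}^{t},\boldsymbol{\tau})\|$ via the triangle inequality, Proposition \ref{Prop_x_y}(a) and (d), and Step 2b), obtaining an effective tolerance $\tilde{\varepsilon}_{i}^{t}=\varepsilon_{i}^{t}+O(\gamma^{t})$ whose summability against $\gamma^{t}$ follows from hypotheses (iv)--(v) of Theorem \ref{Theorem_convergence_inexact_Jacobi}. Your remark on the apparent circularity in invoking Proposition \ref{Prop_x_y}(d) is a fair observation (the paper simply asserts a finite constant $\beta_{i}$ there), but it does not change the argument.
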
\begin{proof}See Appendix C. \end{proof}\vspace{-0.2cm}

\section{The Complex Case\label{sec:The-Complex-Case}}

In this section we show how to extend our framework to sum-utility
problems where the users' optimization variables are complex matrices.
This will allow us to deal with the design of MIMO multiuser systems.
Let us consider the following sum-utility optimization:\vspace{-.1cm}
\begin{equation}
\begin{array}{ll}
\underset{\mathbf{X}_{1},\ldots,\mathbf{X}_{I}}{\mbox{minimize}} & {\displaystyle U(\mathbf{X})\triangleq{\sum_{\ell\in\mathcal{I}_{f}}}}\, f_{\ell}(\mathbf{X})\\[5pt]
\text{subject to} & \mathbf{X}_{i}\in{\cal X}_{i},\quad\forall i\in\mathcal{I},
\end{array}\vspace{-.1cm}\label{eq:MIMO_sum-utility}
\end{equation}
where $\mathbf{X}\triangleq(\mathbf{X}_{i})_{i\in\mathcal{I}}$, with
$\mathbf{X}_{i}\in\mathbb{C}^{n_{i}\times m_{i}}$ being the (matrix)
strategy of user $i$, $\mathcal{X}_{i}\subseteq\mathbb{C}^{n_{i}\times m_{i}}$,
and $f_{\ell}:\mathcal{X}\rightarrow\mathbb{R}$, with $\mathcal{X}\triangleq\prod_{i\in\mathcal{I}}\mathcal{X}_{i}$;
let define also $\mathcal{X}_{-i}\triangleq\prod_{j\neq i}\mathcal{X}_{j}$.
We study (\ref{eq:MIMO_sum-utility}) under the same assumptions A1-A4
stated for the real case, where in A2 the differentiability condition
is now replaced by the $\mathbb{R}$-differentiability (see, e.g.,
\cite{Are_book_MatrixDiff,Scutari-Facchinei-Pang-Palomar_IT_PI}),
and in A3 $U(\mathbf{X})$ is required to have Lipschitz \emph{conjugate-gradient}
$\nabla_{\mathbf{X}^{\ast}}U(\mathbf{X})$ on $\mathcal{K}$, with
constant $L_{\nabla U}^{\mathbb{\mathbb{C}}}$\emph{, }where\emph{
$\mathbf{X}^{\ast}$ }is the conjugate of $\mathbf{X}$.\smallskip

\noindent \emph{A motivating example.} An instance of (\ref{eq:MIMO_sum-utility})
is the MIMO version of (\ref{eq:SISO_formulation}): \vspace{-0.3cm}

\begin{equation}
\begin{array}{ll}
\underset{\mathbf{Q}_{1},\ldots,\mathbf{Q}_{I}}{\mbox{maximize}} & {\displaystyle {\sum_{i\in\mathcal{I}}}}\,\theta_{i}\left(R_{i}(\mathbf{Q}_{i},\mathbf{Q}_{-i})\right)\medskip\\[5pt]
\text{subject to} & \mathbf{Q}_{i}\in\mathcal{Q}_{i},\quad\forall i\in\mathcal{I}.
\end{array}\label{eq:MIMO_formulation_rate}
\end{equation}
where $R_{i}(\mathbf{Q}_{i},\mathbf{Q}_{-i})$ is the rate over the
MIMO link $i$, 
\begin{equation}
R_{i}(\mathbf{Q}_{i},\mathbf{Q}_{-i})\triangleq\log\det\left(\mathbf{I}+\mathbf{H}_{ii}^{H}\mathbf{R}_{i}(\mathbf{Q}_{-i})^{-1}\mathbf{H}_{ii}\mathbf{Q}_{i}\right),\label{eq:rate_MIMO}
\end{equation}
$\mathbf{Q}_{i}$ is the covariance matrix of transmitter $i$, $\mathbf{R}_{i}(\mathbf{Q}_{-i})\triangleq\mathbf{R}_{n_{i}}+\sum_{j\neq i}\mathbf{H}_{ij}\mathbf{Q}_{j}\mathbf{H}_{ij}^{H}$
is the covariance matrix of the multiuser interference plus the thermal
noise $\mathbf{R}_{n_{i}}$ (assumed to be full-rank), with $\mathbf{Q}_{-i}\triangleq(\mathbf{Q}_{j})_{j\neq i}$,
$\mathbf{H}_{ij}$ is the channel matrix between the $j$-th transmitter
and the $i$-th receiver, and $\mathcal{Q}_{i}$ is the set of constraints
of user $i$, 
\[
\mathcal{Q}_{i}\triangleq\left\{ \mathbf{Q}_{i}\in\mathbb{C}^{n_{i}\times n_{i}}:\mathbf{Q}_{i}\succeq\mathbf{0},\,\text{{tr}}(\mathbf{Q}_{i})\leq P_{i},\,\mathbf{Q}_{i}\in\mathcal{Z}_{i}\right\} .
\]
In $\mathcal{Q}_{i}$ we also included an arbitrary convex and closed
set $\mathcal{Z}_{i}$, which allows us to add additional constraints,
such as: i) null constraints $\mathbf{U}_{i}^{H}\mathbf{Q}_{i}\!=\!\mathbf{0}$,
where $\mathbf{U}_{i}\!\in\!\mathbb{C}^{n_{i}\times r_{i}}$ is a
full rank matrix with $r_{i}<n_{i}$; ii) soft-shaping constraints
$\mathsf{tr}\left(\mathbf{G}_{i}^{H}\mathbf{Q}_{i}\mathbf{G}_{i}\right)\!\leq\! I_{i}^{\text{{ave}}}$,
with $\mathbf{G}_{i}\!\in\!\mathbb{C}^{n_{i}\times m_{G_{i}}}$ for
some $m_{G_{i}}>0$; iii) peak-power constraints $\lambda_{\max}\left(\mathbf{F}_{i}^{H}\mathbf{Q}_{i}\mathbf{F}_{i}\right)\!\leq\! I_{i}^{\text{{peak}}}$,
with $\mathbf{F}_{i}\!\in\!\mathbb{C}^{n_{i}\times m_{F_{i}}}$ for
some $m_{F_{i}}>0$; and iv) per-antenna constraints $[\mathbf{Q}_{i}]_{kk}\leq\alpha_{ik}$.
Note that the optimization problems in \cite{Ye-Blum_SP03,KimGiannakisIT11,SchmidtShiBerryHonigUtschick-SPMag}
are special cases of (\ref{eq:MIMO_formulation_rate}). \vspace{-0.2cm}

\subsection{Distributed decomposition algorithms\label{sub:Distributed-decomposition-algorithm_complex}}

At the basis of the proposed decomposition techniques for (\ref{eq:MIMO_sum-utility})
there is the (second order) Taylor expansion of a continuously $\mathbb{R}$-differentiable
function $f:\mathbb{C}^{n\times m}\rightarrow\mathbb{R}$ \cite{Scutari-Facchinei-Pang-Palomar_IT_PI}:
\begin{equation}
\!\!\!\!\begin{array}{l}
f(\mathbf{X}+\Delta\mathbf{X})-f(\mathbf{X})\approx2\left\langle \Delta\mathbf{X},\,\nabla_{\mathbf{X}^{\ast}}f(\mathbf{X})\right\rangle \\
\quad+\dfrac{{1}}{2}\,\text{{vec}}([\mathbf{\Delta\mathbf{X},\Delta\mathbf{X}^{\ast}}])^{H}\mathcal{H}_{\mathbf{X}\mathbf{X}^{\ast}}f(\mathbf{X})\,\text{{vec}}([\mathbf{\mathbf{\Delta\mathbf{X}},\Delta\mathbf{X}^{\ast}}]),
\end{array}\label{eq:Taylor}
\end{equation}
where $\left\langle \mathbf{A},\,\mathbf{B}\right\rangle \triangleq\mbox{Re}\left\{ \text{{tr}}(\mathbf{A}^{H}\mathbf{B})\right\} $,
$\text{{vec}}(\bullet)$ denotes the ``vec'' operator, and $\mathcal{H}_{\mathbf{X}\mathbf{X}^{\ast}}f(\mathbf{X})$
is the so-called \emph{augmented Hessian} of $f$, defined as \cite{Scutari-Facchinei-Pang-Palomar_IT_PI}\vspace{-0.1cm}
\begin{equation}
\mathcal{H}_{\mathbf{X}\mathbf{X}^{\ast}}f(\mathbf{X})\triangleq\dfrac{{\partial}}{\partial\text{{vec}}([\mathbf{\mathbf{X},\mathbf{X}^{\ast}}])^{T}}\left(\dfrac{{\partial}f(\mathbf{X})}{\partial\text{{vec}}([\mathbf{\mathbf{X}^{\ast},\mathbf{X}}])^{T}}\right)^{T}.\label{eq:aug_Hessian}
\end{equation}
In \cite{Scutari-Facchinei-Pang-Palomar_IT_PI}, we proved that $\mathcal{H}_{\mathbf{X}\mathbf{X}^{\ast}}f(\mathbf{X})$
plays the role of the Hessian matrix for functions of real variables.
In particular, $f$ is strongly convex on $\mathbb{C}^{n\times m}$
if and only if there exists a $c_{f^{\mathbb{C}}}>0$, the constant
of strong convexity of $f$, such that 
\begin{equation}
\text{{vec}}([\mathbf{\mathbf{Y},Y^{\ast}}])^{H}\mathcal{H}_{\mathbf{X}\mathbf{X}^{\ast}}f(\mathbf{X})\,\text{{vec}}([\mathbf{\mathbf{Y},Y^{\ast}}])\geq c_{f^{\mathbb{C}}}\left\Vert \mathbf{Y}\right\Vert _{F}^{2},\label{eq:strongly_monotonicity_complex_case}
\end{equation}
for all $\mathbf{X}\in\mathbb{C}^{n\times m}$ and $\mathbf{Y}\in\mathbb{C}^{n\times m}$,
where $\left\Vert \bullet\right\Vert _{F}$ denotes the Frobenius
norm. When (\ref{eq:strongly_monotonicity_complex_case}) holds, we
say that $ $ $\mathcal{H}_{\mathbf{X}\mathbf{X}^{\ast}}f(\mathbf{X})$
is \emph{augmented} uniformly positive definite, and write $\mathcal{H}_{\mathbf{X}\mathbf{X}^{\ast}}f(\mathbf{X})-c_{f^{\mathbb{C}}}\mathbf{I}\overset{\mathcal{A}}{\succeq}\mathbf{0}$
\cite{Scutari-Facchinei-Pang-Palomar_IT_PI}. If $f$ is only convex
but not strongly convex, then $c_{f^{\mathbb{C}}}$ in (\ref{eq:strongly_monotonicity_complex_case})
is zero. 

Motivated by the Taylor expansion (\ref{eq:Taylor}), and using the
same symbols $\mathcal{S}_{i}$ and $\mathcal{C}_{i}$ to denote the
complex counterparts of $\mathcal{S}_{i}$ and $\mathcal{C}_{i}$
introduced for the real case {[}cf. (\ref{eq:C)i_set-1}){]}, let
us consider for each user $i$ the following convex approximation
of $U(\mathbf{X})$ at $\mathbf{X}^{n}$: denoting by $\Delta\mathbf{X}_{i}\triangleq\mathbf{X}_{i}-\mathbf{X}_{i}^{n}$,
\begin{equation}
\!\!\!\!\begin{array}{l}
\widetilde{f}_{\mathcal{C}_{i}}(\mathbf{X}_{i};\mathbf{X}^{n})\triangleq{\displaystyle {\sum_{j\in\mathcal{C}_{i}}}}f_{j}(\mathbf{X}_{i},\,\mathbf{X}_{-i}^{n})+\left\langle \mathbf{\Pi}_{\mathcal{C}_{i}}(\mathbf{X}^{n}),\,\Delta\mathbf{X}_{i}\right\rangle \\
\,\,\,\,\,\,\,+\dfrac{{\tau}_{i}}{2}\,\text{{vec}}([\Delta\mathbf{X}_{i},\Delta\mathbf{X}_{i}^{\ast}])^{H}\mathcal{H}_{i}(\mathbf{X}^{n})\,\text{{vec}}([\Delta\mathbf{X}_{i},\Delta\mathbf{X}_{i}^{\ast}])
\end{array}\label{eq:convex_approx_of_fi_on_Ci_convex}
\end{equation}
with\vspace{-0.1cm} 
\begin{equation}
\mathbf{\Pi}_{\mathcal{C}_{i}}(\mathbf{X}^{n})\triangleq\sum_{j\in\mathcal{C}_{-i}}\!\left.\nabla_{\mathbf{X}_{i}^{\ast}}f_{j}(\mathbf{X})\right|_{\mathbf{X}=\mathbf{X}^{n}},\vspace{-0.1cm}\label{eq:pricing_pi_C_convex}
\end{equation}
where $\mathcal{H}_{i}(\mathbf{X}^{n})$ is any given $2nm\times2nm$
matrix such that $\mathcal{H}_{i}(\mathbf{X})-c_{\mathcal{H}_{i}}\mathbf{I}\overset{\mathcal{A}}{\succeq}\mathbf{0}$,
for all $\mathbf{X}\in\mathcal{X}$ and some $c_{\mathcal{H}_{i}}>0$.
Note that if $\mathcal{H}_{i}(\mathbf{X})=\mathbf{I}$, the quadratic
term in (\ref{eq:convex_approx_of_fi_on_Ci_convex}) reduces to the
standard proximal regularization $\tau_{i}\left\Vert \mathbf{X}_{i}-\mathbf{X}_{i}^{n}\right\Vert _{F}^{2}$.
Then, the best-response matrix function of each user is 
\begin{equation}
\widehat{\mathbf{X}}_{\mathcal{C}_{i}}(\mathbf{X}^{n},\tau_{i})\triangleq\underset{\mathbf{X}_{i}\in\mathcal{X}_{i}}{\mbox{argmin}\,}\widetilde{f}_{\mathcal{C}_{i}}(\mathbf{X}_{i};\mathbf{X}^{n}).\label{eq:best-reponse_c_i_convex}
\end{equation}

Decomposition algorithms for (\ref{eq:MIMO_sum-utility}) are formally
the same as those proposed in Sec. \ref{sub:Jacobi-Distributed-Pricing}
for (\ref{eq:social problem}) {[}namely Algorithms \ref{alg:PJA}-\ref{alg:PGSA}{]},
where the real-valued best-response map $\widehat{\mathbf{x}}_{\mathcal{C}}(\mathbf{x}^{n},\boldsymbol{{\tau}})$
is replaced with the complex-valued counterpart $\widehat{\mathbf{X}}_{\mathcal{C}}(\mathbf{X}^{n},\boldsymbol{{\tau}})\triangleq(\widehat{\mathbf{X}}_{\mathcal{C}_{i}}(\mathbf{X}^{n},\tau_{i}))_{i=1}^{I}$.
Convergence conditions read as in Theorems \ref{Theorem_convergence_Jacobi}-\ref{Theorem_convergence_GS},
under the following natural changes: i) $L_{\nabla U}$ becomes $L_{\nabla U}^{\mathbb{\mathbb{C}}}$;
ii) the condition $\mathbf{H}_{i}(\mathbf{x})-c_{H_{i}}\mathbf{I}\succeq\mathbf{0}$
for all $\mathbf{x}\in\mathcal{K}$ reads as $\mathcal{H}_{i}(\mathbf{X})-c_{\mathcal{H}_{i}}\mathbf{I}\overset{\mathcal{A}}{\succeq}\mathbf{0}$,
for all $\mathbf{X}\in\mathcal{X}$; and iii) in the constant $c_{\boldsymbol{{\tau}}}$
defined in (\ref{eq:c_tau}) $c_{\tau_{i}}\left(\mathbf{x}\right)$
is replaced with $c_{\tau_{i}}(\mathbf{X})$, \textcolor{black}{where
$c_{\tau_{i}}(\mathbf{X})$ is the constant of strong convexity of
$\widetilde{f}_{\mathcal{C}_{i}}(\bullet;\mathbf{X})$ \cite{Scutari-Facchinei-Pang-Palomar_IT_PI}:}\vspace{-0.1cm}\textcolor{blue}{{}
}\textcolor{black}{
\[
\begin{array}{l}
\left\langle \mathbf{Z}_{i}-\mathbf{W}_{i},\nabla_{\mathbf{X}_{i}^{\ast}}\widetilde{f}_{\mathcal{C}_{i}}(\mathbf{Z}_{i};\mathbf{X})-\nabla_{\mathbf{X}_{i}^{\ast}}\widetilde{f}_{\mathcal{C}_{i}}(\mathbf{W}_{i};\mathbf{X})\right\rangle \\
\qquad\qquad\geq c_{\tau_{i}}\left(\mathbf{X}\right)\,\left\Vert \mathbf{Z}_{i}-\mathbf{W}_{i}\right\Vert _{F}^{2},\quad\forall\mathbf{Z}_{i},\mathbf{W}_{i}\in\mathcal{X}_{i}.
\end{array}
\]
}\textcolor{blue}{{} }\vspace{-0.3cm}

\section{Extensions and related works \label{sec:Extensions-and-Generalizations}}

The key idea in the proposed SCA schemes, e.g., (\ref{eq:best-reponse_c_i_convex}),
is to convexify the nonconvex part of $U$ via partial linearization
of $\sum_{j\in\mathcal{C}_{-i}}f_{j}(\mathbf{X})$, resulting in the
term $\left\langle \mathbf{\Pi}_{\mathcal{C}_{i}}(\mathbf{X}^{n}),\,\Delta\mathbf{X}_{i}\right\rangle $.
In the same spirit of \cite{MarksWright78,Chiang-WeiTan-PalomarOneil-Julian_ITWC-GP,RazaviyaynHongLuo_subOct12arxiv},
it is not difficult to show that one can generalize this idea and
replace the linear term $\left\langle \mathbf{\Pi}_{\mathcal{C}_{i}}(\mathbf{X}^{n}),\,\Delta\mathbf{X}_{i}\right\rangle $
in (\ref{eq:convex_approx_of_fi_on_Ci_convex}) with a nonlinear scalar
function $\Pi_{\mathcal{C}_{i}}(\bullet;\mathbf{X}^{n}):\mathcal{X}_{i}\ni\mathbf{X}_{i}\mapsto\Pi_{\mathcal{C}_{i}}(\mathbf{X}_{i};\mathbf{X}^{n})$.
All the results presented so far are still valid provided that $\Pi_{\mathcal{C}_{i}}(\bullet;\mathbf{X}^{n})$
enjoys the following properties: for all $\mathbf{X}^{n}\in\mathcal{X}$,\smallskip

\noindent P1) $\Pi_{\mathcal{C}_{i}}(\bullet;\mathbf{X}^{n})$ is
$\mathbb{R}$-continuously differentiable on $\mathcal{X}_{i}$; 

\noindent P2) $\nabla_{\mathbf{X}_{i}^{\ast}}\Pi_{\mathcal{C}_{i}}(\mathbf{X}_{i}^{n};\mathbf{X}^{n})=\sum_{j\in\mathcal{C}_{-i}}\nabla_{\mathbf{X}_{i}^{\ast}}f_{j}(\mathbf{X}^{n})$;

\noindent P3) $\nabla_{\mathbf{X}_{i}^{\ast}}\Pi_{\mathcal{C}_{i}}(\mathbf{X}_{i}^{n};\bullet)$
is uniformly Lipschitz on $\mathcal{X}$;

\noindent P4) $\Pi_{\mathcal{C}_{i}}(\mathbf{X}_{i};\mathbf{X}^{n})$
is continuous in $(\mathbf{X}_{i};\mathbf{X}^{n})\in\mathcal{X}_{i}\times\mathcal{X}.$\smallskip

Similar conditions can be written in the real case for the nonlinear
function ${\pi}_{\mathcal{C}_{i}}(\bullet;\mathbf{x}^{n}):\mathcal{K}_{i}\ni\mathbf{x}_{i}\mapsto\pi_{\mathcal{C}_{i}}(\mathbf{x}_{i};\mathbf{x}^{n})$
replacing the linear pricing $\boldsymbol{{\pi}}_{\mathcal{C}_{i}}^{T}\mathbf{x}_{i}$.
It is interesting to compare P1-P3 with conditions in \cite{MarksWright78,Chiang-WeiTan-PalomarOneil-Julian_ITWC-GP,RazaviyaynHongLuo_subOct12arxiv}.
First of all, our conditions do not require that the approximation
function is a global upper bound of the original sum-utility function,
a constraint that remains elusive for sum-utility problems with no
special structure. Second, even when the aforementioned constraint
can be met, it is not always guaranteed that the resulting convex
subproblems are decomposable across the users, implying that a centralized
implementation might be required. Third, SCA algorithms \cite{MarksWright78,Chiang-WeiTan-PalomarOneil-Julian_ITWC-GP,RazaviyaynHongLuo_subOct12arxiv},
even when distributed, are generally \emph{sequential} schemes (unless
the sum-utility has a special structure). On the contrary, the algorithms
proposed in this paper do not suffer from any of the above drawbacks,
which enlarges substantially the class of (large scale) nonconvex
problems solvable using our framework.\vspace{-0.1cm}

\section{Applications and Numerical Results\label{sec:Applications}}

In this section, we customize the proposed decomposition framework
to the SISO and MIMO sum-rate maximization problems introduced in
(\ref{eq:SISO_formulation}) and (\ref{eq:MIMO_formulation_rate}),
respectively, and compare the resulting new algorithms with state-of-the-art
schemes \cite{Papandriopoulos_EvansIT09,SchmidtShiBerryHonigUtschick-SPMag,KimGiannakisIT11,ShiRazaviyaynLuoHe-TSP11,RazaviyaynHongLuo_subOct12arxiv}.
Quite interestingly, our algorithms are shown to outperform current
schemes, in terms of convergence speed and computational effort, while
reaching the same sum-rate. It is worth mentioning that this was not
obvious at all, because algorithms in \cite{Papandriopoulos_EvansIT09,SchmidtShiBerryHonigUtschick-SPMag,KimGiannakisIT11,ShiRazaviyaynLuoHe-TSP11,RazaviyaynHongLuo_subOct12arxiv}
are ad-hoc schemes for the sum-rate problem, whereas our framework
has been introduced for general sum-utility problems.  \vspace{-0.2cm}

\subsection{Sum-Rate Maximization over SISO ICs\label{sub:Sum-Rate-Maximization-SISO}}

Consider the social problem (\ref{eq:SISO_formulation}), with $f_{i}(x)=w_{i}\, x$,
where $w_{i}$ are positive given weights; to avoid redundant constraints,
let also assume w.l.o.g. that all the columns of $\mathbf{W}_{i}$
are linearly independent. We describe next two alternative decompositions
for (\ref{eq:SISO_formulation}) corresponding to differ choices of
$\mathcal{I}_{f}$ and $\{\mathcal{C}_{i}\}$.

\subsubsection{Decomposition \#1$-$Pricing Algorithms}

Since each user's rate $r_{i}(\mathbf{p}_{i},\mathbf{p}_{-i})$ is
concave in $\mathbf{p}_{i}\in\mathcal{P}_{i}$, a natural choice is
$\mathcal{I}_{f}=\mathcal{I}$ and $\mathcal{C}_{i}=\{i\}$, which
leads to the following class of strongly concave subproblems {[}cf.
(\ref{eq:convex_approx_of_fi_on_Ci}){]}: given $\mathbf{p}^{n}=(\mathbf{p}_{i}^{n})_{i=1}^{I}$
and choosing $\mathbf{H}_{i}(\mathbf{p}^{n})=\mathbf{I}$, the best-response
of user $i$ is \vspace{-0.1cm} 
\[
\hspace{-0.2cm}\begin{array}{l}
\hat{\mathbf{p}}{}_{i}(\mathbf{p}^{n})\triangleq\\
\underset{\begin{array}{l}
\mathbf{p}_{i}\in\mathcal{P}_{i}\end{array}}{\mbox{argmax}}\!\!\!\!\left\{ w_{i}\, r_{i}(\mathbf{p}_{i},\mathbf{p}_{-i}^{n})-\boldsymbol{\pi}_{i}(\mathbf{p}^{n})^{T}\mathbf{p}_{i}-\frac{{\tau_{i}}}{2}\left\Vert \mathbf{p}_{i}-\mathbf{p}_{i}^{n}\right\Vert ^{2}\right\} 
\end{array}\vspace{-0.1cm},
\]
where $\boldsymbol{\pi}_{i}(\mathbf{p}^{n})\triangleq(\pi_{ik}(\mathbf{p}^{n}))_{k=1}^{N}$
is the pricing factor, given by 
\begin{equation}
\hspace{-0.1cm}\pi_{i,k}(\mathbf{p}^{n})\triangleq-\sum_{j\in\mathcal{N}_{i}}w_{j}\,|H_{ji}\left(k\right)|^{2}\,\frac{\texttt{{snr}}_{jk}^{n}}{(1+\texttt{{snr}}_{jk}^{n})\cdot\texttt{{mui}}_{jk}^{n}};\label{eq:price_k}
\end{equation}
$\mathcal{N}_{i}$ denotes the set of neighbors of user $i$, i.e.,
the set of users $j$'s which user $i$ interferers with; and $\texttt{{snr}}_{jk}^{n}$
and $\texttt{{mui}}_{jk}^{n}$ are the SINR and the multiuser interference-plus-noise
power experienced by user $j$, generated by the power profile $\mathbf{p}^{n}$:
\[
\texttt{{snr}}_{jk}^{n}\triangleq\dfrac{{|H_{jj}\left(k\right)|^{2}}p_{jk}^{n}}{\texttt{{mui}}_{jk}^{n}},\,\,\texttt{{mui}}_{jk}^{n}\triangleq\sigma_{jk}^{2}+\sum_{i\neq j}{|H_{ji}\left(k\right)|^{2}}p_{ik}^{n}.\vspace{-0.2cm}
\]
 The best-response $\hat{\mathbf{p}}{}_{i}(\mathbf{p}^{n})$ can be
computed in closed form (up to the multipliers associated with the
inequality constraints in $\mathcal{P}_{i}$) according to the following
multi-level waterfilling-like expression \cite{Scutari-Facchinei-Pang-Palomar_IT_PI}:
\begin{equation}
\!\!\begin{array}{l}
\hat{\mathbf{p}}{}_{i}(\mathbf{p}^{n})\triangleq\left[\dfrac{{1}}{2}\,\mathbf{p}_{i}^{n}\circ\left(\mathbf{1}-(\texttt{\textbf{{snr}}}_{i}^{n})^{-1}\right)+\right.\\
\left.\!\!-\dfrac{{1}}{2\,\tau_{i}}\left(\boldsymbol{{\tilde{\mu}}}_{i}\!-\!\sqrt{\left[\boldsymbol{{\tilde{\mu}}}_{i}-\tau_{i}\,\mathbf{p}_{i}^{n}\circ\left(\mathbf{1}+(\texttt{\textbf{{snr}}}_{i}^{n})^{-1}\right)\right]^{2}+4\tau_{i}w_{i}\mathbf{1}}\right)\right]^{+}
\end{array}\label{eq:WF_power}
\end{equation}
where $\circ$ denotes the Hadamard product, $(\texttt{\textbf{{snr}}}_{i}^{n})^{-1}\triangleq(1/\texttt{{snr}}_{ik}^{n})_{k=1}^{N}$
and $\boldsymbol{{\tilde{\mu}}}_{i}\triangleq\boldsymbol{\pi}_{i}(\mathbf{p}^{n})+\mathbf{W}_{i}^{T}\boldsymbol{{\mu}}_{i}$,
with the multiplier vector $\boldsymbol{{\mu}}_{i}$ chosen to satisfy
the nonlinear complementarity condition (CC) $\mathbf{0}\leq\boldsymbol{{\mu}}_{i}\perp\mathbf{I}_{i}^{\max}-\mathbf{W}_{i}\hat{\mathbf{p}}{}_{i}(\mathbf{p}^{n})\geq\mathbf{0}$.
The optimal $\boldsymbol{{\mu}}_{i}$ satisfying the CC can be efficiently
computed (in a finite number of steps) using a multiple nested bisection
method as described in \cite[Alg. 6]{Scutari-Facchinei-Pang-Palomar_IT_PI};
we omit the details because of the space limitation. Note that, in
the presence of the power budget constraint only (as in \cite{Papandriopoulos_EvansIT09,SchmidtShiBerryHonigUtschick-SPMag,ShiRazaviyaynLuoHe-TSP11}),
$\boldsymbol{{\mu}}_{i}$ reduces to a scalar quantity $\mu_{i}$
such that $0\leq{\mu}_{i}\perp P_{i}-\mathbf{1}^{T}\hat{\mathbf{p}}{}_{i}(\mathbf{p}^{n})\geq0$,
whose solution can be obtained using the classical bisection algorithms
(or the methods in \cite{Palomar-Fonollosa_SP05-WF-algo}). 

Given $\hat{\mathbf{p}}{}_{i}(\mathbf{p}^{n})$, one can now use any
of the algorithms introduced in Sec. \ref{sub:Jacobi-Distributed-Pricing}.
For instance, a good candidate is the exact Jacobi scheme with diminishing
step-size (Algorithm \ref{alg:PJA}), whose convergence is guaranteed
if, e.g., rules in (\ref{eq:step-size_1}) or (\ref{eq:step-size_2})
are used for the sequence $\{\gamma^{n}\}$ (Theorem \ref{Theorem_convergence_Jacobi}).
Note that the proposed algorithm is fairly distributed. Indeed, given
the interference generated by the other users {[}and thus the MUI
coefficients $\texttt{{mui}}_{jk}^{n}${]} and the current interference
price $\boldsymbol{\pi}_{i}(\mathbf{p}^{n})$, each user can efficiently
and locally compute the optimal power allocation $\hat{\mathbf{p}}{}_{i}(\mathbf{p}^{n})$
via the waterfilling-like expression (\ref{eq:WF_power}). The estimation
of the prices $\pi_{ik}(\mathbf{p}^{n})$ requires however some signaling
among nearby users. Interestingly, the pricing expression in (\ref{eq:price_k})
as well as the signaling overhead necessary to compute it coincides
with that in \cite{SchmidtShiBerryHonigUtschick-SPMag}. But, because
of their sequential nature, algorithms in \cite{SchmidtShiBerryHonigUtschick-SPMag}
require more CSI exchange in the network then our \emph{simultaneous}
schemes.

\subsubsection{Decomposition \#2$-$DC Algorithms}

An alternative class of algorithms for the sum-rate maximization problem
under consideration can be obtained exploring the D.C. nature of the
rate functions (cf. Example $\#$4 in Sec. \ref{sub:Jacobi-based-schemes}).
The sum-rate can indeed be decomposed as the sum of a concave and
convex function, namely $U(\mathbf{p})=f_{1}(\mathbf{p})+f_{2}(\mathbf{p})$,
where 
\begin{eqnarray*}
f_{1}(\mathbf{p}) & \triangleq & \sum_{i}w_{i}{\displaystyle {\sum_{k}}}\log(\sigma_{i,k}^{2}+\sum_{j}\left|H_{ij}\left(k\right)\right|^{2}p_{jk})\\
f_{2}(\mathbf{p}) & \triangleq & -\sum_{i}w_{i}\sum_{k}\log(\sigma_{i,k}^{2}+\sum_{j\neq i}\left|H_{ij}\left(k\right)\right|^{2}p_{jk}),
\end{eqnarray*}
which is an instance of (\ref{eq:DC_programming}) with $\mathcal{I}_{f}=\{1,2\}$.
A natural choice of $\mathcal{C}_{i}$ is then $\mathcal{C}_{i}=\{1\}$
for all $i\in\mathcal{I}$, resulting in the best-response:\vspace{-0.1cm}{\small{
\[
\!\!\widetilde{{\mathbf{p}}}_{i}(\mathbf{p}^{n})\!\triangleq\!\!\!\underset{\begin{array}{l}
\mathbf{p}_{i}\in\mathcal{P}_{i}\end{array}}{\mbox{argmax}}\!\!\!\!\left\{ f_{1}(\mathbf{p}_{i},\mathbf{p}_{-i}^{n})-{\boldsymbol{\pi}}_{i}(\mathbf{p}^{n})^{T}\mathbf{p}_{i}-\frac{{\tau_{i}}}{2}\left\Vert \mathbf{p}_{i}-\mathbf{p}_{i}^{n}\right\Vert ^{2}\right\} ,
\]
}}where ${\boldsymbol{\pi}}_{i}(\mathbf{p}^{n})\triangleq({\pi}_{ik}(\mathbf{p}^{n}))_{k=1}^{N}$,
with\vspace{-0.1cm} 
\begin{equation}
\pi_{i,k}(\mathbf{p}^{n})\triangleq-\sum_{j\in\mathcal{N}_{i}}w_{j}\,|H_{ji}\left(k\right)|^{2}\,\frac{1}{\texttt{{mui}}_{jk}^{n}}.\vspace{-0.2cm}\label{eq:pricing_DC}
\end{equation}
We remark that the best-response $\widetilde{{\mathbf{p}}}_{i}(\mathbf{p}^{n})$
can be efficiently computed by a fixed-point iterate, in the same
spirit of \cite{Papandriopoulos_EvansIT09}; we omit the details because
of the space limitation. Note that the communication overhead to compute
the prices (\ref{eq:price_k}) and (\ref{eq:pricing_DC}) is the same,
but the computation of $\widetilde{{\mathbf{p}}}_{i}(\mathbf{p}^{n})$
requires more CSI exchange in the network than that of $\hat{\mathbf{p}}{}_{i}(\mathbf{p}^{n})$,
since each user $i$ also needs to estimate the cross-channels $\{\left|H_{ji}\left(k\right)\right|^{2}\}_{j\in\mathcal{N}_{i}}$. 

\noindent \textbf{Numerical Example}\emph{. }We compare now Algorithm
1 based on the best-response $\hat{\mathbf{p}}{}_{i}(\mathbf{p}^{n})$
in (\ref{eq:WF_power}) (termed SJBR), with those proposed in \cite{Papandriopoulos_EvansIT09}
{[}termed SCALE and SCALE one-step, the latter being a simplified
version of SCALE where instead of solving the fixed-point equation
(16) in \cite{Papandriopoulos_EvansIT09}, only one iteration of (16)
is performed{]}, \cite{SchmidtShiBerryHonigUtschick-SPMag} (termed
MDP), \cite{ShiRazaviyaynLuoHe-TSP11} (termed WMMSE). Since in the
aforementioned papers only power budget constraints can be dealt with,
to allow the comparison, we simplified the sum-rate maximization problem
described above and considered only power budget constraints (and
all $w_{i}=1$). We assume the same power budget $P_{i}=P$, noise
variances $\sigma_{ik}^{2}=\sigma^{2}$, and $\texttt{{snr}}=P/\sigma^{2}=3$dB
for all the users. We simulated SISO frequency channels with $N=64$
subcarriers; the channels are generated as FIR filters of order $L=10$,
whose taps are i.i.d. Gaussian random variables with zero mean and
variance $1/(d_{ij}^{3}(L+1)^{2})$, where $d_{ij}$ is the distance
between the transmitter $j$ and the receiver $i$. All the algorithms
are initialized by choosing the uniform power allocation, and are
terminated when (the absolute value) of the sum-utility error in two
consecutive rounds becomes smaller than $1e$-$6$. The accuracy in
the bisection loops (required by all methods) is set to $1e$-$7$.
In our algorithm, we used rule (\ref{eq:step-size_1}) with $\epsilon=1e$-$2$
and set all $\tau_{i}=0$. In Fig. \ref{fig1}, we plot the average
number of iterations required by the aforementioned algorithms to
converge versus the number of users; the average is taken over $100$
independent channel realizations; we set $d_{ij}/d_{ii}=3$ and $d_{ij}=d_{ji}$
and $d_{ii}=d_{jj}$ for all $i$ and $j\neq i$. As benchmark, we
also plot two instances of proximal conditional gradient algorithms
\cite{Bertsekas_Book-Parallel-Comp}, which can be interpreted as
special cases of our SJBR with $\mathcal{C}_{i}=\emptyset$ for all
$i\in\mathcal{I}$ (cf. Ex. \#1 in Sec. \ref{sub:Jacobi-based-schemes}).
In one instance {[}termed Gradient (SJBR tuning){]} we set the free
parameters $\tau_{i}$ and $\epsilon$ as in SJBR, whereas in the
other one {[}termed Gradient (opt. tuning){]} we chose $\tau_{i}=50$
for all $i\in\mathcal{I}$ and $\epsilon=1e$-$2$, which leads experimentally
to the fastest behavior of the gradient algorithm. 
\begin{figure}
\vspace{-0.2cm}
\includegraphics[height=6.3cm]{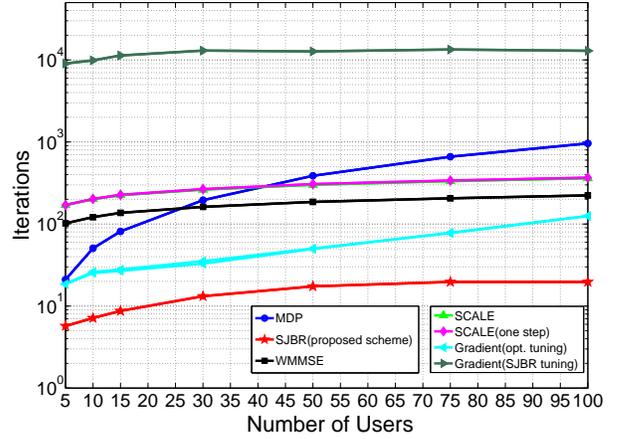}\vspace{-0.1cm}
 \caption{\textcolor{black}{\footnotesize{Average number of iterations versus
number of users in SISO frequency-selective ICs. Note that all algorithms
are simultaneous except MDP; this means that, at each iteration, in
MDP there is only one user updating his strategy, whereas in the other
algorithms all users do so). }}}
\vspace{-0.5cm}

{\small{\label{fig1}}} 
\end{figure}

All the algorithms reach the same average sum-rate (that thus is not
reported here, see \textcolor{black}{\cite{Song_Report13}}), but
their convergence behavior is quite different. The figure clearly
shows that our SJBR outperforms all the others (note that SCALE, WMMSE,
and the proximal gradient are also simultaneous-based schemes). For
instance, the gap with the WMMSE is about one order of magnitude,
for all the network sizes considered in the experiment, while the
gap with MDP is up to three orders of magnitude. The good behavior
of our scheme has been observed also for other choices of $d_{ij}/d_{ii}$,
termination tolerances, and step-size rules; we cannot present here
more experiments because of space limitation; we refer the interested
reader to the technical report\textcolor{black}{{} \cite{Song_Report13}
}for more numerical results. Note that SJBR, SCALE one-step, WMMSE,
MDP, and gradient schemes have similar per-user computational complexity,
whereas SCALE is much more demanding and is not appealing for a real-time
implementation. Therefore, Fig. \ref{fig1} provides also a rough
indication of the per-user cpu time of SJBR, SCALE one-step, WMMSE,
and gradient algorithms.

It is also interesting to compare the proposed algorithm with gradient
schemes. A first natural question is whether the partial linearization
(as performed in SJBR) really improves the convergence speed of the
algorithm. The answer is given by the comparison in Fig. \ref{fig1}
between SJBR and ``Gradient (SJBR tuning)''. One can see that, under
the same choice of $\{\gamma^{n}\}$ and $(\tau_{i})_{i=1}^{I}$,
the former is almost three order of magnitude faster then the latter,
for all the network sizes considered in the experiment. If an independent,
ad hoc tuning of $\{\gamma^{n}\}$ and $(\tau_{i})_{i=1}^{I}$ is
performed for the gradient algorithm, the gap reduces up to one order
of magnitude, still in favor of SJBR. This result supports the intuition
motivating this work: preserving the structure of the problem via
a partial linearization can significantly improve the convergence
speed of the algorithm. 

The comparison with gradient algorithms also reveals a well-known
issue of these schemes: the convergence behavior strongly depends
on the choice of the step-size sequence $\{\gamma^{n}\}$ and the
proximal gains $\tau_{i}$. It is then natural to ask whether also
the proposed algorithms suffer from the same drawback. To answer this
question, in Fig. \ref{fig2} we compare the convergence behavior
of the proximal condition gradient algorithm with that of SJBR, using
the step-size rule (\ref{eq:step-size_1}), but changing the free
parameter $\epsilon\in(0,1)$ by several orders of magnitude. For
gradient schemes, we considered two choices of $\tau_{i}$, namely:
$\tau_{i}=0$ and $\tau_{i}=50$ (as in Fig. \ref{fig1}); the latter
resulting in the experimentally fastest behavior of gradient schemes
(see Fig. \ref{fig1}). More specifically, in Fig. \ref{fig2}, we
plot the average number of iterations needed to reach convergence
within the accuracy of $1e$-$6$ versus $\epsilon\in(0,1)$, for
different number of users (the rest of the setting is as in Fig. \ref{fig1}).
The figure clearly shows that, differently from gradient algorithms,
the convergence behavior of our scheme appears to be almost independent
of the choice of $\epsilon$. This is a very desirable feature that
lets one avoid the expensive and difficult tuning of the step-size,
thus making the proposed algorithms a very good candidate in many
applications. We remark one more time that the gradient method is
very sensitive to the choice of parameters; indeed, based on further
simulations that we do not report here for lack of space, the behavior
of the gradient method is very sensitive to the number of users and
characteristics of the network (SNR, pair distances, etc...) and its
optimal behavior requires different tunings of parameters each time.
\begin{figure}[h]
\vspace{-0.3cm}
\includegraphics[height=6.5cm]{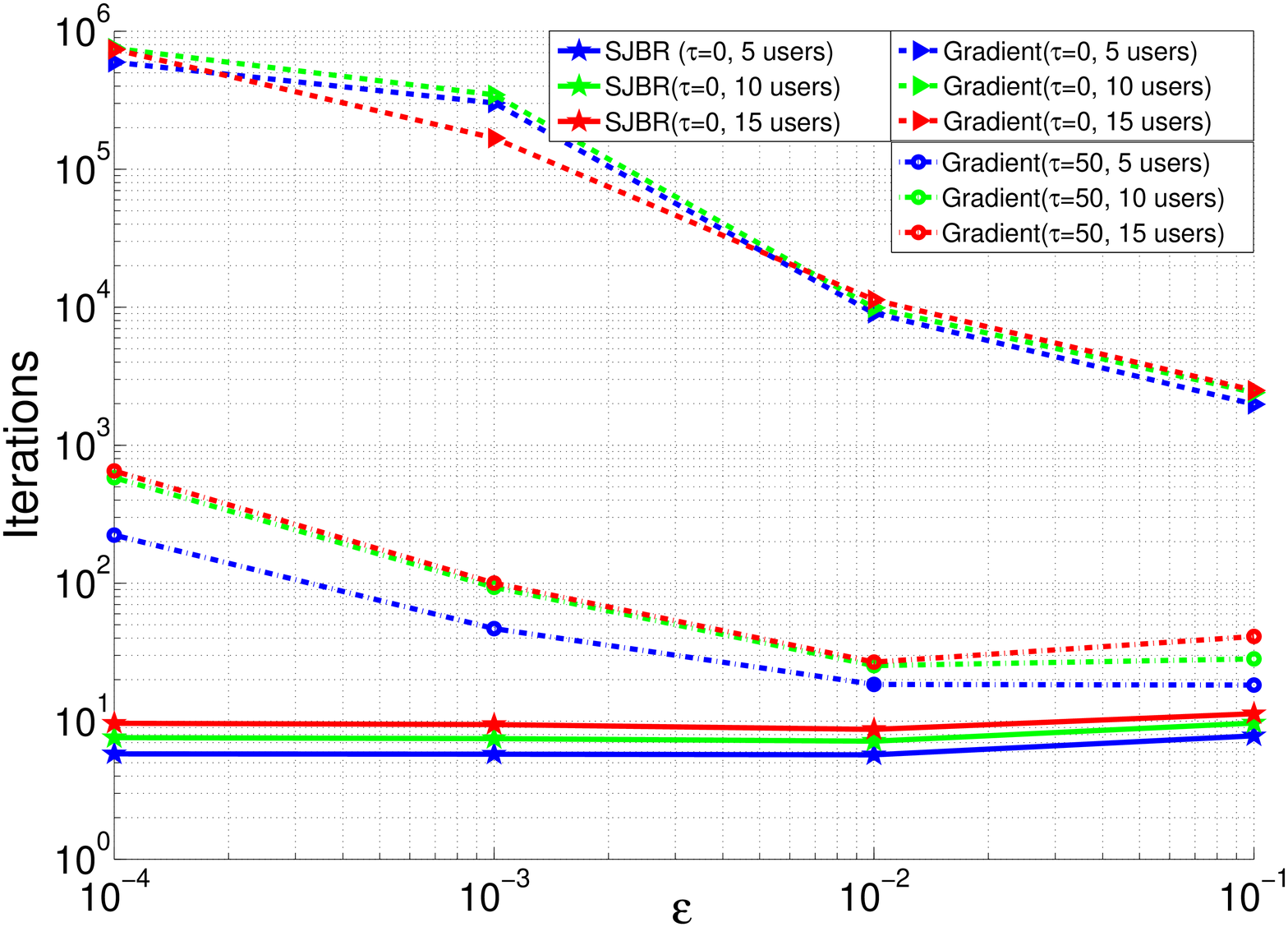}\vspace{-0.2cm}
 \caption{\textcolor{black}{\footnotesize{Proximal conditional gradient algorithms
versus SJBR: Average number of iterations versus $\epsilon\in(0,1)$
{[}cf. (\ref{eq:step-size_1}){]}. }}}
\vspace{-0.4cm}

{\small{\label{fig2}}} 
\end{figure}

\subsection{Sum-Rate Maximization over MIMO ICs}

Let us focus now on the MIMO formulation (\ref{eq:MIMO_formulation_rate}),
assuming $f_{i}(x)=w_{i}\, x$, with $w_{i}>0$.

\subsubsection{Decomposition \#1: Pricing Algorithms}

Choosing $I_{f}=I$, $\mathcal{C}_{i}=\{i\}$, and $\mathcal{H}_{i}(\mathbf{Q}^{n})=\mathbf{I}$,
the best-response of user $i$ is

\begin{equation}
\hspace{-0.2cm}\begin{array}{l}
\hat{\mathbf{Q}}_{i}(\mathbf{Q}^{n},\tau_{i})\triangleq\!\!\!\!\underset{\begin{array}{l}
\mathbf{Q}_{i}\in\mathcal{Q}_{i}\end{array}}{\mbox{argmax}}\!\!\!\!\left\{ w_{i}\, r_{i}(\mathbf{Q}_{i},\mathbf{Q}_{-i}^{n})-\left\langle \mathbf{\Pi}_{i}(\mathbf{Q}^{n}),\mathbf{Q}_{i}-\mathbf{Q}_{i}^{n}\right\rangle \right.\vspace{-0.3cm}\\
\left.\qquad\quad\quad\qquad\qquad\qquad-\tau_{i}\left\Vert \mathbf{Q}_{i}-\mathbf{Q}_{i}^{n}\right\Vert _{F}^{2}\right\} 
\end{array}\label{eq:best-response_MIMO}
\end{equation}
with 
\[
\mathbf{\Pi}_{i}(\mathbf{Q}^{n})\triangleq\sum_{j\in\mathcal{N}_{i}}w_{j}\,\mathbf{H}_{ji}^{H}\widetilde{\mathbf{R}}_{j}(\mathbf{Q}_{-j}^{n})\mathbf{H}_{ji},
\]
where $\mathcal{N}_{i}$ is defined as in the SISO case, and 
\[
\widetilde{\mathbf{R}}_{j}(\mathbf{Q}_{-j}^{n})\triangleq\mathbf{R}_{j}(\mathbf{Q}_{-j}^{n})^{-1}-(\mathbf{R}_{j}(\mathbf{Q}_{-j}^{n})+\mathbf{H}_{jj}\mathbf{Q}_{j}^{n}\mathbf{H}_{jj}^{H})^{-1}.
\]

Note that, once the price matrix $\mathbf{\Pi}_{i}(\mathbf{Q}^{n})$
is given, the best-response $\hat{\mathbf{Q}}_{i}(\mathbf{Q}^{n},\tau_{i})$
can be computed locally by each user solving a convex optimization
problem. Moreover, for some specific structures of the feasible sets
$\mathcal{Q}_{i}$, the case of full-column rank channel matrices
$\mathbf{H}_{i}$, and $\tau_{i}=0$, a solution in closed form (up
to the multipliers associated with the power budget constraints) is
also available \cite{KimGiannakisIT11}. Given $\hat{\mathbf{Q}}_{i}(\mathbf{Q}^{n},\tau_{i})$,
one can now use any of the algorithms introduced in Sec. \ref{sec:The-Complex-Case}.
\textcolor{black}{To the best of our knowledge, our schemes are the
first class of }\textcolor{black}{\emph{best-response Jacobi}}\textcolor{black}{{}
(inexact) algorithms for MIMO IC systems based on }\textcolor{black}{\emph{pricing}}\textcolor{black}{{}
with provable convergence.\smallskip}

\noindent \emph{Complexity Analysis and Message Exchange}. It is
interesting to compare the computational complexity and signaling
(i.e., message exchange) of our algorithms, e.g., Algorithm 1 based
on the best-response $\hat{\mathbf{Q}}_{i}(\mathbf{Q}^{n},\tau_{i})$
(termed MIMO-SJBR) with those of the schemes proposed in the literature
for a similar problem, namely the MIMO-MDP \cite{SchmidtShiBerryHonigUtschick-SPMag,KimGiannakisIT11},
and the MIMO-WMMSE \cite{ShiRazaviyaynLuoHe-TSP11}. We assume that
all channel matrices $\mathbf{H}_{ii}$'s are full-column rank, and
set $\tau_{i}=0$ in (\ref{eq:best-response_MIMO}). For the purpose
of complexity analysis, since all algorithms include a similar bisection
step which generally takes few iterations, we will ignore this step
in the computation of the complexity (as in \cite{ShiRazaviyaynLuoHe-TSP11}).
Also, WMMSE and SJBR are simultaneous schemes, while MDP is sequential;
we then compare the algorithms by given the \emph{per-round} \emph{complexity},
where one round means one update of all users. Denoting by $n_{T}$
(resp. $n_{R}$) the number of antennas at each transmitter (resp.
receiver), the\emph{ }computational complexity of the algorithms is:\smallskip

\noindent $\bullet$ MIMO-MDP: $\mathcal{O}\!\left(I^{2}(n_{T}n_{R}^{2}+n_{T}^{2}n_{R}+n_{R}^{3})+I\, n_{T}^{3}\right)$ 

\noindent $\bullet$ MIMO-WMMSE: $\!\mathcal{O}\!\left(I^{2}(n_{T}n_{R}^{2}+n_{T}^{2}n_{R}+n_{T}^{3})+I\, n_{R}^{3}\right)\!$
\cite{ShiRazaviyaynLuoHe-TSP11} 

\noindent $\bullet$ MIMO-SJBR: $\mathcal{O}\!\left(I^{2}(n_{T}n_{R}^{2}+n_{T}^{2}n_{R})+I(n_{T}^{3}+n_{R}^{3})\right)$.\smallskip 

It is clear that the complexity of the three algorithms is very similar,
and same in order in the case in which $n_{T}=n_{R}$($\triangleq n$),
given by $\mathcal{O}(I^{2}n^{3})$.

We remark that, in a real system, the MUI covariance matrices $\mathbf{R}_{i}(\mathbf{Q}_{-i})$
come from an estimation process. It is thus interesting to understand
how the complexity changes when the computation of $\mathbf{R}_{i}(\mathbf{Q}_{-i})$
from $\mathbf{R}_{n_{i}}+\sum_{j\neq i}\mathbf{H}_{ij}\mathbf{Q}_{j}\mathbf{H}_{ij}^{H}$
is not included in the analysis. We obtain the following:\smallskip

\noindent $\bullet$ MIMO-MDP: $\mathcal{O}\!\left(I^{2}(n_{T}n_{R}^{2}+n_{T}^{2}n_{R}+n_{R}^{3})+I\, n_{T}^{3}\right)$

\noindent $\bullet$ MIMO-WMMSE: $\mathcal{O}\!\left(I^{2}(n_{T}^{2}n_{R}+n_{T}^{3})+I(n_{R}^{3}+n_{T}n_{R}^{2})\right)$

\noindent $\bullet$ MIMO-SJBR: $\mathcal{O}\!\left(I^{2}(n_{T}n_{R}^{2}+n_{T}^{2}n_{R})+I(n_{T}^{3}+n_{R}^{3})\right)$.
\smallskip

Finally, if one is interested in the time necessary to complete one
iteration, it can be shown that it is proportional to the above complexity
divided by $I$.

As far as the communication overhead is concerned, the same remarks
we made about the schemes described in the SISO setting, apply also
here for the MIMO case. The only difference is that now the users
need to exchange a (pricing) matrix rather than a vector, resulting
in $\mathcal{O}(I^{2}\, n_{R}^{2})$ amount of message exchange per-iteration
for all the algorithms. 

\subsubsection{Decomposition \#2$-$WMMSE Algorithms}

In \cite{ShiRazaviyaynLuoHe-TSP11}, the authors showed that the MIMO
problem (\ref{eq:MIMO_formulation_rate}) (under power constraints
only) is equivalent to the following sum-MSE minimization: writing
$\mathbf{Q}_{i}=\mathbf{V}_{i}\mathbf{V}_{i}^{H}$, $\mathbf{V}\triangleq(\mathbf{V}_{i})_{i=1}^{I}$,
and introducing the auxiliary matrix variables $\mathbf{U}\triangleq(\mathbf{U}_{i})_{i=1}^{I}$,
$\mathbf{W}\triangleq(\mathbf{W})_{i=1}^{I}$, 
\begin{equation}
\!\!\!\!\!\begin{array}{ll}
\underset{\mathbf{W},\mathbf{U},\mathbf{V}}{\mbox{min}} & \!\!\!\!\! f\left(\mathbf{W},\mathbf{U},\mathbf{V}\right)\triangleq{\displaystyle {\sum_{i\in\mathcal{I}}}}w_{i}\left(\text{{tr}}\left(\mathbf{W}_{i}\,\mathbf{E}_{i}(\mathbf{U},\mathbf{V})\right)\!-\!\log\det(\mathbf{W}_{i})\right)\\[5pt]
\,\,\,\,\,\text{s.t.} & \!\!\!\!\!\text{{tr}}(\mathbf{V}_{i}\mathbf{V}_{i}^{H})\leq P_{i},\,\,\mathbf{W}_{i}\succeq\mathbf{0},\quad\forall i\in\mathcal{I},
\end{array}\label{eq:MSE_formulation}
\end{equation}
where $\mathbf{E}_{i}(\mathbf{U},\mathbf{V})$ is the MSE matrix at
the receiver $i$ (see (3) in \cite{ShiRazaviyaynLuoHe-TSP11}). The
formulation (\ref{eq:MSE_formulation}) has some desirable properties,
namely: i) $f\left(\mathbf{W},\mathbf{U},\mathbf{V}\right)$ is continuously
($\mathbb{R}$-)differentiable with Lipschitz continuous (conjugate)
gradient on the feasible set; ii) $f\left(\mathbf{W},\mathbf{U},\mathbf{V}\right)$
is convex in each variables $\mathbf{W}$, $\mathbf{U}$, $\mathbf{V}$;
iii) the minimization of $f\left(\mathbf{W},\mathbf{U},\mathbf{V}\right)$
w.r.t. to each $\mathbf{W}$, $\mathbf{U}$, $\mathbf{V}$ can be
performed in parallel by the users; and iv) the optimal solutions
of the individual minimizations are available in closed form, see
\cite{ShiRazaviyaynLuoHe-TSP11} for details. We will denote such
optimal solutions as $\hat{{\mathbf{W}}}_{i}(\mathbf{U},\mathbf{V})$,
$\hat{{\mathbf{U}}}_{i}(\mathbf{U},\mathbf{V})$, and $\hat{{\mathbf{V}}}_{i}(\mathbf{U},\mathbf{W})$,
for all $i\in\mathcal{I}$, where we made explicit the dependence
on the variables that are kept fixed. In \cite{ShiRazaviyaynLuoHe-TSP11}
the authors proposed to use the (Gauss-Seidel) block coordinate descent
method to solve (\ref{eq:MSE_formulation}), resulting in the so-called
MIMO-WMMSE algorithm. 

It is not difficult to see that the formulation (\ref{eq:MSE_formulation})
can be cast into our framework, resulting in the following best-response
mapping for each user $i$: $\hat{{\mathbf{X}}}_{i}\left(\mathbf{W}^{n},\mathbf{U}^{n},\mathbf{V}^{n}\right)\triangleq\left(\hat{{\mathbf{W}}}_{i}(\mathbf{U}^{n},\mathbf{V}^{n}),\hat{{\mathbf{U}}}_{i}(\mathbf{U}^{n},\mathbf{V}^{n}),\hat{{\mathbf{V}}}_{i}(\mathbf{U}^{n},\mathbf{W}^{n})\right)$.
We can then compute a stationary solution of (\ref{eq:MSE_formulation})
and thus (\ref{eq:MIMO_formulation_rate}) using any of the \emph{Jacobi}
algorithms introduced in the previous sections based on $\hat{{\mathbf{X}}}_{i}\left(\mathbf{W}^{n},\mathbf{U}^{n},\mathbf{V}^{n}\right)$
(or its inexact computation). Note that the computational complexity
as well as the communication overhead of such algorithms are roughly
the same of those of the MIMO-WMMSE  \cite{ShiRazaviyaynLuoHe-TSP11}.\smallskip

\noindent \textbf{Numerical Example}\emph{. }In Tables I and II we
compare the MIMO-SJBR, the MIMO-MDP \cite{SchmidtShiBerryHonigUtschick-SPMag,KimGiannakisIT11},
and the MIMO-WMMSE \cite{SchmidtShiBerryHonigUtschick-SPMag,KimGiannakisIT11},
in terms of average number of iterations required to reach convergence,
for different number of users, normalized distances $d\triangleq d_{ij}/d_{ii}$
(with $d_{ij}=d_{ji}$ and $d_{ii}=d_{jj}$ for all $i$ and $j\neq i$),
and termination accuracy (namely: $1e$-$3$ and $1e$-$6$). \textcolor{black}{We
considered the following setup. }All the transmitters/receivers are
equipped with $4$ antennas; we simulated uncorrelated fading channel
model, where the coefficients are Gaussian distributed with zero mean
and variance $1/d_{ij}^{3}$; and we set $\mathbf{R}_{n_{i}}=\sigma^{2}\mathbf{I}$
for all $i$, and $\texttt{{snr}}\triangleq P/\sigma^{2}=3$dB. \textcolor{black}{We
used the step-size rule (\ref{eq:step-size_1}) with $\epsilon=1e$-$5$
and $\tau_{i}=0$. }We computed the best-response (\ref{eq:best-response_MIMO})
using the closed form solution \cite{KimGiannakisIT11}. 

In our simulations all the algorithms reached the same average sum-rate.
Given the results in Tables I and II, the following comments are in
order. The proposed SJBR outperforms all the others schemes in terms
of iterations, while having similar (or even better) computational
complexity. Interestingly, the iteration gap with the other schemes
reduces with the distance and the termination accuracy. More specifically:
i) SJBR seems to be much faster than all the other schemes (about
one order of magnitude) when $d_{ij}/d_{ii}=3$ {[}say low interference
scenarios{]}, and just a bit faster (or comparable to MIMO-WMMSE)
when $d_{ij}/d_{ii}=1$ {[}say high interference scenarios{]}; and
ii) SJBR is much faster than all the others, if an high termination
accuracy is set (see Table I). Also, the convergence speed of SJBR
is not affected too much by the number of users. Finally, in our experiments,
we also observed that the performance of SJBR are not affected too
much by the choice of the parameter $\epsilon$ in the (\ref{eq:step-size_1}):
a change of $\epsilon$ of many orders of magnitude leads to a difference
in the average number of iterations which is within 5\%; we refer
the reader to\textcolor{black}{{} \cite{Song_Report13} }for details,
where one can also find a comparison of several other step-size rules.
We must stress however that MIMO-MDP and MIMO-WMMSE do not need any
tuning, which is an advantage with respect to our method. {\setlength{\tabcolsep}{4.0pt} \begin{table}[h]\scriptsize  \begin{tabular}{l|lll|lll|lll}  & \multicolumn{3}{|c|}{$\#\mbox{ of users}=10$} &  \multicolumn{3}{|c}{$\#\mbox{ of users}=50$}& \multicolumn{3}{|c}{$\#\mbox{ of users}=100$}\\   & d=1 & d=2 & d=3 & d=1 & d=2 & d=3 & d=1 & d=2 & d=3 \\  \hline  MDP & 1370.5 & 187 & 54.4 &  4148.5 & 1148 & 348 & 8818 & 1904 & 704\\  WMMSE &169.2 & 68.8 & 53.3 & 138.5 & 115.2 & 76.7& 154.3 & 126.9 & 103.2\\  JSBR & 169.2 & 24.3 & 6.9 & 115.2 & 34.3 & 9.3 & 114.3 & 28.4 & 9.7  \end{tabular} \caption{Average number of iterations (termination accuracy=$1e$-$6$)}  \end{table}}\vspace{-0.5cm}
   {\setlength{\tabcolsep}{4.0pt} \begin{table}[h]\scriptsize  \begin{tabular}{l|lll|lll|lll}  & \multicolumn{3}{|c|}{$\#\mbox{ of users}=10$} &  \multicolumn{3}{|c}{$\#\mbox{ of users}=50$}& \multicolumn{3}{|c}{$\#\mbox{ of users}=100$}\\   & d=1 & d=2 & d=3 & d=1 & d=2 & d=3 & d=1 & d=2 & d=3 \\  \hline  MDP &429.4& 74.3 & 32.8 &1739.5&  465.5 & 202 &3733& 882 & 442.6 \\  WMMSE & 51.6 & 19.2 & 14.7 & 59.6 & 24.9 & 16.3& 69.8 & 26.0 & 19.2\\  JSBR & 48.6 & 9.4 & 4.0 & 46.9 & 12.6 & 5.1 & 49.7 & 12 & 5.5  \end{tabular} \caption{Average number of iterations (termination accuracy=$1e$-$3$)}  \end{table}}\vspace{-0.5cm}

\section{Conclusion\label{sec:Conclusion}\vspace{-0.1cm}}

In this paper, we proposed a novel decomposition framework, based
on SCA, to compute stationary solutions of general nonconvex sum-utility
problems (including social functions of complex variables). The main
result is a new class of convergent \emph{distributed} \emph{Jacobi}
(inexact) best-response algorithms, where all users \emph{simultaneously}
solve (inexactly) a suitably convexified version of the original social
problem. Our framework contains as special cases many decomposition
methods already proposed in the literature, such as gradient algorithms,
and many block-coordinate descent schemes for convex functions. Finally,
we tested our methodology on some sum-rate maximization problems over
SISO/MIMO ICs; our experiments show that our algorithms are faster
than ad-hoc state-of-the-art methods while having the same (user)
computational complexity in the SISO case and similar (or better)
complexity in the MIMO case. Some interesting future directions of
this work are under investigation, e.g., how to adaptively choose
the step-size rule (so that no a-priori tuning is needed), and how
to generalize our framework to scenarios when only long-term channel
statistics are available.\vspace{-0.2cm}

\section*{\textcolor{black}{Acknowledgments}}

\textcolor{black}{The authors wish to thank the Associate Editor,
Prof. Anthony So, and the anonymous reviewers for their valuable comments.
The authors are also deeply grateful to Prof. Tom Luo, Wei-Cheng Liao,
and Yang Yang whose  comments contributed to improve the quality of
the paper.}

\textcolor{black}{The research of Scutari and Song is supported by
the grants NSF No. CNS-1218717 and NSF CAREER No. ECCS-1254739. The
research of Palomar is supported by the Hong Kong RGC 617810 research
grant. The research of Pang is supported by NSF grant No. CMMI 0969600
(awarded to the University of Illinois at Urbana-Champaign).} \vspace{-0.2cm}

\section*{Appendix}

For notational simplicity, in the following we will omit in each $\widehat{\mathbf{x}}_{\mathcal{C}_{i}}(\mathbf{y},\tau_{i})$
{[}and $\widehat{\mathbf{x}}_{\mathcal{C}}(\mathbf{y},\boldsymbol{{\tau}})${]}
the dependence on $\mathcal{C}_{i}$ and ${\tau}_{i}$, and write
$\widehat{\mathbf{x}}_{i}(\mathbf{y})$ {[}and $\widehat{\mathbf{x}}(\mathbf{y})${]};
also, we introduce $f_{\mathcal{C}_{i}}(\mathbf{x}_{i},\mathbf{x}_{-i})\triangleq\sum_{j\in\mathcal{C}_{i}}f_{j}(\mathbf{x}_{i},\mathbf{x}_{-i})$
and $f_{\mathcal{C}_{-i}}(\mathbf{x}_{i},\mathbf{x}_{-i})\triangleq\sum_{j\in\mathcal{C}_{-i}}f_{j}(\mathbf{x}_{i},\mathbf{x}_{-i})$.\vspace{-0.2cm}

\subsection{\label{sec:Appendix:-Proposition_x_map}Proof of Proposition \ref{Prop_x_y} }

\textcolor{black}{Before proving the proposition, let us introduce
the following intermediate result whose proof is a consequence of
assumptions A1-A3 and thus is omitted.}

\begin{lemma}\label{Lemma_f_x_y_properties} \textcolor{black}{Let
$\tilde{{f}}(\mathbf{x};\mathbf{y})\triangleq\sum_{i}\tilde{{f}}_{\mathcal{C}_{i}}(\mathbf{x}_{i};\mathbf{y})$,
with $\tilde{{f}}_{\mathcal{C}_{i}}(\mathbf{x}_{i};\mathbf{y})$ defined
in (\ref{eq:convex_approx_of_fi_on_Ci}). Then the following hold:}

\textcolor{blue}{\noindent}\textcolor{black}{(i) $\tilde{{f}}(\mathbf{\bullet};\mathbf{y})$
is uniformly strongly convex on $\mathcal{K}$ with constant $c_{\boldsymbol{{\tau}}}>0$,
i.e., 
\begin{equation}
\begin{array}{l}
\left(\mathbf{x}-\mathbf{w}\right)^{T}\left(\nabla_{\mathbf{x}}\tilde{{f}}\left(\mathbf{x};\mathbf{y}\right)-\nabla_{\mathbf{x}}\tilde{{f}}\left(\mathbf{w};\mathbf{y}\right)\right)\geq c_{{\boldsymbol{\tau}}}\left\Vert \mathbf{x}-\mathbf{w}\right\Vert ^{2}\end{array},\label{eq:strong_cvx_f_tilde}
\end{equation}
for all $\mathbf{x},\mathbf{w}\in\mathcal{K}$ and given $\mathbf{y}\in\mathcal{K}$;}

\textcolor{blue}{\noindent}\textcolor{black}{(ii) $\nabla_{\mathbf{x}}\tilde{{f}}(\mathbf{x};\mathbf{\bullet})$
is uniformly Lipschitz continuous on $\mathcal{K}$, i.e., there exists
a $0<L_{\nabla\tilde{{f}}}<\infty$ independent on $\mathbf{x}$ such
that 
\begin{equation}
\begin{array}{l}
\left\Vert \nabla_{\mathbf{x}}\tilde{{f}}\left(\mathbf{x};\mathbf{y}\right)-\nabla_{\mathbf{x}}\tilde{{f}}\left(\mathbf{x};\mathbf{w}\right)\right\Vert \end{array}\leq\, L_{\nabla\tilde{{f}}}\,\left\Vert \mathbf{y}-\mathbf{w}\right\Vert ,\label{eq:Lip_grad_L_f}
\end{equation}
for all $\mathbf{y},\mathbf{w}\in\mathcal{K}$ and given $\mathbf{x}\in\mathcal{K}$.}\end{lemma}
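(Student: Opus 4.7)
The plan is to verify both properties by unpacking the definition of $\widetilde{f}_{\mathcal{C}_i}$ in (\ref{eq:convex_approx_of_fi_on_Ci}) and exploiting the fact that $\widetilde{f}(\mathbf{x};\mathbf{y})=\sum_i \widetilde{f}_{\mathcal{C}_i}(\mathbf{x}_i;\mathbf{y})$ is \emph{separable} across the coordinate blocks $\mathbf{x}_i$ (with $\mathbf{y}$ treated as a parameter). Neither assertion is deep; the work consists of verifying that the constants can be chosen uniformly in $\mathbf{y}$.

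For part (i), the requirement (\ref{eq:c_tau_i}) tells us that each $\widetilde{f}_{\mathcal{C}_i}(\bullet;\mathbf{y})$ is strongly convex in $\mathbf{x}_i$ with a modulus at least $c_{\tau_i}>0$, so in particular $(\mathbf{x}_i-\mathbf{w}_i)^T(\nabla_{\mathbf{x}_i}\widetilde{f}_{\mathcal{C}_i}(\mathbf{x}_i;\mathbf{y})-\nabla_{\mathbf{x}_i}\widetilde{f}_{\mathcal{C}_i}(\mathbf{w}_i;\mathbf{y}))\ge c_{\tau_i}\|\mathbf{x}_i-\mathbf{w}_i\|^2$. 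Summing these $I$ block inequalities and lower-bounding $c_{\tau_i}$ by $c_{\boldsymbol{\tau}}=\min_i c_{\tau_i}$ yields (\ref{eq:strong_cvx_f_tilde}) directly, since the separable structure makes the cross-block inner products vanish.

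For part (ii), I would differentiate (\ref{eq:convex_approx_of_fi_on_Ci}) with respect to $\mathbf{x}_i$ to obtain the three addends $\nabla_{\mathbf{x}_i}f_{\mathcal{C}_i}(\mathbf{x}_i,\mathbf{y}_{-i})$, $\boldsymbol{\pi}_{\mathcal{C}_i}(\mathbf{y})$, and $\tau_i\mathbf{H}_i(\mathbf{y})(\mathbf{x}_i-\mathbf{y}_i)$, and argue that each is Lipschitz in $\mathbf{y}$ with a constant independent of $\mathbf{x}$. The first two are handled directly by A3: every $\nabla f_j$ is Lipschitz on $\mathcal{K}$ with constant $L_{\nabla f_j}$, so the partial-gradient block $\nabla_{\mathbf{x}_i}f_j(\mathbf{x}_i,\bullet)$ is Lipschitz in $\mathbf{y}_{-i}$, and the pricing term $\boldsymbol{\pi}_{\mathcal{C}_i}(\mathbf{y})=\sum_{j\in\mathcal{C}_{-i}}\nabla_{\mathbf{x}_i}f_j(\mathbf{y})$ in (\ref{eq:pricing_pi_C}) inherits a Lipschitz constant $\sum_{j\in\mathcal{C}_{-i}}L_{\nabla f_j}$. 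Since $\mathbf{x}_i$ enters only inside the gradient's argument, not as a multiplier, the Lipschitz constants here are automatically uniform in $\mathbf{x}$.

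The one step requiring care is the regularization term $\tau_i\mathbf{H}_i(\mathbf{y})(\mathbf{x}_i-\mathbf{y}_i)$, which depends on $\mathbf{y}$ both through $\mathbf{H}_i(\cdot)$ and through the translation $-\mathbf{y}_i$. A standard add-and-subtract split
\[
\mathbf{H}_i(\mathbf{y})(\mathbf{x}_i-\mathbf{y}_i)-\mathbf{H}_i(\mathbf{w})(\mathbf{x}_i-\mathbf{w}_i)=\bigl(\mathbf{H}_i(\mathbf{y})-\mathbf{H}_i(\mathbf{w})\bigr)(\mathbf{x}_i-\mathbf{y}_i)+\mathbf{H}_i(\mathbf{w})(\mathbf{w}_i-\mathbf{y}_i)
\]
reduces the task to bounding a Lipschitz matrix-valued function times a bounded vector, plus a bounded-operator term times $\mathbf{w}_i-\mathbf{y}_i$. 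This is where the main (modest) obstacle lies: it requires $\mathbf{H}_i(\cdot)$ to be bounded and Lipschitz on the relevant set, a property implicit in the paper's choice of ``nice'' $\mathbf{H}_i$ (e.g., the identity, a smooth Hessian approximation, etc.) and automatic on the compact lower-level set $\mathcal{L}(\mathbf{x}^0)$ of A4 on which the algorithmic analysis effectively takes place. Collecting the three contributions, using the triangle inequality, and summing over $i\in\mathcal{I}$ produces a single uniform constant $L_{\nabla\widetilde{f}}$ satisfying (\ref{eq:Lip_grad_L_f}).
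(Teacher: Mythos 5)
The paper omits the proof of this lemma entirely, asserting only that it is ``a consequence of assumptions A1--A3,'' so your argument --- blockwise strong convexity summed over the separable structure of $\tilde{f}(\bullet;\mathbf{y})=\sum_i\tilde{f}_{\mathcal{C}_i}(\mathbf{x}_i;\mathbf{y})$ for (i), and termwise Lipschitz bounds on the three addends $\nabla_{\mathbf{x}_i}f_{\mathcal{C}_i}(\mathbf{x}_i,\mathbf{y}_{-i})$, $\boldsymbol{\pi}_{\mathcal{C}_i}(\mathbf{y})$, and $\tau_i\mathbf{H}_i(\mathbf{y})(\mathbf{x}_i-\mathbf{y}_i)$ for (ii) --- is precisely the routine verification the authors intend, and it is correct. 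Your caveat about the regularization term is well taken and worth recording: part (ii) is not literally a consequence of A1--A3 alone, since the split $\bigl(\mathbf{H}_i(\mathbf{y})-\mathbf{H}_i(\mathbf{w})\bigr)(\mathbf{x}_i-\mathbf{y}_i)+\mathbf{H}_i(\mathbf{w})(\mathbf{w}_i-\mathbf{y}_i)$ needs $\mathbf{H}_i(\cdot)$ bounded and Lipschitz and $\|\mathbf{x}_i-\mathbf{y}_i\|$ bounded (trivially satisfied when $\mathbf{H}_i\equiv\mathbf{I}$ or any constant matrix, as in all of the paper's worked examples, but an implicit extra regularity requirement in general).
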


We prove now the statements of Proposition \ref{Prop_x_y} in the
following order (c)-(a)-(b)-(d).

\noindent (c): Given $\mathbf{y}\in\mathcal{K}$, by definition,
each $\widehat{\mathbf{x}}_{i}(\mathbf{y})$ is the unique solution
of the problem (\ref{eq:decoupled_problem_i}) and thus satisfies
the minimum principle: for all $\mathbf{z}_{i}\in\mathcal{K}_{i}$,\vspace{-0.2cm}
\begin{equation}
\begin{array}[t]{l}
\left(\mathbf{z}_{i}-\widehat{\mathbf{x}}_{i}(\mathbf{y})\right)^{T}\\
\left(\nabla_{\mathbf{x}_{i}}f_{\mathcal{C}_{i}}\!\left(\widehat{\mathbf{x}}_{i}(\mathbf{y}),\,\mathbf{y}_{-i}\right)+\boldsymbol{{\pi}}_{\mathcal{C}_{i}}(\mathbf{y})+\tau_{i}\,\mathbf{H}_{i}(\mathbf{y})\left(\widehat{\mathbf{x}}_{i}(\mathbf{y})-\mathbf{y}_{i}\right)\right)\geq0.
\end{array}\label{eq:VI_i}
\end{equation}
Summing and subtracting $\nabla_{\mathbf{x}_{i}}f_{\mathcal{C}_{i}}\left(\mathbf{y}_{i},\,\mathbf{y}_{-i}\right)$
in (\ref{eq:VI_i}), choosing $\mathbf{z}_{i}=\mathbf{y}_{i}$, and
using $\boldsymbol{{\pi}}_{\mathcal{C}_{i}}(\mathbf{y})\triangleq\nabla_{\mathbf{x}_{i}}f_{\mathcal{C}_{-i}}(\mathbf{y})$,
we get 
\begin{equation}
\begin{array}[t]{l}
\left(\mathbf{y}_{i}-\widehat{\mathbf{x}}_{i}(\mathbf{y})\right)^{T}\left(\nabla_{\mathbf{x}_{i}}f_{\mathcal{C}_{i}}\!\left(\widehat{\mathbf{x}}_{i}(\mathbf{y}),\,\mathbf{y}_{-i}\right)-\nabla_{\mathbf{x}_{i}}f_{\mathcal{C}_{i}}\!\left(\mathbf{y}_{i},\,\mathbf{y}_{-i}\right)\right)\smallskip\\
\quad+\left(\mathbf{y}_{i}-\widehat{\mathbf{x}}_{i}(\mathbf{y})\right)^{T}\nabla_{\mathbf{x}_{i}}U(\mathbf{y})\smallskip\\
\quad-\tau_{i}\,(\widehat{\mathbf{x}}_{i}(\mathbf{y})-\mathbf{y}_{i})^{T}\,\mathbf{H}_{i}(\mathbf{y})\,(\widehat{\mathbf{x}}_{i}(\mathbf{y})-\mathbf{y}_{i})\geq0,
\end{array}\label{eq:VI_i_row2}
\end{equation}
for all $i\in\mathcal{I}$. \textcolor{black}{Recalling the definition
of $c_{\boldsymbol{{\tau}}}$ {[}cf. (\ref{eq:c_tau}){]}}\textcolor{blue}{{}
}and using (\ref{eq:VI_i_row2}), we obtain \vspace{-0.1cm} 
\begin{equation}
\left(\mathbf{y}_{i}-\widehat{\mathbf{x}}_{i}(\mathbf{y})\right)^{T}\nabla_{\mathbf{x}_{i}}U(\mathbf{y})\geq c_{\boldsymbol{{\tau}}}\left\Vert \widehat{\mathbf{x}}_{i}(\mathbf{y})-\mathbf{y}_{i}\right\Vert ^{2},\label{eq:VI_i_row4}
\end{equation}
for all $i\in\mathcal{I}$. Summing (\ref{eq:VI_i_row4}) over $i$
we obtain (\ref{eq:descent_direction}).

\noindent(a): Let us use the notation as in Lemma \ref{Lemma_f_x_y_properties}.
Given $\mathbf{y},\mathbf{z}\in\mathcal{K}$, by the minimum principle,
we have
\begin{eqnarray}
\left(\mathbf{v}-\widehat{\mathbf{x}}(\mathbf{y})\right)^{T}\nabla_{\mathbf{x}}\tilde{{f}}\left(\widehat{\mathbf{x}}(\mathbf{y});\mathbf{y}\right) & \geq & 0\qquad\forall\mathbf{v}\in\mathcal{K}\nonumber \\
\left(\mathbf{w}-\widehat{\mathbf{x}}(\mathbf{z})\right)^{T}\nabla_{\mathbf{x}}\tilde{{f}}\left(\widehat{\mathbf{x}}(\mathbf{z});\mathbf{z}\right) & \geq & 0\qquad\forall\mathbf{w}\in\mathcal{K}.\label{eq:mp2}
\end{eqnarray}
 Setting $\mathbf{v}=\widehat{\mathbf{x}}(\mathbf{z})$ and $\mathbf{w}=\widehat{\mathbf{x}}(\mathbf{y})$,
summing the two inequalities above, and adding and subtracting $\nabla_{\mathbf{x}}\tilde{{f}}\left(\widehat{\mathbf{x}}(\mathbf{y});\mathbf{z}\right)$,
we obtain:
\begin{equation}
\begin{array}{l}
\left(\widehat{\mathbf{x}}(\mathbf{z})-\widehat{\mathbf{x}}(\mathbf{y})\right)^{T}\left(\nabla_{\mathbf{x}}\tilde{{f}}\left(\widehat{\mathbf{x}}(\mathbf{z});\mathbf{z}\right)-\nabla_{\mathbf{x}}\tilde{{f}}\left(\widehat{\mathbf{x}}(\mathbf{y});\mathbf{z}\right)\right)\\
\leq\left(\widehat{\mathbf{x}}(\mathbf{y})-\widehat{\mathbf{x}}(\mathbf{z})\right)^{T}\left(\nabla_{\mathbf{x}}\tilde{{f}}\left(\widehat{\mathbf{x}}(\mathbf{y});\mathbf{z}\right)-\nabla_{\mathbf{x}}\tilde{{f}}\left(\widehat{\mathbf{x}}(\mathbf{y});\mathbf{y}\right)\right).
\end{array}\label{eq:minimum_principle_Lip}
\end{equation}
 Using (\ref{eq:strong_cvx_f_tilde}) we can now lower bound the left-hand-side
of (\ref{eq:minimum_principle_Lip}) as 
\begin{equation}
\begin{array}{l}
\left(\widehat{\mathbf{x}}(\mathbf{z})-\widehat{\mathbf{x}}(\mathbf{y})\right)^{T}\left(\nabla_{\mathbf{x}}\tilde{{f}}\left(\widehat{\mathbf{x}}(\mathbf{z});\mathbf{z}\right)-\nabla_{\mathbf{x}}\tilde{{f}}\left(\widehat{\mathbf{x}}(\mathbf{y});\mathbf{z}\right)\right)\\
\,\,\geq c_{{\boldsymbol{\tau}}}\left\Vert \widehat{\mathbf{x}}(\mathbf{z})-\widehat{\mathbf{x}}(\mathbf{y})\right\Vert ^{2},
\end{array}\label{eq:lipschtz_map_2}
\end{equation}
whereas the right-hand side of (\ref{eq:minimum_principle_Lip}) can
be upper bounded as 
\begin{equation}
\begin{array}{l}
\left(\widehat{\mathbf{x}}(\mathbf{y})-\widehat{\mathbf{x}}(\mathbf{z})\right)^{T}\left(\nabla_{\mathbf{x}}\tilde{{f}}\left(\widehat{\mathbf{x}}(\mathbf{y});\mathbf{z}\right)-\nabla_{\mathbf{x}}\tilde{{f}}\left(\widehat{\mathbf{x}}(\mathbf{y});\mathbf{y}\right)\right)\\
\,\,\leq\, L_{\nabla\tilde{f}}\,\left\Vert \widehat{\mathbf{x}}(\mathbf{y})-\widehat{\mathbf{x}}(\mathbf{z})\right\Vert \,\left\Vert \mathbf{y}-\mathbf{z}\right\Vert ,
\end{array}\label{eq:lipschtz_map_1}
\end{equation}
where the inequality follows from the Cauchy-Schwartz inequality and
(\ref{eq:Lip_grad_L_f}). Combining (\ref{eq:minimum_principle_Lip}),
(\ref{eq:lipschtz_map_2}), and (\ref{eq:lipschtz_map_1}), we obtain
the desired Lipschitz property of $\widehat{\mathbf{x}}(\bullet)$. 

\noindent (b): Let $\mathbf{x}^{\star}\in\mathcal{K}$ be a fixed
point of $\widehat{\mathbf{x}}(\mathbf{y})$, that is $\mathbf{x}^{\star}=\widehat{\mathbf{x}}(\mathbf{x}^{\star})$.
By definition, each $\widehat{\mathbf{x}}_{i}(\mathbf{y})$ satisfies
(\ref{eq:VI_i}), for any given $\mathbf{y}\in\mathcal{K}$. Setting
$\mathbf{y}=\mathbf{x}^{\star}$ and using $\mathbf{x}^{\star}=\widehat{\mathbf{x}}(\mathbf{x}^{\star})$,
(\ref{eq:VI_i}) reduces to 
\begin{equation}
\left(\mathbf{z}_{i}-\mathbf{x}_{i}^{\star}\right)^{T}\nabla_{\mathbf{x}_{i}}U(\mathbf{x}^{\star})\geq0,\label{eq:fixed_point_min_principle}
\end{equation}
for all $\mathbf{z}_{i}\in\mathcal{K}_{i}$ and $i\in\mathcal{I}$.
Taking into account the Cartesian structure of $\mathcal{K}$ and
summing (\ref{eq:fixed_point_min_principle}) over $i\in\mathcal{I}$
we obtain $\begin{array}[t]{l}
\left(\mathbf{z}-\mathbf{x}^{\star}\right)^{T}\nabla_{\mathbf{x}}U(\mathbf{x}^{\star})\geq0,\end{array}$ for all $\mathbf{z}\in\mathcal{K},$ with $\mathbf{z}\triangleq(\mathbf{z}_{i})_{i=1}^{I}$;
therefore $\mathbf{x}^{\star}$ is a stationary solution of (\ref{eq:social problem}).

The converse holds because i) $\widehat{\mathbf{x}}(\mathbf{x}^{\star})$
is the unique optimal solution of (\ref{eq:decoupled_problem_i})
with $\mathbf{y}=\mathbf{x}^{\star}$, and ii) $\mathbf{x}^{\star}$
is also an optimal solution of (\ref{eq:decoupled_problem_i}), since
it satisfies the minimum principle.

\noindent (d): It follows readily from (\ref{eq:VI_i_row4}).\hfill $\square$\vspace{-0.2cm}

\subsection{Proof of Theorems \ref{Theorem_convergence_Jacobi} and \ref{Theorem_convergence_inexact_Jacobi}\label{sec:Proof-of-Theorem_Jacobi}}

We prove Theorem \ref{Theorem_convergence_inexact_Jacobi}; Theorem
\ref{Theorem_convergence_Jacobi}(b) is a special case; the proof
of simpler Theorem \ref{Theorem_convergence_Jacobi}(a) is omitted
and can be obtained following similar steps.  The line of the proof
is based on standard descent arguments, but suitably combined with
the properties of $\widehat{\mathbf{x}}(\mathbf{y})$ (cf. Prop. \ref{Prop_x_y}),
and the presence of errors $\{\epsilon_{i}^{n}\}$. We will also use
the following lemma, which is the deterministic version of the Robbins-Siegmund
result for random sequences \cite[Lemma 11]{Polyak_book} (but without
requiring the nonnegativity of $X^{n}$ and $Z^{n}$ as instead in
\cite[Lemma 11]{Polyak_book}). \begin{lemma}\label{lemma_Robbinson_Siegmunt}
Let $\{X^{n}\}$, $\{Y^{n}\}$, and $\{Z^{n}\}$ be three sequences
of numbers such that $Y^{n}\geq0$ for all $n$. Suppose that 
\[
X^{n+1}\leq X^{n}-Y^{n}+Z^{n},\quad\forall n=0,1,\ldots
\]
and $\sum_{n}Z^{n}<\infty$. Then either $X^{n}\rightarrow-\infty$
or else $\{X^{n}\}$ converges to a finite value and $\sum_{n}Y^{n}<\infty$.
\hfill $\Box$\end{lemma}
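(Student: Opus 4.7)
The plan is to eliminate the summable perturbation $Z^n$ by absorbing it into an auxiliary shifted sequence that satisfies a clean monotone recursion. Concretely, I would define
\[
A^n \triangleq X^n + \sum_{k=n}^\infty Z^k;
\]
because $\sum_n Z^n$ is assumed convergent, the tail $\sum_{k=n}^\infty Z^k$ is well-defined and tends to $0$ as $n \to \infty$. This shift is the only real trick in the argument.

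Substituting the hypothesis $X^{n+1}\leq X^n - Y^n + Z^n$ into the definition, I would compute
\[
A^{n+1} = X^{n+1} + \sum_{k=n+1}^\infty Z^k \leq X^n - Y^n + Z^n + \sum_{k=n+1}^\infty Z^k = A^n - Y^n.
\]
Since $Y^n \geq 0$, this shows $\{A^n\}$ is non-increasing. A non-increasing real sequence either diverges to $-\infty$ or converges to a finite limit $A^\star \in \mathbb{R}$. Using the identity $X^n = A^n - \sum_{k=n}^\infty Z^k$ together with the fact that the tail tends to $0$: in the first case $X^n \to -\infty$, while in the second case $X^n \to A^\star$, so $\{X^n\}$ converges to a finite value. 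This gives the dichotomy stated in the lemma.

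To finish, under the convergent alternative I would obtain $\sum_n Y^n < \infty$ by telescoping the bound $0 \leq Y^n \leq A^n - A^{n+1}$: summing from $n=0$ to $N$ gives $\sum_{n=0}^N Y^n \leq A^0 - A^{N+1}$, and letting $N \to \infty$ yields $\sum_{n=0}^\infty Y^n \leq A^0 - A^\star < +\infty$. There is no substantive obstacle in the proof; the only point worth noting is the well-definedness of the tail $\sum_{k=n}^\infty Z^k$, which is immediate from the convergence hypothesis on $\sum_n Z^n$ and is what enables the monotonization trick.
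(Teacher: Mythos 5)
Your proof is correct. Note, however, that the paper itself does not prove this lemma: it is stated as the deterministic analogue of the Robbins--Siegmund theorem and justified only by a citation to Polyak's book (with the remark that, unlike there, no nonnegativity of $X^n$ and $Z^n$ is required). Your tail-shift argument --- setting $A^n = X^n + \sum_{k\geq n} Z^k$ so that $A^{n+1}\leq A^n - Y^n$, invoking the dichotomy for monotone sequences, and telescoping to get $\sum_n Y^n<\infty$ in the convergent case --- is the standard elementary proof of such supermartingale-type inequalities, and it correctly covers the sign-free generalization the paper claims. The one point worth making explicit is your reading of the hypothesis ``$\sum_n Z^n<\infty$'' as convergence of the series (not merely boundedness above of the partial sums); since $Z^n$ may change sign this reading is genuinely needed --- with $Y^n\equiv 0$ and $Z^n=(-1)^n$ the partial sums are bounded yet $X^n$ neither converges nor tends to $-\infty$ --- and it is also the interpretation consistent with the paper's application, where $Z^n=T_n\geq 0$. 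So your proposal supplies a complete, self-contained argument for a step the paper delegates to a reference.
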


We are now ready to prove Theorem \ref{Theorem_convergence_inexact_Jacobi}.
For any given $n\geq0$, the Descent Lemma \cite{Bertsekas_NLPbook99}
yields 
\begin{equation}
\begin{array}{lll}
U\left(\mathbf{x}^{n+1}\right) & \leq & U\left(\mathbf{x}^{n}\right)+\gamma^{n}\,\nabla_{\mathbf{x}}U\left(\mathbf{x}^{n}\right)^{T}\left(\mathbf{z}^{n}-\mathbf{x}^{n}\right)\smallskip\\
 &  & +\dfrac{\left(\gamma^{n}\right)^{2}{L_{\nabla U}}}{2}\,\left\Vert \mathbf{z}^{n}-\mathbf{x}^{n}\right\Vert ^{2},
\end{array}\label{eq:descent_Lemma}
\end{equation}
with $\mathbf{z}^{n}\triangleq(\mathbf{z}_{i}^{n})_{i=1}^{I}$, and
$\mathbf{z}_{i}^{n}$ defined in Step 2 (Algorithm \ref{alg:PJA-inex}).
Using $\left\Vert \mathbf{z}^{n}-\mathbf{x}^{n}\right\Vert ^{2}\leq2\left\Vert \widehat{\mathbf{x}}(\mathbf{x}^{n})-\mathbf{x}^{n}\right\Vert ^{2}+2\sum_{i}\left\Vert \mathbf{z}_{i}^{n}-\widehat{\mathbf{x}}_{i}(\mathbf{x}^{n})\right\Vert ^{2}\leq2\left\Vert \widehat{\mathbf{x}}(\mathbf{x}^{n})-\mathbf{x}^{n}\right\Vert ^{2}+2\sum_{i}(\varepsilon_{i}^{n})^{2}$,
where in the last inequality we used $\left\Vert \mathbf{z}_{i}^{n}-\widehat{\mathbf{x}}_{i}(\mathbf{x}^{n})\right\Vert \leq\varepsilon_{i}^{n}$,
and 
\begin{equation}
\begin{array}{l}
\nabla_{\mathbf{x}}U\left(\mathbf{x}^{n}\right)^{T}\left(\mathbf{z}^{n}-\widehat{\mathbf{x}}(\mathbf{x}^{n})+\widehat{\mathbf{x}}(\mathbf{x}^{n})-\mathbf{x}^{n}\right)\leq\smallskip\\
\qquad\quad-c_{\boldsymbol{{\tau}}}\left\Vert \widehat{\mathbf{x}}(\mathbf{x}^{n})-\mathbf{x}^{n}\right\Vert ^{2}+\sum_{i}\varepsilon_{i}^{n}\left\Vert \nabla_{\mathbf{x}_{i}}U(\mathbf{x}^{n})\right\Vert ,
\end{array}\label{eq:descent_at_x_n}
\end{equation}
which follows from Prop. \ref{Prop_x_y}(c), (\ref{eq:descent_Lemma})
yields: for all $n\geq0$, 
\begin{equation}
\begin{array}{l}
\!\!\!\!\!\!\!\!\begin{array}{l}
U\left(\mathbf{x}^{n+1}\right)\leq U\left(\mathbf{x}^{n}\right)-\gamma^{n}\left(c_{\boldsymbol{{\tau}}}-\gamma^{n}{L_{\nabla U}}\right)\left\Vert \widehat{\mathbf{x}}(\mathbf{x}^{n})-\mathbf{x}^{n}\right\Vert ^{2}+T_{n},\end{array}\end{array}\label{eq:descent_Lemma_2}
\end{equation}
where $T_{n}\triangleq\gamma^{n}\,\sum_{i}\varepsilon_{i}^{n}\left\Vert \nabla_{\mathbf{x}_{i}}U(\mathbf{x}^{n})\right\Vert +\left(\gamma^{n}\right)^{2}{L_{\nabla U}}\,\sum_{i}(\varepsilon_{i}^{n})^{2}$.
Note that, under the assumptions of the theorem, $\sum_{n=0}^{\infty}T_{n}<\infty$.
Since $\gamma^{n}\rightarrow0$, we have for some positive constant
$\beta_{1}$ and sufficiently large $n$, say $n\geq\bar{{n}}$,
\begin{equation}
U\left(\mathbf{x}^{n+1}\right)\leq U\left(\mathbf{x}^{n}\right)-\gamma^{n}\beta_{1}\left\Vert \widehat{\mathbf{x}}(\mathbf{x}^{n})-\mathbf{x}^{n}\right\Vert ^{2}+T_{n}.\label{eq:descent_Lemma_3_}
\end{equation}
Invoking Lemma \ref{lemma_Robbinson_Siegmunt} with the identifications
$X^{n}=U\left(\mathbf{x}^{n+1}\right)$, $Y^{n}=\gamma^{n}\beta_{1}\left\Vert \widehat{\mathbf{x}}(\mathbf{x}^{n})-\mathbf{x}^{n}\right\Vert ^{2}$
and $Z^{n}=T_{n}$ while using $\sum_{n}T_{n}<\infty$, we deduce
from (\ref{eq:descent_Lemma_3_}) that either $\{U\left(\mathbf{x}^{n}\right)\}\rightarrow-\infty$
or else $\{U\left(\mathbf{x}^{n}\right)\}$ converges to a finite
value and\vspace{-0.1cm} 
\begin{equation}
\lim_{n\rightarrow\infty}\sum_{t=\bar{{n}}}^{n}\gamma^{t}\left\Vert \widehat{\mathbf{x}}(\mathbf{x}^{t})-\mathbf{x}^{t}\right\Vert ^{2}<+\infty.\vspace{-0.1cm}\label{eq:finite_sum_series}
\end{equation}
Since $U(\mathbf{x})$ is coercive, $U(\mathbf{x})\geq\min_{\mathbf{y}\in\mathcal{K}}U(\mathbf{y})>-\infty$,
implying that $\{U\left(\mathbf{x}^{n}\right)\}_{n}$ is convergent;
it follows from (\ref{eq:finite_sum_series}) and $\sum_{n=0}^{\infty}\gamma^{n}=\infty$
that $\liminf_{n\rightarrow\infty}\left\Vert \widehat{\mathbf{x}}(\mathbf{x}^{n})-\mathbf{x}^{n}\right\Vert =0.$

Using Prop. \ref{Prop_x_y}, we show next that $\lim_{n\rightarrow\infty}\left\Vert \widehat{\mathbf{x}}(\mathbf{x}^{n})-\mathbf{x}^{n}\right\Vert =0$;
for notational simplicity we will write $\triangle\widehat{\mathbf{x}}(\mathbf{x}^{n})\triangleq\widehat{\mathbf{x}}(\mathbf{x}^{n})-\mathbf{x}^{n}$.
Suppose, by contradiction, that $\limsup_{n\rightarrow\infty}\left\Vert \triangle\widehat{\mathbf{x}}(\mathbf{x}^{n})\right\Vert >0$.
Then, there exists a $\delta>0$ such that $\left\Vert \triangle\widehat{\mathbf{x}}(\mathbf{x}^{n})\right\Vert >2\delta$
for infinitely many $n$ and also $\left\Vert \triangle\widehat{\mathbf{x}}(\mathbf{x}^{n})\right\Vert <\delta$
for infinitely many $n$. Therefore, one can always find an infinite
set of indexes, say $\mathcal{N}$, having the following properties:
for any $n\in\mathcal{N}$, there exists an integer $i_{n}>n$ such
that 
\begin{eqnarray}
\left\Vert \triangle\widehat{\mathbf{x}}(\mathbf{x}^{n})\right\Vert <\delta, &  & \left\Vert \triangle\widehat{\mathbf{x}}(\mathbf{x}^{i_{n}})\right\Vert >2\delta\medskip\label{eq:outside_interval}\\
\delta\leq\left\Vert \triangle\widehat{\mathbf{x}}(\mathbf{x}^{j})\right\Vert \leq2\delta &  & n<j<i_{n}.\label{eq:inside_interval}
\end{eqnarray}
Given the above bounds, the following holds: for all $n\in\mathcal{N}$,
\begin{eqnarray}
\delta & \overset{(a)}{<} & \left\Vert \triangle\widehat{\mathbf{x}}(\mathbf{x}^{i_{n}})\right\Vert -\left\Vert \triangle\widehat{\mathbf{x}}(\mathbf{x}^{n})\right\Vert \medskip\nonumber \\
 & \leq & \left\Vert \widehat{\mathbf{x}}(\mathbf{x}^{i_{n}})-\widehat{\mathbf{x}}(\mathbf{x}^{n})\right\Vert +\left\Vert \mathbf{x}^{i_{n}}-\mathbf{x}^{n}\right\Vert \\
 & \overset{(b)}{\leq} & (1+\hat{{L}})\left\Vert \mathbf{x}^{i_{n}}-\mathbf{x}^{n}\right\Vert \\
 & \overset{(c)}{\leq} & (1+\hat{{L}})\sum_{t=n}^{i_{n}-1}\gamma^{t}\left(\left\Vert \triangle\widehat{\mathbf{x}}(\mathbf{x}^{t})\right\Vert +\left\Vert \mathbf{z}^{t}-\widehat{\mathbf{x}}(\mathbf{x}^{t})\right\Vert \right)\vspace{-0.3cm}\nonumber \\
 & \overset{(d)}{\leq} & (1+\hat{{L}})\,(2\delta+\varepsilon^{\max})\sum_{t=n}^{i_{n}-1}\gamma^{t},\label{eq:lower_bound_sum}
\end{eqnarray}
where (a) follows from (\ref{eq:outside_interval}) and (\ref{eq:inside_interval});
(b) is due to Prop. \ref{Prop_x_y}(a); (c) comes from the triangle
inequality and the updating rule of the algorithm; and in (d) we used
(\ref{eq:outside_interval}), (\ref{eq:inside_interval}), and $\left\Vert \mathbf{z}^{t}-\widehat{\mathbf{x}}(\mathbf{x}^{t})\right\Vert \leq\sum_{i}\varepsilon_{i}^{t}$,
where $\varepsilon^{\max}\triangleq\max_{n}\sum_{i}\varepsilon_{i}^{n}<\infty$.
It follows from (\ref{eq:lower_bound_sum}) that 
\begin{equation}
\liminf_{n\rightarrow\infty}\sum_{t=n}^{i_{n}-1}\gamma^{t}\geq\dfrac{{\delta}}{(1+\hat{{L}})(2\delta+\varepsilon^{\max})}>0.\label{eq:lim_inf_bound}
\end{equation}

We show next that (\ref{eq:lim_inf_bound}) is in contradiction with
the convergence of $\{U(\mathbf{x}^{n})\}_{n}$. To do that, we preliminary
prove that, for sufficiently large $n\in\mathcal{N}$, it must be
$\left\Vert \triangle\widehat{\mathbf{x}}(\mathbf{x}^{n})\right\Vert \geq\delta/2$.
Proceeding as in (\ref{eq:lower_bound_sum}), we have: for any given
$n\in\mathcal{N}$,
\[
\begin{array}{l}
\left\Vert \triangle\widehat{\mathbf{x}}(\mathbf{x}^{n+1})\right\Vert -\left\Vert \triangle\widehat{\mathbf{x}}(\mathbf{x}^{n})\right\Vert \leq(1+\hat{{L}})\left\Vert \mathbf{x}^{n+1}-\mathbf{x}^{n}\right\Vert \smallskip\\
\qquad\qquad\qquad\qquad\qquad\leq(1+\hat{{L}})\gamma^{n}\left(\left\Vert \triangle\widehat{\mathbf{x}}(\mathbf{x}^{n})\right\Vert +\varepsilon^{\max}\right).
\end{array}
\]
 It turns out that for sufficiently large $n\in\mathcal{N}$ so that
$(1+\hat{{L}})\gamma^{n}<\delta/(\delta+2\varepsilon^{\max})$, it
must be 
\begin{equation}
\left\Vert \triangle\widehat{\mathbf{x}}(\mathbf{x}^{n})\right\Vert \geq\delta/2;\label{eq:lower_bound_delta_x_n}
\end{equation}
otherwise the condition $\left\Vert \triangle\widehat{\mathbf{x}}(\mathbf{x}^{n+1})\right\Vert \geq\delta$
would be violated {[}cf. (\ref{eq:inside_interval}){]}. Hereafter
we assume w.l.o.g. that (\ref{eq:lower_bound_delta_x_n}) holds for
all $n\in\mathcal{N}$ (in fact, one can alway restrict $\{\mathbf{x}^{n}\}_{n\in\mathcal{N}}$
to a proper subsequence).

We can show now that (\ref{eq:lim_inf_bound}) is in contradiction
with the convergence of $\{U(\mathbf{x}^{n})\}_{n}$. Using (\ref{eq:descent_Lemma_3_})
(possibly over a subsequence), we have: for sufficiently large $n\in\mathcal{N}$,
\begin{eqnarray}
U\left(\mathbf{x}^{i_{n}}\right) & \leq & U\left(\mathbf{x}^{n}\right)-\beta_{2}\sum_{t=n}^{i_{n}-1}\gamma^{t}\left\Vert \triangle\widehat{\mathbf{x}}(\mathbf{x}^{t})\right\Vert ^{2}+\sum_{t=n}^{i_{n}-1}T_{t}\nonumber \\
 & \overset{(a)}{<} & U\left(\mathbf{x}^{n}\right)-\beta_{2}(\delta^{2}/4)\sum_{t=n}^{i_{n}-1}\gamma^{t}+\sum_{t=n}^{i_{n}-1}T_{t}\label{eq:liminf_zero}
\end{eqnarray}
where in (a) we used (\ref{eq:inside_interval}) and (\ref{eq:lower_bound_delta_x_n}),
and $\beta_{2}$ is some positive constant. Since $\{U(\mathbf{x}^{n})\}_{n}$
converges and $\sum_{n=0}^{\infty}T_{n}<\infty$, (\ref{eq:liminf_zero})
implies $\lim_{\mathcal{N}\ni n\rightarrow\infty}\,\sum_{t=n}^{i_{n}-1}\gamma^{t}=0,$
which contradicts (\ref{eq:lim_inf_bound}).

Finally, since the sequence $\{\mathbf{x}^{n}\}$ is bounded {[}due
to the coercivity of $U(\mathbf{x})$ and the convergence of $\{U(\mathbf{x}^{n})\}_{n}${]},
it has at least one limit point $\bar{{\mathbf{x}}}$ that must belong
to $\mathcal{K}$. By the continuity of $\widehat{\mathbf{x}}(\bullet)$
{[}Prop. \ref{Prop_x_y}(a){]} and $\lim_{n\rightarrow\infty}\left\Vert \widehat{\mathbf{x}}(\mathbf{x}^{n})-\mathbf{x}^{n}\right\Vert =0$,
it must be $\widehat{\mathbf{x}}(\bar{{\mathbf{x}}})=\bar{{\mathbf{x}}}$.
By Prop. \ref{Prop_x_y}(b) $\bar{{\mathbf{x}}}$ is also a stationary
solution of the social problem (\ref{eq:social problem}). 

Note that, in the setting of Theorem \ref{Theorem_convergence_Jacobi},
$\varepsilon_{i}^{n}=0$ for all $i$ and $n$; therefore $T_{n}=0$
for all $n$. It follows from (\ref{eq:descent_Lemma_3_}) that $U(\mathbf{x}^{n})$
is a decreasing sequence, which entails that no limit point of $\{\mathbf{x}^{n}\}$
can be a local maximum. \hfill$\square$\vspace{-0.1cm}

\subsection{Proof of Theorem \ref{Theorem_convergence_GS} \label{sec:Proof-of-Theorem_GS}}

The main idea of the proof is to interpret Algorithm \ref{alg:PGSA}
as an instance of the inexact Jacobi scheme described in Algorithm
\ref{alg:PJA-inex}, and show that Theorem \ref{Theorem_convergence_inexact_Jacobi}
is satisfied. It is not difficult to show that this reduces to prove
that, for all $i=1,\ldots,I$, the sequence $\mathbf{z}_{i}^{t}$
in Step 2a) of Algorithm \ref{alg:PGSA} satisfies 
\begin{equation}
\|\mathbf{z}_{i}^{t}-\widehat{\mathbf{x}}_{i}(\mathbf{x}^{t})\|\leq\tilde{\varepsilon}_{i}^{\, t},\label{eq:GS_Jacobi_error_bound}
\end{equation}
for some $\{\tilde{\varepsilon}_{i}^{\, t}\}$ such that $\sum_{t}\tilde{\varepsilon}_{i}^{\, t}\,\gamma^{t}<\infty$.
The following holds for the LHS of (\ref{eq:GS_Jacobi_error_bound}):
\[
\begin{array}{l}
\!\!\!\!\|\mathbf{z}_{i}^{t}-\widehat{\mathbf{x}}_{i}(\mathbf{x}^{t})\|\!\leq\!\|\widehat{\mathbf{x}}_{i}(\mathbf{x}_{i<}^{t+1},\mathbf{x}_{i\geq}^{t})-\widehat{\mathbf{x}}_{i}(\mathbf{x}^{t})\|\!+\!\|\mathbf{z}_{i}^{t}-\widehat{\mathbf{x}}_{i}(\mathbf{x}_{i<}^{t+1},\mathbf{x}_{i\geq}^{t})\|\smallskip\end{array}
\]
\vspace{-0.5cm} 
\[
\begin{array}{l}
\quad\quad\overset{(a)}{\leq}\|\widehat{\mathbf{x}}_{i}(\mathbf{x}_{i<}^{t+1},\mathbf{x}_{i\geq}^{t})-\widehat{\mathbf{x}}_{i}(\mathbf{x}^{t})\|+{\varepsilon}_{i}^{\, t}\medskip\\
\quad\quad\overset{(b)}{\le}\hat{{L}}\,\|\mathbf{x}_{i<}^{t+1}-\mathbf{x}_{<i}^{t}\|+{\varepsilon}_{i}^{\, t}\medskip\\
\quad\quad\overset{(c)}{\leq}\hat{{L}}\gamma^{t}\left(\left\Vert \left(\widehat{\mathbf{x}}_{j}(\mathbf{x}_{j<}^{t+1},\mathbf{x}_{j\geq}^{t})-\mathbf{x}_{j}^{t}\right)_{j=1}^{i-1}\right\Vert +\sum_{j<i}{\varepsilon}_{j}^{\, t}\right)+{\varepsilon}_{i}^{\, t}\\
\quad\quad\overset{(d)}{\leq}\hat{{L}}\gamma^{t}\beta_{i}+\hat{{L}}\gamma^{t}\sum_{j<i}{\varepsilon}_{j}^{\, t}+{\varepsilon}_{i}^{\, t},
\end{array}
\]
where (a) follows from the error bound in Step 2a) of Algorithm \ref{alg:PGSA};
in (b) we used Prop. \ref{Prop_x_y}a); (c) follows from Step 2b);
and in (d) we used Prop. \ref{Prop_x_y}d), with $\beta_{i}<\infty$
being a positive constant. It turns out that (\ref{eq:GS_Jacobi_error_bound})
is satisfied choosing $\tilde{\varepsilon}_{i}^{\, t}\triangleq\hat{{L}}\gamma^{t}\beta_{i}+\hat{{L}}\gamma^{t}\sum_{j<i}{\varepsilon}_{j}^{\, t}+{\varepsilon}_{i}^{\, t}$.
\hfill$\square$

\bibliographystyle{IEEEtran}
\bibliography{scutari_refs}

\end{document}